\pdfoutput=1
\documentclass[acmsmall,screen]{acmart}\settopmatter{printfolios=true,printccs=true,printacmref=true}

\acmJournal{PACMPL}
\acmVolume{1}
\acmNumber{POPL} 
\acmArticle{1}
\acmYear{2021}
\acmMonth{1}
\acmDOI{} 
\startPage{1}

\setcopyright{acmcopyright}

\bibliographystyle{ACM-Reference-Format}
\citestyle{acmauthoryear}   


\usepackage{booktabs}   
\usepackage{subcaption} 
\usepackage{tikz}
\usetikzlibrary{shapes.misc,arrows.meta}


\usepackage[utf8]{inputenc} 
\usepackage{amsmath, amsthm}
\usepackage{thmtools}
\usepackage[english]{babel}
\usepackage{pdfpages}
\usepackage{paralist}
\usepackage{graphicx}
\usepackage{subcaption}
\usepackage{caption}
\usepackage{longtable}
\usepackage{booktabs}
\usepackage{tabu}
\usepackage[referable]{threeparttablex}
\usepackage{varwidth}
\usepackage{multirow}
\usepackage{wrapfig}
\usepackage{enumerate}
\usepackage[inline]{enumitem}
\usepackage{microtype}
\usepackage{xifthen}

\usepackage{pifont}

\usepackage[ruled,vlined,resetcount,linesnumbered]{algorithm2e}
\usepackage{parskip}

\usepackage{hyperref}
\usepackage{thm-restate}
\usepackage[capitalise]{cleveref}

\crefname{figure}{Figure}{Figure}
\crefformat{footnote}{#2\footnotemark[#1]#3}
\usepackage{tikz}
\usepackage{comment}

\usepackage{todonotes}
\usepackage{tikz}
\usetikzlibrary{arrows,automata,shapes,decorations,decorations.markings,calc, matrix,decorations.pathmorphing, patterns,backgrounds}

\usepackage{bm}
\usepackage{pgfplots}
\usepackage{enumerate,paralist}
\usepackage{pbox}

\usepackage{appendix}
\usepackage{calc}
\usepackage{fp}
\usepackage{multirow}
\usepackage{pbox}

\usepackage[mathscr]{euscript}
\usepackage{mathtools}

\usepackage{footnote}
\usepackage{bbm}
\usepackage{multicol}
\include{macros}

\makeatletter
\g@addto@macro\bfseries{\boldmath}
\makeatother

\renewcommand{\smallskip}{}
\begin{document}

\title[Optimal Prediction of Synchronization-Preserving Races]{Optimal Prediction of Synchronization-Preserving Races}


\author{Umang Mathur}
\affiliation{
\institution{University of Illinois, Urbana Champaign}            
\country{USA}                    
}
\email{umathur3@illinois.edu}          

\author{Andreas Pavlogiannis}
\affiliation{
\institution{Aarhus University}            
\country{Denmark}                    
}
\email{pavlogiannis@cs.au.dk}          

\author{Mahesh Viswanathan}
\affiliation{
\institution{University of Illinois, Urbana Champaign}            
\country{USA}                    
}
\email{vmahesh@illinois.edu}          


\begin{abstract}
Concurrent programs are notoriously hard to write correctly, as scheduling nondeterminism introduces subtle errors that are both hard to detect and to reproduce.
The most common concurrency errors are \emph{(data) races}, which occur when memory-conflicting actions are executed concurrently.
Consequently, considerable effort has been made towards developing efficient techniques for race detection.
The most common approach is \emph{dynamic race prediction}:~
given an observed, race-free trace $\tr$ of a concurrent program, the task is to decide whether events of $\tr$ can be correctly reordered to a trace $\tr^*$ that witnesses a race hidden in $\tr$.

In this work we introduce the notion of \emph{sync(hronization)-preserving races}.
A sync-preserving race occurs in $\tr$ when there is a witness $\tr^*$ 
in which synchronization operations (e.g., acquisition and release of locks) appear in the same order as in $\tr$.
This is a broad definition that \emph{strictly subsumes} the famous notion of happens-before races.
Our main results are as follows.
First, we develop a sound and complete algorithm for predicting sync-preserving races.
For moderate values of parameters like the number of threads, the algorithm runs in $\Otilde(\NumEvents)$ time and space, where $\NumEvents$ is the length of the trace $\tr$.
Second, we show that the problem has a $\Omega(\NumEvents/\log^2 \NumEvents)$ space lower bound, and thus our algorithm is essentially \emph{time and space optimal}.
Third, we show that predicting races with \emph{even just a single} reversal of two sync operations is $\NP$-complete and even $\W{1}$-hard when parameterized by the number of threads.
Thus, sync-preservation characterizes \emph{exactly} the tractability boundary of race prediction, and our algorithm is nearly \emph{optimal} for the tractable side.
Our experiments show that our algorithm is fast in practice, while sync-preservation characterizes races often missed by state-of-the-art methods.
\end{abstract}

\begin{CCSXML}
<ccs2012>
<concept>
<concept_id>10011007.10011074.10011099</concept_id>
<concept_desc>Software and its engineering~Software verification and validation</concept_desc>
<concept_significance>500</concept_significance>
</concept>
<concept>
<concept_id>10003752.10010070</concept_id>
<concept_desc>Theory of computation~Theory and algorithms for application domains</concept_desc>
<concept_significance>300</concept_significance>
</concept>
<concept>
<concept_id>10003752.10010124.10010138.10010143</concept_id>
<concept_desc>Theory of computation~Program analysis</concept_desc>
<concept_significance>300</concept_significance>
</concept>
</ccs2012>
\end{CCSXML}

\ccsdesc[500]{Software and its engineering~Software verification and validation}
\ccsdesc[300]{Theory of computation~Theory and algorithms for application domains}
\ccsdesc[300]{Theory of computation~Program analysis}


\keywords{concurrency, dynamic analysis, race detection, complexity}  

\maketitle


\section{Introduction}
\seclabel{intro}

The verification of concurrent programs is one of the main challenges in formal methods.
Concurrency adds a dimension of non-determinism to program behavior which stems from inter-process communication.
Accounting for such non-determinism during program development is a challenging mental task, making concurrent programming significantly error-prone.
At the same time, bugs due to concurrency are very hard to reproduce manually, 
and automated techniques for doing so are crucial in enhancing the productivity of software developers.

Data races are the most common form of concurrency errors.
A data race (sometimes just called a race) occurs when a thread of a multi-threaded 
program accesses a shared memory location 
while another thread is modifying it without proper synchronization. 
The presence of a data race is often symptomatic of a serious bug in the program~\cite{lpsz08};
races have caused data corruption and compilation errors~\cite{boehmbenign2011,racemob,Narayanasamy2007}, 
and significant system errors~\cite{SoftwareErrors2009,evil2012} in the past.
Therefore, considerable research has focused on detecting and preventing races in multi-threaded programs.

One of the most popular approaches to race prediction is via dynamic analysis~\cite{fasttrack,bond2010pacer,Pozniansky:2003:EOD:966049.781529}.
Unlike static analysis, dynamic race prediction is performed at runtime. 
Such techniques determine if an observed execution provides evidence for the existence of a \emph{possibly alternate} program execution that can concurrently perform conflicting data accesses\footnote{Conflicting data accesses  come from different threads, access a common memory location, and at least one is a write.}.
The underlying principle is that a race is present but ``hidden'' in a large number of different program executions; 
hence techniques that uncover such hidden races can accelerate the
process of debugging concurrent programs significantly.
The popularity of dynamic race prediction techniques further stems
(i)~from their scalability to large production software, and 
(ii)~from their ability to produce only sound error reports.

The most popular dynamic race prediction techniques are 
based on Lamport's happens-before partial order~\cite{lamport1978time}. 
These techniques scan the input trace, determine happens-before orderings on-the-fly, 
and report a race on a pair of conflicting data accesses if they are unordered by happens-before. 
This approach is sound, in that the presence of unordered conflicting data accesses ensures the existence of an execution with a race. 
While happens-before based analysis fails to predict races in various cases~\cite{cp2012}, 
its wide deployment is based on the fact that the algorithm is fast, single pass, and runs in linear time. 
The principle that forms the basis of its efficiency is the following.
When reasoning about alternate executions, happens-before analysis does not consider any execution in which the order of synchronization primitives is reversed from that in the observed execution. 
We call such alternate executions \emph{sync(hronization)-preserving executions}.
Other, more powerful race prediction techniques~\cite{cp2012,rv2014,Huang2016,Roemer18,Bond2019,PavlogiannisPOPL20} sacrifice this principle and consider alternate executions that are not sync-preserving.
Naturally, this typically results in performance degradation, as the problem is in general $\NP$-hard~\cite{Mathur20},
and considerable efforts are made towards improving the scalability of such techniques~\cite{Roemer20,roemeronline2019}.


\begin{figure}[t]
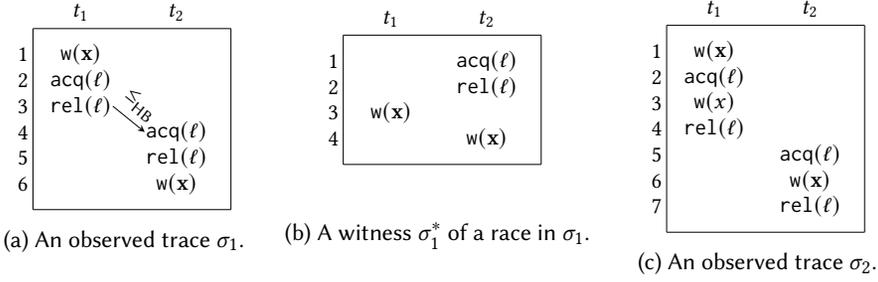

\centering
\begin{subfigure}{.3\textwidth}
	\centering
	\execution{2}{
		\figev{1}{\mathbf{\wt(x)}}
		\figev{1}{\acq(\lk)}
		\figev{1}{\rel(\lk)}
		\figev{2}{\acq(\lk)}
		\figev{2}{\rel(\lk)}
		\figev{2}{\mathbf{\wt(x)}}
		\orderedgewithlabel{1}{3}{0.5}{2}{4}{-0.5}{\small $\hb{}$}{above}
	}
	\caption{An observed trace $\tr_{\two}$.}
	\figlabel{hb-misses-race}
\end{subfigure}%
\begin{subfigure}{.3\textwidth}
  \centering
	\execution{2}{
		\figev{2}{\acq(\lk)}
		\figev{2}{\rel(\lk)}
		\figev{1}{\mathbf{\wt(x)}}
		\figev{2}{\mathbf{\wt(x)}}
	}
	\vspace{0.2in}
  \caption{A witness $\tr_{\two}^{*}$ of a race in $\tr_{\two}$.}
  \figlabel{ZR-race-missed-by-HB}
\end{subfigure}
\begin{subfigure}{.3\textwidth}
  \centering
  \vspace{0.1in}
	\execution{2}{
		\figev{1}{\mathbf{\wt(x)}}
		\figev{1}{\acq(\lk)}
		\figev{1}{\wt(x)}
		\figev{1}{\rel(\lk)}
		\figev{2}{\acq(\lk)}
		\figev{2}{\mathbf{\wt(x)}}
		\figev{2}{\rel(\lk)}
	}
	\caption{An observed trace $\tr_{\one}$.}
	\figlabel{predictable-race-intro}
\end{subfigure}
\caption{
(\ref{fig:hb-misses-race}) shows a trace $\tr_\two$ with a sync-preserving race $(e_1, e_6)$ missed by $\hb{}$.
(\ref{fig:ZR-race-missed-by-HB} ) shows a witness that exposes the race.
(\ref{fig:predictable-race-intro}) shows a sync-preserving race $(e_1,e_6)$ that is non-consecutive, due to the intermediate event $e_3$.
}
\figlabel{compare-hb}
\end{figure}
Although happens-before only detects races whose exposure preserves the ordering of synchronization primitives, it can still miss simple races that adhere to this pattern.
For example, consider the trace $\tr_{\two}$ shown in \figref{hb-misses-race}. 
Let us name the events of this trace based on the order in which they appear in the trace; thus, $e_i$ denotes the $i^\text{th}$ event of the trace. 
Here, the partial order happens-before orders the 
first $\wt(x)$ (event $e_1$) and the last $\wt(x)$ (event $e_6$), and therefore, does not detect any race in this execution. However events $e_1$ and $e_6$ are in race. This can be exposed by the alternate execution shown in \figref{ZR-race-missed-by-HB}, which is obtained by dropping the critical section of lock $\lk$ performed by thread $t_1$. 
Notice that the order of synchronization events (namely, $\acq(\lk)$ and $\rel(\lk)$ events) \emph{that appear in the trace} of \figref{ZR-race-missed-by-HB}, are in the same order as in the trace of \figref{hb-misses-race}, and hence this is a sync-preserving execution.
Thus, the notion of sync-preservation captures races beyond standard happens-before races.

Another important limitation of happens-before and virtually all partial-order methods~\cite{wcp2017,shb2018,Roemer18,Roemer20,cp2012} is highlighted in \figref{predictable-race-intro}.
The trace $\tr_\one$ has a race between $e_1$ and $e_6$, both conflicting on variable $x$.
Notice, however, that the intermediate event $e_3$ also accesses $x$, but is not in race with either $e_1$ or $e_6$.
Partial-order methods for race prediction are limited to capturing races \emph{only between successive} conflicting accesses\footnote{When the earlier access is a read instead of a write, this statement is true \emph{per thread}.}.
Hence, distant races that are interjected with intermediate conflicting but non-racy events, are missed by such methods.
On the other hand, sync-preservation is not bound to such limitations: $(e_1, e_6)$ is characterized as a race under this criterion, regardless of the intermediate, non-racy $e_3$, and is exposed by a witness that omits the critical section on lock $\lk$ in the thread $t_1$.

\Paragraph{Our Contributions.}
Motivated by he above observations, we make the following contributions.
\begin{compactenum}
\item 
We introduce the novel notion of \emph{sync(hronization)-preserving data races.} 
This is a \emph{sound} notion of predictable races, and it \emph{strictly subsumes} the standard notion of happens-before races.
Moreover, it characterizes races between events that can be arbitrarily far apart in the input trace, as opposed to happens-before and other partial-order methods that only characterize races between \emph{successive} conflicting accesses.
Our notion is applicable to all concurrency settings, and interestingly, it is also \emph{complete} for systems with synchronization-deterministic concurrency~\cite{Bocchino09,Cui15,Aguado18,Zhao19}.
\item 
We develop an efficient, single-pass, nearly linear time algorithm $\ZeroRevAlgo$
that, given a trace $\tr$, detects whether $\tr$ contains a sync-preserving race.
In fact, our algorithm  soundly reports \emph{all} events $e_2$ which are in a sync-preserving race with an event $e_1$ that appears earlier in $\tr$.
Given $\NumEvents$ events in $\tr$, our algorithm spends $\Otilde(N)$ time, where $\Otilde$ hides factors poly-logarithmic in $\NumEvents$, when other parameters of the input (e.g., number of threads) are $\Otilde(1)$.
\item 
Although our algorithm performs a single pass of the trace, in the worst case, it might use space that is nearly linear in the length of the trace, i.e., $\Otilde(\NumEvents)$ space.
Hence follows a natural question: is there an efficient algorithm for sync-preserving race prediction that uses considerably less space?
We answer this question in negative, by showing that  \emph{any} single-pass algorithm for detecting even a single sync-preserving race must use nearly linear space.
Hence, our algorithm $\ZeroRevAlgo$ has nearly \emph{optimal} performance in both time and space.
\item 
We next study the complexity of race prediction with respect to the number of synchronization reversals that might occur when constructing a witness that exposes the race.
In the case of synchronization via locks, this number corresponds to the number of critical sections whose order is reversed in the witness trace.
We prove that the problem of predicting races which can be witnessed by \emph{a single} reversal (of two critical sections) is $\NP$-complete and even $\W{1}$-hard when parameterized by the number of threads.
Thus, sync-preservation characterizes \emph{exactly} the tractability boundary of race prediction, and our algorithm is nearly \emph{optimal} for the tractable side.
Moreover, our result shows that \emph{any level} of synchronization suffices to make the problem of race prediction as hard as in the general case.
\item 
Finally, we have implemented our race prediction algorithm $\ZeroRevAlgo$ and evaluated its performance on standard benchmarks.
Our results show that sync-preservation characterizes many races that are missed by state-of-the-art methods,
and $\ZeroRevAlgo$ detects them efficiently.
\end{compactenum}


\section{Preliminaries}\seclabel{prelim}

In this section we establish notation useful throughout of the paper.
The exposition follows other related works in the literature.

\begin{myparagraph}{Traces and events}
Our objective is to develop a dynamic analysis technique
which works over execution traces, or simply \emph{traces} of concurrent programs.
We work with the sequential consistency memory model.
In this setting, traces are sequences of events.
We will use $\tr, \tr', \ldots, \tr_1,\tr_2, \ldots$ to denote traces.
Every event of $\tr$ can be represented as a tuple $e = \ev{i, \thread, \op}$,
where $i$ is a unique identifier of $e$ in $\tr$, 
$t$ is the thread that performs $e$
and $\op$ is the operation performed in the event $e$.
We often omit the unique identifier of such a tuple 
and simply write $e = \ev{\thread, \op}$.
We use $\ThreadOf{e}$ and $\OpOf{e}$ to denote the thread
performing $e$ and the operation performed by $e$.
An operation can be one of read from or write to 
a shared memory location or \emph{variable} $x$,
denoted $\rd(x)$ and $\wt(x)$, 
and acquisition or release of a lock $\lk$, denoted $\acq(\lk)$ or $\rel(\lk)$.
Forks and joins can be naturally handled, but we avoid introducing them here for notational convenience.
We denote by $\events{\tr}$ the set of events in a trace $\tr$.
We use $\threads{\tr}$, $\vars{\tr}$ and $\locks{\tr}$
to denote respectively the threads, variables and locks
 that appear in $\tr$.
Likewise, we use $\acquires{\tr}(\lk)$ and $\releases{\tr}(\lk)$
to denote the set of acquire and release events of $\tr$ on lock $\lk \in \locks{\tr}$.

We require that traces obey lock semantics.
In particular, every lock $\lk$ is released by a thread $\thread$
only if there is an earlier matching acquire event by the same thread
$\thread$, and that each such lock is held by at most one
thread at a time.
Formally, let $\proj{\tr}{\lk}$ denote the projection of $\tr$
to the set of events $\acquires{\tr}(\lk) \cup \releases{\tr}(\lk)$.
We require that for every lock $\lk$, the sequence $\proj{\tr}{\lk}$
is a prefix of some sequence that belongs to the language
of the regular expression 
$\big(\sum\limits_{\thread \in \threads{\tr}} \ev{\thread, \acq(\lk)} \cdot \ev{\thread, \rel(\lk)}\big)^*$.

For an acquire event $e$, we use $\match{\tr}(e)$ to denote the 
matching release event of $e$ if one exists (and $\bot$ otherwise).
Similarly, for a release event $e$, $\match{\tr}(e)$ is the matching acquire
of $e$ on the same lock.
For an acquire event $e$, the critical section
protected by $e$, denoted $\crit{\tr}(e)$, is the set of
events $e'$ such that $\ThreadOf{e'} = \ThreadOf{e}$
and $e'$ occurs after $e$ and before the matching release $\match{\tr}(e)$ (if it exists) in $\tr$.
For a release event $e$, we have $\crit{\tr}(e) = \crit{\tr}(\match{\tr}(e))$.
\end{myparagraph}

\begin{myparagraph}{Orders on traces}
A partial order $\ord{\tr}{P}$ defined over a trace $\tr$
is a reflexive, anti-symmetric and transitive binary relation
on $\events{\tr}$; the symbol $\mathsf{P}$ is an optional identifier for 
the partial order.
We write $e_1 \ord{\tr}{P} e_2$ to denote $(e_1, e_2) \in \ord{\tr}{P}$,
where $e_1, e_2 \in \events{\tr}$.
For a partial order $\ord{\tr}{P}$, we use $\strictord{\tr}{P}$
to denote the strict order $\ord{\tr}{P} \setminus \setpred{(e,e)}{e \in \events{\tr}}$.
We write $e_1 \notord{\tr}{P} e_2$ to denote that $(e_1, e_2) \not\in \ord{\tr}{P}$.
Events $e_1, e_2 \in \events{\tr}$ are said to be \emph{unordered} by $\ord{\tr}{P}$,
denoted $\Unordered{e_1}{\tr}{P}{e_2}$ if $e_1 \notord{\tr}{P} e_2$ and $e_2 \notord{\tr}{P} e_1$;
otherwise, we write $\Ordered{e_1}{\tr}{P}{e_2}$, denoting that
$e_1$ and $e_2$ are ordered by $\ord{\tr}{P}$ in one or the other way.
When $\tr$ is clear from context, we will use
$\ord{}{P}$, $\strictord{}{P}$, $\notord{}{P}$, $\Unordered{}{}{P}{}$ and $\Ordered{}{}{P}{}$
instead of respectively $\ord{\tr}{P}$, $\strictord{\tr}{P}$, 
$\notord{\tr}{P}$, $\Unordered{}{\tr}{P}{}$ and $\Ordered{}{\tr}{P}{}$.
For a partial order $\ord{\tr}{P}$, 
a set $S \subseteq \events{\tr}$ is said to
be \emph{downward-closed with respect to} $\ord{\tr}{P}$
if for every $e, e' \in \events{\tr}$, if $e \ord{\tr}{P} e'$
and $e' \in S$, then $e \in S$.

The \emph{trace-order} $\trord{\tr}$ defined by $\tr$ is the total order on
$\events{\tr}$ imposed by the sequence $\tr$,
i.e., $e_1 \trord{\tr} e_2$ iff the event $e_1$ occurs before $e_2$ in $\tr$.
The \emph{thread-order} (or \emph{program-order}) $\tho{\tr}$ of $\tr$
is the partial order on $\events{\tr}$ that orders events in the same thread:
for two events $e_1, e_2\in \events{\tr}$,
$e_1 \tho{\tr} e_2$ iff $e_1 \trord{\tr} e_2$ and $\ThreadOf{e_1} = \ThreadOf{e_2}$.
\end{myparagraph}

\begin{myparagraph}{Conflicting events and data races}
Let $\tr$ be a trace.
Two events $e_1, e_2 \in \events{\tr}$ 
are said to be \emph{conflicting}, denoted $e_1 \conf e_2$,
if $\ThreadOf{e_1} \neq \ThreadOf{e_2}$, 
and there is a common variable $x \in \vars{\tr}$ such that
$\OpOf{e_1}, \OpOf{e_2} \in \set{\rd(x), \wt(x)}$
and at least one of $\OpOf{e_1}$ and $\OpOf{e_2}$ is $\wt(x)$.
Let $\rho$ be a trace with $\events{\rho} \subseteq \events{\tr}$.
An event $e \in \events{\tr}$ is said to be $\tr$-\emph{enabled}
in $\rho$ if $e \not\in \events{\rho}$ and
for all events $e' \in \events{\tr}$ such that $e' \stricttho{\tr} e$, we have $e' \in \events{\rho}$.
A pair of conflicting events $(e_1, e_2)$ in $\tr$ 
is said to be a data race of $\tr$ if $\tr$ has a prefix
$\tr'$ such that both $e_1$ and $e_2$ are
$\tr$-enabled in $\tr'$.
The trace $\tr$ is said to have a data race if there is a pair
of conflicting events $(e_1, e_2)$ in $\tr$ that constitutes a data race of $\tr$.
\end{myparagraph}

\begin{example}

\begin{figure}[t]
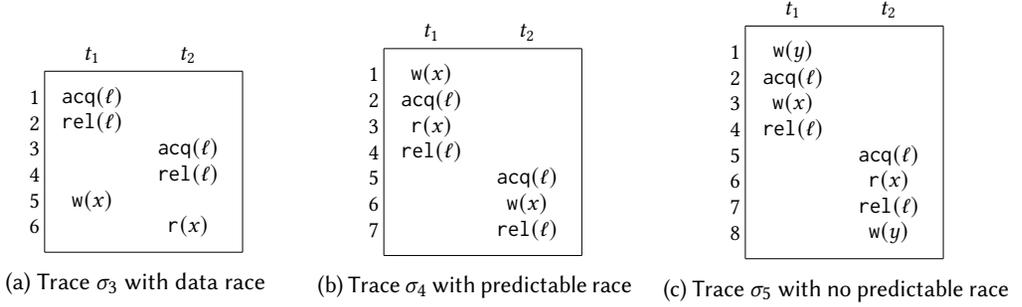

\centering
\begin{subfigure}{.3\textwidth}
	\centering
	\vspace{0.2in}
	\execution{2}{
		\figev{1}{\acq(\lk)}
		\figev{1}{\rel(\lk)}
		\figev{2}{\acq(\lk)}
		\figev{2}{\rel(\lk)}
		\figev{1}{\wt(x)}
		\figev{2}{\rd(x)}
	}
	\caption{Trace $\tr_\exampleone$ with data race}
	\figlabel{data-race}
\end{subfigure}%
\hspace{0.1in}
\begin{subfigure}{.3\textwidth}
  \centering
  \vspace{0.1in}
	\execution{2}{
		\figev{1}{\wt(x)}
		\figev{1}{\acq(\lk)}
		\figev{1}{\rd(x)}
		\figev{1}{\rel(\lk)}
		\figev{2}{\acq(\lk)}
		\figev{2}{\wt(x)}
		\figev{2}{\rel(\lk)}
	}
	\caption{Trace $\tr_\exampletwo$ with predictable race}
	\figlabel{predictable-race}
\end{subfigure}
\hspace{0.1in}
\begin{subfigure}{.33\textwidth}
  \centering
	\execution{2}{
		\figev{1}{\wt(y)}
		\figev{1}{\acq(\lk)}
		\figev{1}{\wt(x)}
		\figev{1}{\rel(\lk)}
		\figev{2}{\acq(\lk)}
		\figev{2}{\rd(x)}
		\figev{2}{\rel(\lk)}
		\figev{2}{\wt(y)}
	}
   \caption{Trace $\tr_\examplethree$ with no predictable race}
  \figlabel{no-predictable-race}
\end{subfigure}
\caption{Traces, data races and predictable data races}
\figlabel{example-traces}
\end{figure}
Consider the trace $\tr_\exampleone$ in~\figref{data-race}.
The set of events of $\tr_\exampleone$ is 
$\events{\tr_\exampleone} = \set{e_1, e_2, \ldots, e_6}$,
$\threads{\tr} = \set{t_1, t_2}$, $\vars{\tr} = \set{x}$ and $\locks{\tr} = \set{\lk}$.
For the event $e_1 = \ev{t_1, \acq(\lk)}$, we have $\ThreadOf{e_1} = t_1$ and $\OpOf{e_1} = \acq(\lk)$.
The trace order of this trace is $\trord{\tr_\exampleone} = \setpred{(e_i, e_j)}{i \leq j}$
and the thread-order is $\tho{\tr_\exampleone} = \set{(e_1, e_2), (e_1, e_5), (e_2, e_5), (e_3, e_4), (e_3, e_6), (e_4, e_6)}$.
Events $e_5$ and $e_6$ conflict because they access the same variable $x$
and are performed by different threads.
For the prefix trace $\tr'_\exampleone = e_1{\cdot}e_2{\cdot}e_3{\cdot}e_4$,
both $e_5$ and $e_6$ are $\tr_\exampleone$-enabled in $\tr'_\exampleone$.
Thus, $(e_5, e_6)$ constitutes a data race of $\tr_\exampleone$.
\end{example}

\begin{myparagraph}{Correct reorderings}
Execution traces of concurrent programs are sensitive to thread
scheduling, and looking for a trace with a specific pattern
is like searching for a needle in a haystack.
In terms of data race detection, this means that a dynamic
analysis that looks for executions with enabled conflicting events 
(data races) is likely to miss many data races that might have otherwise been captured
in alternate executions of the same program that arise due to 
slightly different thread scheduling.
The notion of data race \emph{prediction} attempts to alleviate this problem
by capturing a more robust notion of data races.
The idea here is to infer data races that might occur in alternate reorderings
of an observed trace, thereby detecting data races beyond those
in just the execution that was observed.
The set of allowable reorderings of an observed trace $\tr$ is defined in a manner
that ensures that data races can be detected agnostic of the program
that generated $\tr$ in the first place.
Such a notion is captured by a \emph{correct reordering} which we define next.

For a trace $\tr$ and a read event $e$,
we use $\lw{\tr}(e)$ to denote the write event observed by $e$.
That is, $e' = \lw{\tr}(e)$ is the last (according to the trace order $\trord{\tr}$) 
write event $e'$ of $\tr$ such that $e$ and $e'$ access the same variable and $e' \trord{\tr} e$; 
if no such $e'$ exists, then we write $\lw{\tr}(e) = \bot$.

Given the above notation,
a trace $\rho$ is said to be a correct reordering of trace $\tr$
if 
\begin{compactenum}[(a)]
\item $\events{\rho} \subseteq \events{\tr}$
\item $\events{\rho}$ is downward closed with respect to $\tho{\tr}$, 
and further $\tho{\rho} \subseteq \tho{\tr}$,
\item for every read event $e \in \events{\rho}$, 
$\lw{\rho}(e) = \lw{\tr}(e)$.
\end{compactenum}
The above definition ensures that if $\rho$ is a correct reordering of $\tr$,
then every program that generates the execution trace $\tr$ also generates $\rho$.
This is because $\rho$ preserves both intra-thread ordering, as well as
the values read by every read occurring in $\rho$, thereby preserving
any control flow that might have been taken by $\tr$.
This style of formalizing alternative executions based on
semantics of concurrent objects was popularized by~\cite{HerlihyWing90}
and by prior race detection works~\cite{maxcausalmodels,Said2011}.
Our definition of correct reordering has been derived from~\cite{cp2012},
which has subsequently also been used in the 
literature~\cite{wcp2017,shb2018,PavlogiannisPOPL20,Mathur20,Roemer18,Bond2019}.
\end{myparagraph}

\begin{myparagraph}{Data race prediction}
Armed with the notion of correct reorderings, 
we can now define a more robust notion of data races.
%
A pair of conflicting events $(e_1, e_2)$ in $\tr$ is said to be a \emph{predictable}
data race of $\tr$ if there is a correct reordering $\rho$ of $\tr$
such that $e_1, e_2$ are $\tr$-enabled in $\rho$.
We remark that a pair of conflicting events
$(e_1, e_2)$ in trace $\tr$ may not be a data race of $\tr$,
but nevertheless may still be a \emph{predictable} data race of $\tr$.
\begin{example}
Consider the trace $\tr_\exampletwo$ in~\figref{predictable-race}.
Observe that there is no prefix of $\tr_\exampletwo$ in which both $e_1$
and $e_6$ are enabled.
However, $(e_1, e_6)$ is a predictable race of $\tr_\exampletwo$
that is witnessed by the singleton correct reordering $\tr_\exampletwo^\cre = e_5$
in which both $e_1$ and $e_6$ are enabled;
$\tr_\exampletwo^\cre$ is both downward closed with respect to,
and respects $\tho{\tr_\exampletwo}$.
Further, it has no read events and thus vacuously
every read observes the same last write as in $\tr_\exampletwo$.
The other pair of conflicting events in $\tr_\exampletwo$, namely
$(e_3, e_6)$, however, is not a predictable race.
These events are protected
by a common lock, and there is no correct reordering
in which $e_3$ and $e_6$ are simultaneously enabled --- any attempt
at doing so will lead to overlapping
critical sections on $\lk$, thereby violating lock semantics.
\end{example}
\begin{example}
Now, consider $\tr_\examplethree$ in~\figref{no-predictable-race}.
Here, the conflicting pair $(e_3, e_6)$ cannot be a 
predictable race as in the case of $\tr_\exampletwo$--- the lock $\lk$ protects both $e_3$ and $e_6$.
Now consider the other conflicting pair $(e_1, e_8)$.
Let $\rho$ be a correct reordering of $\tr_\examplethree$
in which $e_8$ is enabled.
We must have $e_6 \in \events{\rho}$ 
($\rho$ must be $\tho{\tr_\examplethree}$-downward closed)
and further $e_3 \in \events{\rho}$ (as $e_3 = \lw{\tr_\examplethree}(e_6) = \lw{\rho}(e_6)$).
Clearly, $e_1$ cannot be enabled in any such trace $\rho$,
and thus, the trace $\tr_\examplethree$ has no predictable data race. 
\end{example}

The central theme of race prediction is to solve the problem below.
\begin{problem}[Data Race Prediction]
\problabel{race-prediction}
Given a trace $\tr$, determine if
$\tr$ has a predictable data race.
\end{problem}

\myparagraph{A note on soundness}{
	We say that an algorithm for data race prediction is \emph{sound} if
	whenever the algorithm reports a YES answer, then the
	given trace has a predictable data race.
	Likewise, an algorithm is complete if the algorithm reports YES whenever
	the input trace has a data race.
	Our convention for this nomenclature ensures that
	no false positives are reported by a \emph{sound} algorithm~\cite{soundness-dynamic-analysis}
	and is consistent with prior work
	on data race prediction~\cite{cp2012,wcp2017,Roemer18,Bond2019,PavlogiannisPOPL20}.
	Soundness is often a desirable property for dynamic race predictors
	for widespread adoption~\cite{racerdx2019}.	
}

\end{myparagraph}


\subsection{Synchronization-Preserving Data Races}
\seclabel{zero_reversals}

In general, the problem of data race prediction is intractable~\cite{Mathur20},
and a sound and complete
algorithm for data race prediction is unlikely to scale beyond programs of even moderate size.
A recent trend in predictive analysis for race detection instead, 
aims to develop techniques that are sound but incomplete, with successively better
prediction power (ability to report more data races) than previous techniques~\cite{cp2012,wcp2017,Roemer18,PavlogiannisPOPL20,Bond2019}.
Most of these techniques are either based on partial orders~\cite{Pozniansky:2003:EOD:966049.781529,cp2012,wcp2017}
or use graph-based algorithms~\cite{Roemer18,PavlogiannisPOPL20}.
In this paper, we characterize a class of predictable data races,
called \emph{sync(hronization)-preserving} races, which we define shortly.
We will later (\secref{detection}) present an algorithm
that reports a race iff the input trace has a 
sync-preserving race.
Since sync-preserving races are predictable races,
our algorithm will be sound for race prediction.

\Paragraph{Sync-preserving correct reordering.}
A correct reordering of a trace is called
\emph{sync(hronization)-preserving} if it does not
reverse the order of synchronization constructs;
in our formalism, traces use locks as synchronization primitives
to enforce mutual exclusion.
Formally, a correct reordering $\rho$ of a given trace $\tr$
is \emph{sync-preserving} with respect to
$\tr$ if for every lock $\lk$ and for any two acquire events 
$e_1, e_2 \in \acquires{\rho}(\lk)$,
we have $e_1 \trord{\rho} e_2$ iff $e_1 \trord{\tr} e_2$.
In other words, the order of two critical sections on the same
lock is the same in $\tr$ and  $\rho$.
Let us illustrate this notion on an example.
\begin{example}
Consider trace $\tr_\examplefour$ in~\figref{tr4}.
This trace has 3 critical sections on lock $\lk$.
Now consider the correct reordering $\tr^{\cre}_\examplefour$ 
(\figref{tr4cr}) of $\tr_\examplefour$.
Here, the critical section in thread $t_1$ is not present.
But, nevertheless, the order amongst the 
remaining critical sections on $\lk$ (in threads $t_2$ and $t_3$)
is the same as in $\tr_\examplefour$,
making $\tr^{\cre}_\examplefour$ a sync-preserving correct reordering
of $\tr_\examplefour$.
This example also demonstrates that the order of read and write 
events may be different in a trace and its sync-preserving correct 
reordering (as in \figref{zeroreversal-cr}).
\end{example}


\begin{figure}[t]
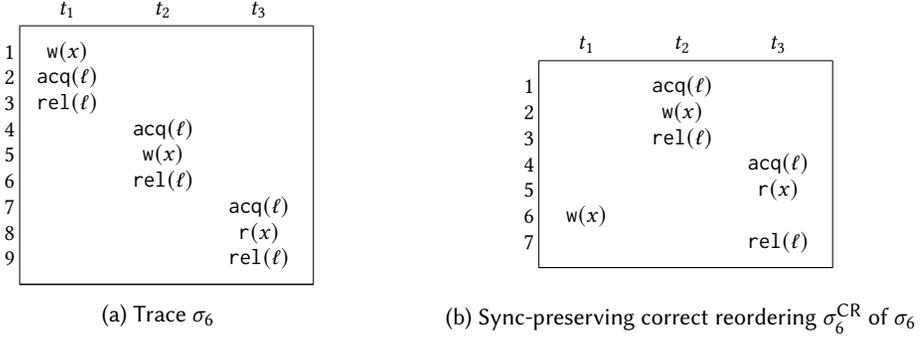

\centering
\begin{subfigure}{.5\textwidth}
	\centering
	\execution{3}{
		\figev{1}{\wt(x)}
		\figev{1}{\acq(\lk)}
		\figev{1}{\rel(\lk)}
		\figev{2}{\acq(\lk)}
		\figev{2}{\wt(x)}
		\figev{2}{\rel(\lk)}
		\figev{3}{\acq(\lk)}
		\figev{3}{\rd(x)}
		\figev{3}{\rel(\lk)}
	}
	\caption{Trace $\tr_\examplefour$}
	\figlabel{tr4}
\end{subfigure}%
\begin{subfigure}{.5\textwidth}
  \centering
  \vspace{0.2in}
		\execution{3}{
		\figev{2}{\acq(\lk)}
		\figev{2}{\wt(x)}
		\figev{2}{\rel(\lk)}
		\figev{3}{\acq(\lk)}
		\figev{3}{\rd(x)}
		\figev{1}{\wt(x)}
		\figev{3}{\rel(\lk)}
	}
	\vspace{0.1in}
	\caption{Sync-preserving correct reordering $\tr^{\cre}_{\examplefour}$ of $\tr_{\examplefour}$}
	\figlabel{tr4cr}
\end{subfigure}
\caption{Sync-preserving correct reordering and sync-preserving races}
\figlabel{zeroreversal-cr}
\end{figure}

A pair of conflicting events $(e_1, e_2)$ of a trace $\tr$ 
is said to be a \emph{sync(hronization)-preserving race} of $\tr$
if there is a sync-preserving correct reordering $\rho$ of $\tr$ 
in which $e_1$ and $e_2$ are $\tr$-enabled.

\begin{example}
Let us again consider traces from~\figref{zeroreversal-cr}.
Events $e_1$ and $e_8$ in $\tr_{\examplefour}$ (\figref{tr4})
correspond respectively to events $e_6$ and $e_7$ in
$\tr^\cre_{\examplefour}$ (\figref{tr4cr}).
These two events are $\tr_{\examplefour}$-enabled in the prefix
$\rho = e_1{\cdot}e_2{\cdot}e_3{\cdot}e_4{\cdot}e_5$
of $\tr^\cre_{\examplefour}$.
As a result, $(e_1, e_8)$ is a sync-preserving race of $\tr_\examplefour$.
Likewise, $(e_1, e_4)$ is also a
sync-preserving race of $\tr_\examplefour$ witnessed by the
singleton sync-preserving correct reordering $\rho' = \ev{t_2, \acq(\lk)}$, in which
both $e_1$ and $e_4$ are enabled.
\end{example}

In this paper we present a linear time algorithm for the following decision problem,
giving a \emph{sound} algorithm for~\probref{race-prediction}.
\begin{problem}[Sync-Preserving Race Prediction]
\problabel{zr-race-prediction}
Given trace $\tr$, determine if there is a pair of
conflicting events $(e_1, e_2)$ in $\tr$ such that
$(e_1, e_2)$ is a sync-preserving data race of $\tr$.
\end{problem}

\Paragraph{Comparison with other approaches.}
Here we briefly compare sync-preserving races with other approaches in the literature for sound dynamic race prediction.
Races reported using the famous \emph{happens-before} \textsf{HB} partial order~\cite{Pozniansky:2003:EOD:966049.781529}, and its extension to \emph{schedulable-happens-before \textsf{SHB}}~\cite{shb2018} 
are strictly subsumed by this notion.
That is, these techniques only compute sync-preserving races, but can also miss simple cases of sync-preservation, as already illustrated in the examples of \figref{compare-hb}.
The \emph{causally precedes} (\textsf{CP}) partial order~\cite{cp2012},  and its extension to
the \emph{weak causally precedes} (\textsf{WCP}) partial order~\cite{wcp2017} are capable of predicting races that reverse critical sections.
However, they are closed under composition with \textsf{HB}, and as such can miss even simple sync-preserving races, even on two-threaded traces.
The \emph{doesn't commute} \textsf{DC} partial order~\cite{Roemer18} is an unsound weakening to \textsf{WCP},
that further undergoes a vindication phase to filter out unsound reports.
Nevertheless, \textsf{DC} is somewhat similar to \textsf{WCP} and also misses sync-preserving races.
The recently introduced partial order \emph{strong-dependently-precedes} 
(\textsf{SDP})~\cite{Bond2019}, while claimed to be sound in that paper,
is, in fact, unsound. 
In \cref{sec:app_sdp}, we show a counter-example
to the soundness theorem of \textsf{SDP}, which we confirmed with the authors~\cite{sdpUnsound2020}.
The partial order \textsf{WDP}~\cite{Bond2019}, is a further unsound weakening of
\textsf{DC}, and and can miss sync-preserving races in the \emph{vindication} phase that it employs for ruling out false positives.
We further refer to \appref{comparison} for a few examples that illustrate the above comparison.


\section{Summary of Main Results}

Here we give an outline of the main results of this paper.
In later sections we present the details, i.e., algorithms, proofs and examples.
Due to limited space, some technical proofs are relegated to the appendix.
Our first result is an algorithm for dynamic prediction of sync-preserving races.
We show the following theorem.

\smallskip
\begin{restatable}{theorem}{thmzeroreversals}
\thmlabel{zero_reversals}
Sync-preserving race prediction is solvable in 
$O(\NumEvents\cdot \NumThreads^2 + \NumAcquires\cdot \NumVariables\cdot \NumThreads^3)$ time and 
$O(\NumEvents+ \NumThreads^3\cdot \NumVariables\cdot \NumLocks)$ space, for a trace $\tr$ with 
length $\NumEvents$, 
$\NumThreads$ threads,
$\NumAcquires$ acquires, and 
$\NumVariables$ variables.
\end{restatable}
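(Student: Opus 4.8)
\emph{Step 1: from races to ideals.} The plan is to reduce \probref{zr-race-prediction} to a family of fixpoint computations, one per candidate event. Fix a conflicting pair $(e_1,e_2)$ with $e_1 \trord{\tr} e_2$. By definition the pair is a sync-preserving race iff some sync-preserving correct reordering $\rho$ enables both. I would first argue that it suffices to consider the \emph{minimal} such reordering: the smallest event set that (a) contains every $\tho{\tr}$-predecessor of $e_1$ and of $e_2$, (b) is downward closed under $\tho{\tr}$, (c) is closed under the reads-from requirement that each included read $r$ observes $\lw{\tr}(r)$, and (d) can be linearized without reversing two critical sections on a common lock. Call this the sync-preserving ideal $\ZRIdeal{\tr}$ of the pair. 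Since rules (a)--(c) form a monotone closure operator, its least fixpoint is unique; the pair is a race exactly when this fixpoint is \emph{consistent}, i.e.\ it forces in neither $e_1$ nor $e_2$ and admits a lock-respecting linearization per (d). This turns the existential over reorderings into a single ideal computation per pair.

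\emph{Step 2: a single-pass closure.} To avoid iterating over all $\Theta(\NumEvents^2)$ pairs, I would fix the later event $e_2 = e$ and compute, once, the minimal sync-preserving ideal $\ZRIdeal{\tr}(e)$ enabling $e$ via a procedure \closurefun that realizes the closure operator $\ZRClosure{\tr}$: thread-order closure, a reads-from rule that pulls in $\lw{\tr}(r)$ for each included read $r$ while \emph{forbidding} any same-variable write trace-ordered strictly between them, and a lock rule that, to admit an $\acq(\lk)$ while preserving critical-section order, \emph{forces in} the relevant earlier critical sections on $\lk$. A conflicting earlier $e_1$ is then reported as a race (by \racefun) precisely when $e_1 \notin \ZRIdeal{\tr}(e)$ while its $\tho{\tr}$-predecessors lie inside and no overlap is induced. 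Using the correct-reordering definition verbatim, I would prove that a valid linearization of the fixpoint exists iff no two forced critical sections overlap, giving soundness and completeness of the procedure against \probref{zr-race-prediction}.

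\emph{Step 3: complexity.} I would represent each ideal as a vector timestamp over threads, recording how far each thread must advance. The streaming pass visits each of the $\NumEvents$ events once; handling an event performs a constant number of joins of $\NumThreads$-sized clocks and propagates the reads-from constraint across threads at cost $O(\NumThreads^2)$, yielding the $O(\NumEvents\cdot\NumThreads^2)$ term. The lock rule is the costly part: for each of the $\NumAcquires$ acquires I would, per variable (to test whether forcing in an earlier critical section clashes with a forbidden write) and per pair of threads, refresh the critical-section ordering data, amounting to $O(\NumVariables\cdot\NumThreads^3)$ work per acquire and hence the $O(\NumAcquires\cdot\NumVariables\cdot\NumThreads^3)$ term. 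For space, storing $\tr$ together with the precomputed last-writes costs $O(\NumEvents)$, while the per-lock, per-variable timestamps, indexed by up to three thread coordinates, give the $O(\NumThreads^3\cdot\NumVariables\cdot\NumLocks)$ term.

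\emph{Main obstacle.} I expect the crux to be the interaction in Step 2 between the reads-from rule, which \emph{forbids} writes, and the sync-preservation rule, which \emph{forces} whole critical sections: the hard part is showing their joint least fixpoint still admits a lock-respecting linearization whenever it is consistent, so that \closurefun is exactly characterizing. The accompanying bound of Step 3---that each acquire perturbs only $O(\NumThreads)$ timestamps through $O(\NumThreads^2)$-time operations across the $\NumVariables$ variables, with no event revisited---is where the careful amortization lives, and establishing it (rather than a larger factor) is what ultimately justifies the stated running time.
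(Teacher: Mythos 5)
Your Step~2 contains a genuine gap in the race criterion. You compute, once per later event $e$, the ideal $\ZRIdeal{\tr}(e)$ enabling $e$ alone, and then report an earlier conflicting $e_1$ as racy when $e_1 \not\in \ZRIdeal{\tr}(e)$ ``while its $\tho{\tr}$-predecessors lie inside.'' The correct test (the paper's \lemref{zrideal-disjointness}) seeds the closure with the predecessors of \emph{both} events, i.e., checks $e_1 \not\in \ZRClosure{\tr}(\set{\prev{\tr}(e_1)} \cup \set{\prev{\tr}(e_2)})$: adding $\prev{\tr}(e_1)$ can trigger the lock rule and force in further releases, so the joint closure cannot be read off the closure of $\prev{\tr}(e_2)$ alone. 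As stated, your test is incomplete: if thread $t_1$ performs $\wt(z)$ then $\wt(x)$ and thread $t_2$ performs $\wt(x)$, the two writes on $x$ form a sync-preserving race, yet $\ZRClosure{\tr}(\set{\prev{\tr}(e_2)})=\emptyset$ does not contain the $\wt(z)$ predecessor, so your side condition fails and the race is missed. If instead you intend to join the predecessor in and re-close per candidate (as the paper's \racefun does), then the claimed running time needs exactly what you never prove: the monotonicity of ideals (\lemref{monotonicity}) and the skipping property (\lemref{consumed-events}), which together justify reusing the previously computed ideal so that each entry of the FIFO structures ($\AcqLst$, $\AccLst$, organized via \propref{one-acquire-per-thread}) is visited $O(1)$ times. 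Without these, per-candidate recomputation is quadratic, and your Step~3 phrase ``no event revisited'' is an assertion, not an argument.

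Separately, the ``main obstacle'' you identify is not where the difficulty lies, and recognizing this is the paper's key enabling step (\lemref{trace-order-suffices}): if a sync-preserving witness exists at all, then one exists whose order is a subset of $\trord{\tr}$, so the candidate witness is always the trace-order linearization of the closed set. Two consequences follow. First, your ``forbidding'' clause in the reads-from rule is vacuous: by definition of $\lw{\tr}(r)$ there is no same-variable write between $\lw{\tr}(r)$ and $r$ in $\trord{\tr}$, so any trace-order linearization of a $(\tho{\tr},\lw{\tr})$-closed set automatically preserves last-writes. Second, lock well-formedness is automatic: the closure rule puts $\match{\tr}(a_1)$ in the set whenever two same-lock acquires $a_1 \trord{\tr} a_2$ are both in it, and $\match{\tr}(a_1) \trord{\tr} a_2$ holds in $\tr$ by lock semantics. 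Hence there is no separate ``admits a lock-respecting linearization'' check and no forcing/forbidding interaction to reconcile; the race test collapses to pure membership, and the real technical content of the theorem is the trace-order lemma plus the monotonicity-based amortization that your proposal leaves unproved.
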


In many settings the number of events $\NumEvents$  and number of acquires $\NumAcquires$ are the dominating parameters,
whereas the other parameters are much smaller, i.e., 
$\NumThreads,\NumVariables=\Otilde(1)$, where $\Otilde$ hides poly-logarithmic factors.
Hence, the complexity of our algorithm becomes $\Otilde(\NumEvents)$ 
for both time and space.
Our next result shows that a linear space complexity is essentially unavoidable when predicting sync-preserving races with one-pass streaming algorithms.

\smallskip
\begin{restatable}{theorem}{thmspacelowerbound}\label{thm:space_lowerbound}
Any one-pass algorithm  for sync-preserving race prediction on traces with $\geq 2$ threads, $\NumEvents$ events and $\Omega(\log \NumEvents)$ locks uses $\Omega(\NumEvents/\log^2 \NumEvents)$ space.
\end{restatable}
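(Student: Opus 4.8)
The plan is to prove the space lower bound through a reduction from one-way communication complexity, the standard route for streaming lower bounds. Recall that a one-pass streaming algorithm using $s$ bits of memory yields a one-way two-party protocol of cost $s$: split the input stream into a prefix read by Alice and a suffix read by Bob; Alice runs the algorithm on her prefix, sends the $s$-bit memory state to Bob, who resumes the algorithm on his suffix and reads off the final output. Thus it suffices to exhibit a family of traces, parameterized by the inputs of a communication problem with large one-way complexity, such that (i) the prefix is determined by Alice's input and the suffix by Bob's, (ii) the trace has a sync-preserving race if and only if the communication problem answers YES, and (iii) the trace uses only $2$ threads, $O(\log \NumEvents)$ locks, and $\NumEvents$ events, while the encoded problem has $\Omega(\NumEvents / \log^2 \NumEvents)$ one-way complexity.

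First I would reduce from the \textsc{Index} problem: Alice holds a string $x \in \{0,1\}^m$, Bob holds an index $i \in [m]$, and Bob must output $x_i$; its one-way (Alice-to-Bob) communication complexity is $\Omega(m)$. I would encode $x$ in the prefix of a two-threaded trace and the query $i$ in the suffix, arranging a single pair of conflicting events $(w_1, w_2)$ --- with $w_1$ the first event of $t_1$ and $w_2$ the last event of $t_2$ --- whose sync-preserving race is to be decided. Because $w_1$ is $t_1$'s first event, any witnessing reordering $\rho$ that enables $w_2$ while omitting $w_1$ must contain \emph{no} events of $t_1$ at all, by thread-order downward closure. Hence the race exists precisely when the portion of $t_2$ up to $w_2$ can be replayed with no help from $t_1$ and without violating lock semantics. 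This is exactly the obstruction illustrated by $\tr_\examplethree$: a read performed by $t_2$ inside a critical section that observes a write of $t_1$ inside a matching critical section forces, through the reads-from requirement of correct reorderings together with sync-preservation of the lock order, the $t_1$ critical section --- and ultimately $w_1$ --- back into $\rho$, killing the race.

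The core of the construction is therefore a query gadget in which Bob's read is made to observe a $t_1$ write when $x_i = 0$ (blocking the race) and a $t_2$ write when $x_i = 1$ (permitting it). The main obstacle is realizing this \emph{addressably} while spending only $O(\log \NumEvents)$ locks: with one lock per bit one would need $m$ locks, so instead I would use a binary-addressing scheme over $k = \Theta(\log m)$ shared locks, encoding each index by its $k$-bit representation and letting the relative order of critical sections on the $k$ locks --- which sync-preservation must respect --- route the reads-from chain to the selected bit $x_i$. Accounting for the $\Theta(\log m)$ locks touched by each bit's gadget and the redundancy needed to make the routing robust yields a per-bit cost of $O(\log^2 m)$ events, so that $m = \Theta(\NumEvents / \log^2 \NumEvents)$ bits are encoded in $\NumEvents$ events using $\Theta(\log \NumEvents)$ locks.

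Finally I would assemble the pieces: correctness of the gadget gives ``race iff $x_i = 1$,'' so any correct one-pass algorithm for Problem~\probref{zr-race-prediction} solves \textsc{Index} on $m$ bits through the protocol above, forcing $s = \Omega(m) = \Omega(\NumEvents / \log^2 \NumEvents)$. The delicate points I expect to fight with are (a) proving the gadget is \emph{exactly} correct --- that no \emph{unintended} sync-preserving reordering sneaks in a race when $x_i = 0$, which requires a careful case analysis of lock semantics across the shared address locks --- and (b) the polylogarithmic bookkeeping that simultaneously keeps the lock count at $O(\log \NumEvents)$ and the length overhead per bit at $O(\log^2 \NumEvents)$.
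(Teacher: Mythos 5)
Your outer framework is fine: converting a one-pass algorithm into a one-way protocol and reducing from \textsc{Index} is essentially the same argument the paper makes (the paper reduces from the equality language $\Lang_n$ via a pigeonhole on streaming states, which for deterministic algorithms is exactly one-way communication, composed with an $O(\log n)$-space transducer). The genuine gap is in your query gadget, and it is not just a matter of "delicate case analysis": the mechanism you propose cannot exist. You block the fixed pair $(w_1,w_2)$ through reads-from forcing: the race dies exactly when some read in $t_2$'s prefix has a $t_1$ event as its trace-last-writer, since that write, and by thread-order downward closure $w_1$ itself, is pulled into every witness. But which write a read observes is a property of the trace alone (the last write to the same variable before it), and lock structure cannot ``route'' it: the sync-preserving closure rule only forces the release of a lock whose acquire is \emph{already} in the set, and thread-order closure of $t_2$ events stays inside $t_2$, so the first $t_1$ event can only ever enter the closure through the last-writer of some $t_2$ read. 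Hence, with your designated pair, the race exists iff no variable that Bob reads (without having written it himself first) is written anywhere in Alice's prefix. Writing $B(i)$ for that set of variables (a function of $i$ only) and $A(x)$ for the set of variables Alice writes (a function of $x$ only), your correctness requirement is $B(i)\cap A(x)=\emptyset \iff x_i=1$ for all $i,x$; this forces $x\mapsto A(x)$ to be injective, so the construction needs at least $m$ distinct variables. So either you use $\Theta(\NumEvents)$ variables---in which case the $\Theta(\log \NumEvents)$ locks and the binary addressing do no work at all and the mechanism is not the one you described (and one must still add a common lock around these accesses, since otherwise the pair consisting of Alice's write and Bob's read is itself an unintended sync-preserving race exactly when $x_i=0$)---or the gadget is information-theoretically impossible.

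The paper blocks races by mutual exclusion rather than by reads-from, which is what makes logarithmically many locks and a \emph{single} variable suffice. There is no fixed racing pair: the candidate race is between the two position-$i$ access events themselves, one in each thread. Two lock families $A,B$ with $|A|=|B|=\log n$ are arranged (by a subset-lattice antichain scheme---this is exactly what your ``$k$-bit address with complementary lock sets'' idea actually supports) so that events at \emph{different} positions always hold a common lock, while events at \emph{equal} positions hold disjoint lock sets; an extra lock $c$ surrounds every write, so equal positions with equal bits (write-write or read-read) never race, and a race exists precisely at a position whose bits differ. Each bit then costs $O(\log n)$ events, not $O(\log^2 n)$. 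If you rebuild your \textsc{Index} reduction on this mechanism---Alice emits, for position $j$, a write (if $x_j=1$, additionally under $c$) or a read (if $x_j=0$) under lock set $L_j$; Bob emits a single read under the complement of $L_i$---then ``some sync-preserving race exists iff $x_i=1$'' does hold, and the theorem follows with the stated resource bounds.
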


Clearly, any algorithm must spend linear time, while \cref{thm:space_lowerbound} shows that the algorithm must also use (nearly) linear space. 
As our algorithm uses $\Otilde(\NumEvents)$ time and space, it is optimal for both resources, modulo poly-logarithmic improvements.
Our next theorem shows a combined time-space lower bound for the problem, which highlights that reducing the space usage must lead to an increased running time, given that the algorithm is executed on the Turing Machine model.

\smallskip
\begin{restatable}{theorem}{thmproductlowerbound}\label{thm:product_lowerbound}
Consider the problem of sync-preserving race prediction on traces with $\geq 2$ threads, $\NumEvents$ events and $\Omega(\log \NumEvents)$ locks.
Consider any Turing Machine algorithm for the problem with time and space complexity $T(\NumEvents)$ and $S(\NumEvents)$, respectively.
Then we have $T(\NumEvents)\cdot S(\NumEvents)=\Omega(\NumEvents^2/\log^2\NumEvents)$.
\end{restatable}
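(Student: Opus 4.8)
The plan is to prove the bound by the classical combination of \emph{crossing sequences} with \emph{communication complexity}, leveraging the reduction already underlying \cref{thm:space_lowerbound}. The one-pass bound of \cref{thm:space_lowerbound} exploits a single ``cut'' of the trace: a streaming algorithm crossing from the first thread's block to the second carries its whole state, so its space must exceed the one-way communication complexity of a hard two-party problem (Set Disjointness) that we embed. For a general (multi-pass) Turing Machine the head may cross such a cut many times, so a single cut only yields a linear $T\cdot S$ bound. The key idea is therefore to engineer a trace in which \emph{every} cut in a linear-sized window is simultaneously hard, and to charge the head's crossings of all these cuts against the running time $T$.

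Concretely, fix $\Theta(\NumEvents)$ cut positions $p_1,\dots,p_k$ in the interior of the input tape. Given any Turing Machine solving the problem in time $T(\NumEvents)$ and space $S(\NumEvents)$, I would, for each cut $p_i$, build a two-party protocol in the standard way: Alice owns the input cells left of $p_i$, Bob owns those to the right, and they jointly simulate the machine, the active party running it until the \emph{input} head crosses $p_i$, at which point it transmits the current configuration---finite control state, work-tape contents, and work-head position, i.e.\ $O(S)$ bits (using $S=\Omega(\log\NumEvents)$). If the head crosses $p_i$ exactly $c_i$ times during the computation, the induced protocol uses $O(c_i\cdot S)$ communication and decides the sync-preserving race problem, hence the embedded two-party instance across $p_i$. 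If I can arrange that the communication complexity across each $p_i$ is $\Omega(\NumEvents/\log^2\NumEvents)$, then $c_i\cdot S=\Omega(\NumEvents/\log^2\NumEvents)$, i.e.\ $c_i=\Omega\!\big(\NumEvents/(S\log^2\NumEvents)\big)$. Since each machine step moves the input head by one cell and thus crosses at most one of the $p_i$, we have $\sum_i c_i\le T$, so $T\ge\sum_{i=1}^{k} c_i\ge \Theta(\NumEvents)\cdot\Omega\!\big(\NumEvents/(S\log^2\NumEvents)\big)=\Omega\!\big(\NumEvents^2/(S\log^2\NumEvents)\big)$, which rearranges to $T\cdot S=\Omega(\NumEvents^2/\log^2\NumEvents)$. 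For a randomized machine the same counting is done in expectation under a hard input distribution, replacing worst-case by distributional communication complexity.

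The technical heart---and the main obstacle---is the construction guaranteeing that \emph{every} interior cut is hard \emph{on a single common input}. A naive layout of independent sub-instances fails: splitting the hardness across disjoint blocks conserves total information and yields only a linear product bound. Instead I would lay out the $m=\Theta(\NumEvents/\log^2\NumEvents)$ coordinates of a Disjointness instance in a \emph{hairpin}, encoding Alice's coordinates $a_1,\dots,a_m$ left-to-right and Bob's coordinates $b_m,\dots,b_1$ right-to-left, so that coordinate $i$'s two halves sit far apart. A cut at interior position $t$ then places the pairs $(a_j,b_j)$ for $j\le t$ on opposite sides while confining the rest to one side, so that deciding the race (equivalently, non-disjointness) across $t$ reduces---by fixing the remaining coordinates to ``disjoint'' values, which both parties know---to Disjointness on $\Omega(t)=\Omega(m)$ live coordinates, whose (distributional) complexity under the standard unique-intersection distribution is $\Omega(m)=\Omega(\NumEvents/\log^2\NumEvents)$ via a direct-sum/information argument. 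This gives $\Theta(\NumEvents)$ uniformly hard cuts as required. The remaining work is to realize this hairpin as a genuine trace that (i) obeys lock semantics using only $O(\log\NumEvents)$ locks, reusing the coordinate-encoding gadget of \cref{thm:space_lowerbound}, and (ii) makes a sync-preserving race appear \emph{iff} some coordinate is shared, while ensuring the per-cut ``fix the other coordinates'' operation stays within the class of valid traces so that the per-cut reduction to Disjointness is faithful.
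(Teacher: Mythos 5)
Your proposal is correct in spirit but takes a genuinely different, and considerably heavier, route than the paper. The paper never needs many simultaneously-hard cuts: it works with the \emph{padded} equality language $\EQ_n=\{u\#^n v : u,v\in\{0,1\}^n,\ u=v\}$, reusing verbatim the transducer of \cref{thm:space_lowerbound} (which, incidentally, embeds \emph{equality}, not Set Disjointness as you state). The padding is the whole trick: any information about $u$ that reaches the $v$-side must be carried by the input head across the $\#^n$ block; each such pass transports only $O(S)$ bits but costs $\Omega(n)$ time, and since the deterministic communication complexity of equality is $\Omega(n)$, the number of passes $K$ satisfies $K\cdot S=\Omega(n)$, whence $T\geq K\cdot n = \Omega(n^2/S)$. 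Composing with the $O(\log n)$-space transducer (the source of the $\log^2$ loss) gives the theorem. So where you engineer $\Theta(\NumEvents)$ hard cuts via a hairpin layout of Disjointness coordinates plus a direct-sum/information-complexity argument, the paper makes a single region of cuts \emph{expensive to cross} via padding and needs only the elementary deterministic bound for equality. What your route buys in exchange for the extra machinery is a genuinely stronger conclusion: on the paper's instance family the race question \emph{is} equality testing, so a randomized machine can fingerprint $u$ in one pass and achieve $T\cdot S=O(\NumEvents\log \NumEvents)$; a Disjointness-based family like yours is the only way to extend the lower bound to randomized algorithms.

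There is one genuine gap in your deterministic argument as written. You derive $c_i\cdot S=\Omega(\NumEvents/\log^2\NumEvents)$ from the \emph{worst-case} communication complexity across cut $p_i$ and then sum $\sum_i c_i\le T$. But the worst-case bound at cut $p_i$ only says that \emph{some} input forces many crossings of $p_i$, and that input may differ from cut to cut, whereas $\sum_i c_i\le T$ concerns the crossing counts of a \emph{single} execution on a \emph{single} input. To sum over cuts you need one input (or one distribution) that is hard for all cuts simultaneously: either a bound $\mathbb{E}_\mu[c_i]\cdot S=\Omega(m)$ for every $i$ under a single common $\mu$ --- which is exactly the distributional/information-complexity statement for Disjointness, and which covers deterministic machines too, since deterministic algorithms are correct under every distribution --- or a Cobham-style cut-and-paste argument. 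You invoke the distributional version only ``for a randomized machine,'' but it is needed for the deterministic case as well; once you apply it uniformly, the argument closes. The items you flag yourself are routine: the gadget of \cref{thm:space_lowerbound} adapts to intersection by dropping the common lock $c$ and mapping $0$-coordinates to accesses of thread-private variables (so a conflict, hence a sync-preserving race, arises exactly when both coordinates are $1$), and \cref{lem:langn_correctness} adapts accordingly. Finally, a parameter nit: with $m=\Theta(\NumEvents/\log^2\NumEvents)$ coordinates of gadget size $\Theta(\log\NumEvents)$, the hairpin occupies only $\Theta(\NumEvents/\log\NumEvents)$ cells, so you do not get $\Theta(\NumEvents)$ hard cuts unless each coordinate gadget is padded to $\Theta(\log^2\NumEvents)$ cells.
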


Finally, we study the complexity of  general race prediction as a function of the number of reversals of synchronization operations.
Given our positive result in \cref{thm:zero_reversals}, can we relax our restriction of sync-preservation  while retaining a tractable definition of predictable races?
Our next theorem answers this question in negative.

\smallskip
\begin{restatable}{theorem}{thmwonehard}\label{thm:dichotomy}
Dynamic race prediction on traces with a single lock and two critical sections is $\W{1}$-hard parameterized by the number of threads.
\end{restatable}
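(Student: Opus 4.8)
The plan is to give a parameterized (fpt) reduction from \emph{Multicolored Independent Set} (MIS), which is $\W{1}$-hard parameterized by the number $k$ of color classes, to the decision problem of whether a trace using a single lock and two critical sections has a predictable race. The constructed trace will use $k + O(1)$ threads; since $\W{1}$ is closed under fpt-reductions and the thread count is bounded by a computable function of the source parameter (here essentially $k$), this establishes $\W{1}$-hardness parameterized by the number of threads. Recall the input: a graph $G = (V,E)$ with $V$ partitioned into $V_1, \dots, V_k$, asking for a set containing exactly one vertex per class and inducing no edge. The design philosophy is that the genuine combinatorial difficulty is carried by the \emph{reads-from} (last-write) constraints of correct reorderings, while the single lock with its two critical sections is what forces any race-exposing witness to be \emph{non} sync-preserving: by \cref{thm:zero_reversals} the sync-preserving variant is polynomial, so the hardness must live precisely in the one reversal these two critical sections permit.

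For the construction I would use one \emph{selector} thread $t_i$ per color class $V_i$, together with $O(1)$ auxiliary threads that host the single lock $\lk$, its two critical sections, and the designated conflicting target pair $(e_1,e_2)$. Each selector thread $t_i$ is a fixed intra-thread sequence encoding the vertices of $V_i$; because correct reorderings respect $\tho{\tr}$ and must be downward closed, the portion of $t_i$ that $\rho$ includes, pinned down by a downstream read whose value must be preserved ($\lw{\rho}(\cdot) = \lw{\tr}(\cdot)$), commits $\rho$ to exactly one vertex $v \in V_i$. For each edge $\{u,w\}$ with $u \in V_i$, $w \in V_j$ I would install an \emph{edge gadget} on a dedicated variable: a read/write pattern whose last-write requirement is satisfiable simultaneously with the selections of both $u$ and $w$ only at the cost of an ordering demand that conflicts with lock semantics (forcing the two critical sections to overlap, which is forbidden). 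Thus selecting two adjacent vertices makes the target race unwitnessable. Variables and events are polynomially many (edges are encoded with fresh variables), but the thread count stays $k+O(1)$, which is the only parameter that must be controlled.

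Correctness then splits into two directions. For the forward direction, from a multicolored independent set $\{v_1,\dots,v_k\}$ I would explicitly assemble a correct reordering $\rho$: include each selector prefix up to $v_i$, schedule the two critical sections in the \emph{reversed} order (the single allowed sync reversal) so that $e_1$ and $e_2$ become $\tr$-enabled, and verify that $\rho$ is downward closed under $\tho{\tr}$, respects lock semantics (no edge is induced, so no gadget demands overlapping critical sections), and preserves every read's last write. For the backward direction, from any correct reordering co-enabling $(e_1,e_2)$ I would read off, for each $i$, the vertex committed by the included prefix of $t_i$, and argue via the reads-from constraints plus non-overlap of the two critical sections on $\lk$ that this yields one vertex per class inducing no edge, i.e., a genuine solution.

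The main obstacle I anticipate is the backward direction under the severe restriction of \emph{one lock and two critical sections}: all $O(k^2)$ pairwise edge constraints must be routed through, and made to interact with, a single pair of critical sections and the reads-from relation, so the hard part is designing the gadgets so that \emph{no} "cheating" reordering can expose the target race without implicitly selecting a consistent, edge-free assignment. I expect the proof to hinge on a careful invariant showing that enabling $(e_1,e_2)$ forces the reversed ordering of the two critical sections, and that this reversal is compatible with the preserved last-writes of all edge gadgets exactly when the chosen vertices form an independent set; establishing that each selector is forced to commit to a \emph{single} vertex (no partial or multiple commitments survive downward-closure and last-write preservation) is the most delicate bookkeeping step.
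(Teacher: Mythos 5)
Your high-level reading of the theorem is accurate — the lock's only job is to force a single reversal, and the combinatorial hardness must live in the reads-from constraints — but the proposal has a gap at exactly the point you flag as "delicate bookkeeping," and it is not bookkeeping. The first problem is that selection by prefix inclusion cannot encode Multicolored Independent Set: since $\events{\rho}$ must be downward-closed with respect to $\tho{\tr}$, including the block of $v\in V_i$ forces inclusion of the blocks of all vertices listed before $v$ in $t_i$, so an edge gadget keyed on "both endpoint blocks are included" constrains every \emph{dominated} pair, not just the selected frontier pair. MIS needs a condition of the form "the frontier is exactly at $u$ and exactly at $w$," i.e., a condition on events being \emph{absent} from $\rho$ — but last-write preservation and lock semantics only ever constrain events that are present in $\rho$; absence is not testable by these primitives. (You also have to resolve the dual tension: if the selector events are not forced into $\rho$ by the closure of $\set{\prev{\tr}(e_1), \prev{\tr}(e_2)}$, a witness can simply drop all of them and expose the race regardless of the graph.)

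The second, deeper problem concerns the edge gadgets themselves. The observed trace $\tr$ satisfies every reads-from constraint, so no collection of constraints among events that are all present in $\rho$ can be contradictory on its own; the \emph{only} obligation a race witness has that $\tr$ does not already meet is the single reversal $\rel_2 \trord{} \acq_1$ forced by dropping the first release. Hence every edge conflict must be routed, \emph{conditionally on joint selection}, through that one reversal — and the natural gadgets fail: an rf edge from a selector block into the always-included part of the second critical section forces the selector write into $\rho$ via last-write closure (destroying conditionality), while an interference-based link leaves the disjunctive escape of ordering the interfering write after the read, which the witness can always take. The paper's proof supplies precisely the missing mechanism, and does so by a different route: it reduces not from MIS but from rf-poset realizability (already $\W{1}$-hard by~\cite{Mathur20}) through an intermediate problem, \emph{reverse} rf-poset realizability. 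There, a distinguished triplet $(\ov{\wt},\ov{\rd},\ov{\wt}')$ is placed with $\ov{\wt}'$ inside one critical section and $\ov{\wt}$ inside the other, so the reversal forces $\ov{\rd}\trord{}\ov{\wt}'$, and every ordering gadget of the poset is \emph{sandwiched} between $\ov{\rd}$ and $\ov{\wt}'$, so that this single forced ordering fans out and activates all of the (polynomially many) conditional constraints simultaneously, each one then disambiguated by its own reads-from requirement. Without an analogue of this fan-out step, your backward direction cannot be made to work, and constructing it amounts to rediscovering the paper's core argument rather than filling in details of yours.
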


Note that $\W{1}$-hardness implies $\NP$-hardness. The theorem has two important implications.
\begin{compactenum}
\item Any witness of predictable races in the setting of \cref{thm:dichotomy} is either a sync-preserving reordering, or reverses the order of a single pair of acquire events.
Thus, \cref{thm:zero_reversals} and \cref{thm:dichotomy} establish a \emph{tight dichotomy} on the tractability of the problem, based on the number of synchronization reversals:~the problem is as hard as in the general case for just $1$ reversal, while it is efficiently solvable for no reversals.
\item 
The general problem of dynamic race prediction was shown to be $\W{1}$-hard in~\cite{Mathur20}.
However, that proof requires traces with $\Omega(\NumEvents)$ critical sections, and hence it applies to traces that
essentially comprise of synchronization events entirely.
In contrast, the class of traces in \cref{thm:dichotomy} has the smallest level of synchronization possible, i.e., just a single lock and two critical sections on that lock.
Hence, \cref{thm:dichotomy} shows that \emph{any} amount of lock-based synchronization suffices to make the problem as hard as in the general case.
\end{compactenum}

Together, \cref{thm:zero_reversals}, \cref{thm:space_lowerbound} and \cref{thm:dichotomy} characterize \emph{exactly} the tractability boundary of race prediction, and show that our algorithm is \emph{time and space optimal} for the tractable side.


\section{Detecting Synchronization-Preserving Races}
\seclabel{detection}

In this section, we discuss our algorithm $\ZeroRevAlgo$ for detecting
sync-preserving data races. 
The complete algorithm is presented in~\secref{full-algo}.
The algorithm may appear complex at first glance and 
to make the exposition simple, we first present a high-level overview of the 
algorithm in~\secref{overview}.
In the overview, we highlight important observations and
 algorithmic insights for solving smaller subproblems of the main problem
of sync-preserving race prediction.
\secref{overview-given-pair} and~\secref{overview-existence}
present details of the algorithms for the smaller subproblems,
and pave the way for the final algorithm in~\secref{full-algo}.
In Section~\ref{subsec:space_lowerbound}, 
we present a matching space lowerbound result for detecting sync-preserving races,
thereby showing the optimality of our algorithm.


\subsection{Insights and Overview of the Algorithm}
\seclabel{overview}

Our algorithm, $\ZeroRevAlgo$, relies on several important observations that are 
crucial for detecting sync-preserving races in linear time.
In order to present these observations, it is helpful to 
define intermediate subproblems.
\begin{problem}[Sync-Preserving Race Prediction Given Pair]
\problabel{zr-race-prediction-given-pair}
Given a trace $\tr$ and a pair of conflicting events $(e_1, e_2)$
of $\tr$, determine if $(e_1, e_2)$ is a sync-preserving data race of $\tr$.
\end{problem}
\begin{problem}[Sync-Preserving Race Prediction Given Event and Thread]
\problabel{zr-race-prediction-given-event}
Given a trace $\tr$, an event $e$ in $\tr$ and a thread $t \neq \ThreadOf{e}$, 
check if there is an event $e' \trord{\tr} e$ such that $\ThreadOf{e'} = t$
and $(e', e)$ constitutes a sync-preserving race of $\tr$.
\end{problem}

Observe that a trace with $\NumEvents$ events can have $O(\NumEvents^2)$ conflicting
pairs of events. 
Thus, an algorithm for \probref{zr-race-prediction-given-pair}
that runs in time $O(T(\NumEvents))$ can be used to obtain an
algorithm for~\probref{zr-race-prediction-given-event}
(resp. \probref{zr-race-prediction}) that runs in time 
$O(\NumEvents\cdot T(\NumEvents))$  (resp. $O(\NumEvents^2\cdot T(\NumEvents))$)
by checking if every other event of the given thread $t$
that conflicts with $e$ is also in race with $e$
(resp. every conflicting pair of events is a race). 
We will, however, present algorithms for all three problems that run in $O(\NumEvents)$ time.

\subsubsection{Efficiently solving~\probref{zr-race-prediction-given-pair}}
Our first important observation towards a full fledged solution 
to~\probref{zr-race-prediction-given-pair}
is that when checking for the
existence of a sync-preserving reordering
(of a trace $\tr$) that witnesses a race on a given pair $(e_1, e_2)$, 
it, suffices to only search for those reorderings $\rho$ of $\tr$ which 
impose the same order as in $\tr$, on \emph{all of its events},
and not just on the critical sections.
We formalize this in~\lemref{trace-order-suffices}. 
\begin{restatable}{lemma}{lemtraceordersuffices}
\lemlabel{trace-order-suffices}
If $(e_1, e_2)$ is a sync-preserving race of $\tr$, then there is a
correct reordering $\rho$ of $\tr$ such that
both $e_1, e_2$ are $\tr$-enabled in $\rho$ and
$\trord{\rho} \subseteq \trord{\tr}$.
\end{restatable}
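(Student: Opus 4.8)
The plan is to take an arbitrary sync-preserving correct reordering $\rho$ that witnesses the race $(e_1,e_2)$ and \emph{straighten} it so that its events appear exactly in trace order. Concretely, I would define $\rho'$ to be the unique sequence with $\events{\rho'}=\events{\rho}$ whose order is the restriction of $\trord{\tr}$ to $\events{\rho}$, so that $\trord{\rho'}=\trord{\tr}\cap(\events{\rho}\times\events{\rho})$; this yields $\trord{\rho'}\subseteq\trord{\tr}$ for free. It then remains to verify that $\rho'$ is a genuine correct reordering of $\tr$ in which $e_1,e_2$ are still $\tr$-enabled.

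Since $\tr$-enabledness and downward closure with respect to $\tho{\tr}$ depend only on the event set $\events{\rho'}=\events{\rho}$, both are inherited immediately from $\rho$. The condition $\tho{\rho'}\subseteq\tho{\tr}$ is immediate from $\trord{\rho'}\subseteq\trord{\tr}$, since same-thread events are ordered in $\rho'$ exactly as in $\tr$. For the last-writer condition, fix a read $e\in\events{\rho'}$ and let $w=\lw{\tr}(e)$. If $w\neq\bot$, then because $\rho$ is a correct reordering we have $w\in\events{\rho}=\events{\rho'}$; as $w\trord{\tr}e$ and no write to the same variable lies $\trord{\tr}$-between $w$ and $e$, the same holds under the restriction $\trord{\rho'}$, so $\lw{\rho'}(e)=w=\lw{\tr}(e)$. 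The case $w=\bot$ is symmetric, using $\events{\rho'}\subseteq\events{\tr}$: no write to that variable precedes $e$ in $\tr$, hence none does in $\rho'$.

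The only genuinely delicate point --- and the step I expect to be the main obstacle --- is showing that $\rho'$ respects lock semantics, i.e., that reordering into $\trord{\tr}$ does not create overlapping critical sections. This is exactly where sync-preservation is used. Observe first that $\proj{\rho'}{\lk}$ is a subsequence of $\proj{\tr}{\lk}$, and, by downward closure, every release in $\events{\rho}$ retains its matching acquire. The key claim is that every acquire in $\events{\rho}$ except the $\trord{\tr}$-maximal one also retains its matching release: if $a,a'\in\acquires{\tr}(\lk)\cap\events{\rho}$ with $a\stricttrord{\tr}a'$, then sync-preservation gives $a\trord{\rho}a'$, and since $\rho$ is itself a valid trace, mutual exclusion on $\lk$ forces $\match{\tr}(a)$ to occur --- and hence to lie in $\events{\rho}$ --- between $a$ and $a'$ in $\rho$. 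Consequently $\proj{\rho'}{\lk}$ consists of a run of complete, non-overlapping critical sections followed by at most one dangling acquire, which is a prefix of the lock regular expression; hence $\rho'$ obeys lock semantics. Collecting these facts shows that $\rho'$ is a correct reordering with $\trord{\rho'}\subseteq\trord{\tr}$ in which $e_1,e_2$ remain $\tr$-enabled, as required.
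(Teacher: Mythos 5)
Your proof is correct and follows essentially the same route as the paper's: both relinearize the events of the sync-preserving witness according to $\trord{\tr}$ and verify that the resulting sequence is still a correct reordering in which $e_1,e_2$ are $\tr$-enabled, with sync-preservation invoked precisely to ensure that on each lock only the $\trord{\tr}$-last acquire can be unmatched, so the projection to each lock remains a prefix of the lock regular expression. Your spelled-out argument for why every non-maximal acquire retains its matching release (mutual exclusion in the witness plus sync-preservation) is just a more explicit rendering of the step the paper states tersely.
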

The implication of~\lemref{trace-order-suffices} is the following.
When we are searching for
a correct reordering $\rho$ of $\tr$ that witnesses a race $(e_1, e_2)$,
and if we already have access to the set of events 
$S \subseteq \events{\tr}$ of a candidate reordering $\rho$,
a simple check suffices --- does $S$ form a correct reordering of $\tr$
when linearized according to $\trord{\tr}$?
In other words, we do not need to enumerate and check all the
(exponentially many) permutations of events in $S$.
Thus, \probref{zr-race-prediction-given-pair} --- `search for a 
sync-preserving correct reordering $\rho$' --- reduces to a simpler 
problem --- `search for an appropriate set of events'.
Of course, not all sets $S \subseteq \events{\tr}$ of events can 
be linearized (according to $\trord{\tr}$)
to obtain a correct reordering of $\tr$.
At the very least, $S$ should satisfy some
sanity conditions which we outline next.

\begin{definition}[Thread-Order and Last-Write Closure]
\deflabel{too-closure}
Let $\tr$ be a trace.
A set $S \subseteq \events{\tr}$ is said to be
$(\tho{\tr}, \lw{\tr})$-closed if
\begin{enumerate*}[label=(\alph*)]
\item~\itmlabel{to-closure} $S$ is downward-closed with respect to $\tho{\tr}$, and
\item~\itmlabel{lw-closure} for any read event $r \in \events{\tr}$,
if $r \in S$ and if $\lw{\tr}(r)$ exists, 
then $\lw{\tr}(r) \in S$.
\end{enumerate*}\\
The $(\tho{\tr}, \lw{\tr})$-closure of a set $S \subseteq \events{\tr}$, denoted
$\TOOClosure{\tr}(S)$ is the smallest set $S' \subseteq \events{\tr}$ 
such that $S \subseteq S'$ and $S'$ is $(\tho{\tr}, \lw{\tr})$-closed.
\end{definition}
We remark that any correct reordering $\rho$ of $\tr$ that contains 
events in the set $S$ must also contain all the events in $\TOOClosure{\tr}(S)$.

\begin{definition}[Sync-Preserving Closure]
\deflabel{zr-closure}
Let $\tr$ be a trace.
A set $S \subseteq \events{\tr}$ is said to be sync-preserving closed if
\begin{compactenum}[(a)]
\item~\itmlabel{too-closure} $S$ is $(\tho{\tr}, \lw{\tr})$-closed, and
\item~\itmlabel{crit-closure} for any two acquire events $a_1, a_2 \in \acquires{\tr}(\lk)$ with $a_1 \trord{\tr} a_2$, 
if both $a_1, a_2 \in S$, then $\match{\tr}(a_1) \in S$
\end{compactenum}
The sync-preserving closure of a set $S \subseteq \events{\tr}$, denoted
$\ZRClosure{\tr}(S)$ is the smallest set $S' \subseteq \events{\tr}$ 
such that $S \subseteq S'$ and $S'$ is sync-preserving closed.
\end{definition}
Intuitively, the set $S' = \ZRClosure{\tr}(S)$
captures the additional set of events that must be present in
any sync-preserving correct reordering $\rho$ of $\tr$
given that $\rho$ contains all events in $S$.
First, any correct reordering of $\tr$ containing $S$ will contain $\TOOClosure{\tr}(S)$
and thus $\TOOClosure{\tr}(S) \subseteq S'$
(Condition~\itmref{too-closure}).
Second, if a correct reordering $\rho$ is sync-preserving
and contains two acquires $a_1 \trord{\tr} a_2$ on the same lock $\lk$,
then we must also have $a_1 \trord{\rho} a_2$.
Then, in order to ensure well-formedness of $\rho$,
$\crit{\tr}(a_1)$ must also 
finish entirely before $a_2$ in $\rho$, and thus $\rho$ must 
contain $\match{\tr}(a_1)$ (Condition~\itmref{crit-closure}).

For two events $e_1, e_2\in \events{\tr}$, we define 
$$\ZRIdeal{\tr}(e_1, e_2) = \ZRClosure{\tr}(\set{\prev{\tr}(e_1)} \cup \set{\prev{\tr}(e_2)}).$$
Here, we use $\prev{\tr}(e)$ to denote the
last event $e'$ in $\tr$ such that $e' \tho{\tr} e$;
if no such event exists, then $\prev{\tr}(e) = \bot$
(and further we let $\set{\bot} = \emptyset$).
In essence, $\ZRIdeal{\tr}(e_1, e_2)$ contains the
necessary set of events that must be present in any sync-preserving correct
reordering that witnesses the race $(e_1, e_2)$.
We next show that, in fact, it is also a sufficient set of events,
given that it is disjoint from $\set{e_1, e_2}$.

\begin{restatable}{lemma}{zridealdisjointness}
\lemlabel{zrideal-disjointness}
$(e_1, e_2)$ is a sync-preserving race of $\tr$ iff $\set{e_1, e_2} \cap \ZRIdeal{\tr}(e_1, e_2) = \emptyset$.
\end{restatable}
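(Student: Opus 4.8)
The plan is to prove both directions of the biconditional by treating $S = \ZRIdeal{\tr}(e_1,e_2)$ as the \emph{canonical} set of events that any sync-preserving witness of the race must contain. The forward direction ($\Leftarrow$) will be constructive: assuming $\set{e_1,e_2}\cap S=\emptyset$, I will build an explicit sync-preserving correct reordering $\rho$ in which $e_1,e_2$ are $\tr$-enabled. The backward direction ($\Rightarrow$) will be an extremal argument: given any witnessing reordering $\rho$, I will show that $\events{\rho}$ is itself sync-preserving closed and contains the seeds $\prev{\tr}(e_1),\prev{\tr}(e_2)$, so minimality of the closure forces $S\subseteq\events{\rho}$, and enabledness then forbids $e_1,e_2$ from lying in $S$.

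For the $\Leftarrow$ direction, I take $\rho$ to be the linearization of $S$ according to $\trord{\tr}$ (justified as the only candidate worth checking by \lemref{trace-order-suffices}). I then verify the three requirements of a correct reordering: $\events{\rho}=S\subseteq\events{\tr}$ and thread-order downward-closure come directly from closure Condition~\itmref{too-closure}, while $\tho{\rho}\subseteq\tho{\tr}$ holds because $\rho$ inherits the ambient order $\trord{\tr}$. For the last-write condition, I use that $S$ is $\lw{\tr}$-closed: since the relative order of writes and reads in $\rho$ is exactly that of $\tr$ restricted to $S$, and no intervening conflicting write lay between $\lw{\tr}(r)$ and $r$ in $\tr$, none is present in $\rho$, giving $\lw{\rho}(r)=\lw{\tr}(r)$. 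Sync-preservation is immediate because $\trord{\rho}$ is a sub-order of $\trord{\tr}$. The delicate point is \emph{lock well-formedness}: I argue that $\proj{\rho}{\lk}$ is a valid prefix of the lock language by combining thread-order closure (every release in $S$ carries its acquire in $S$) with Condition~\itmref{crit-closure} (between any two acquires of $S$ on $\lk$, the matching release of the earlier one lies in $S$ and precedes the later acquire). Finally, $e_1,e_2\notin S$ is exactly the hypothesis, while their thread-predecessors lie in $S$ because $\prev{\tr}(e_i)\in S$ and $S$ is $\tho{\tr}$-downward-closed; hence both events are $\tr$-enabled in $\rho$.

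For the $\Rightarrow$ direction, let $\rho$ be a sync-preserving correct reordering witnessing the race. Since $e_1,e_2$ are $\tr$-enabled, their thread-predecessors, in particular $\prev{\tr}(e_1)$ and $\prev{\tr}(e_2)$, lie in $\events{\rho}$. I then check that $\events{\rho}$ satisfies both closure conditions: thread-order downward-closure and $\tho{\rho}\subseteq\tho{\tr}$ are part of being a correct reordering, and $\lw{\tr}$-closure follows since $\lw{\rho}(r)=\lw{\tr}(r)\in\events{\rho}$ for every read $r\in\events{\rho}$. For Condition~\itmref{crit-closure}, given acquires $a_1\trord{\tr}a_2$ with $a_1,a_2\in\events{\rho}$, sync-preservation yields $a_1\trord{\rho}a_2$, and lock well-formedness of $\rho$ forces a matching release of $a_1$ to occur in $\rho$ before $a_2$; since matching is determined by the per-thread structure preserved by $\rho$, this release is exactly $\match{\tr}(a_1)$, so $\match{\tr}(a_1)\in\events{\rho}$. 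Thus $\events{\rho}$ is sync-preserving closed and contains both seeds, so by minimality $S=\ZRIdeal{\tr}(e_1,e_2)\subseteq\events{\rho}$. As $e_1,e_2$ are enabled, they are absent from $\events{\rho}$, hence from $S$, giving $\set{e_1,e_2}\cap S=\emptyset$.

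The main obstacle in both directions is the careful handling of lock semantics: in $\Leftarrow$ I must confirm that the $\trord{\tr}$-linearization of $S$ never produces overlapping critical sections, and in $\Rightarrow$ I must justify the identification $\match{\rho}(a_1)=\match{\tr}(a_1)$, i.e., that the matching release in the reordering coincides with the one in $\tr$. Both hinge on the fact that $\rho$ respects $\tho{\tr}$ and preserves the order of same-lock acquires, so the balanced acquire/release structure per thread and lock is inherited intact; making this precise is the crux, whereas the read/write and downward-closure conditions are routine consequences of the closure definition.
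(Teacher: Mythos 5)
Your proof is correct and takes essentially the same approach as the paper's: for ($\Leftarrow$) both linearize $\ZRIdeal{\tr}(e_1,e_2)$ according to $\trord{\tr}$ and check that the result is a well-formed, sync-preserving correct reordering in which $e_1,e_2$ are enabled, and for ($\Rightarrow$) both rest on the fact that the event set of any sync-preserving witness must be sync-preserving closed and contain $\prev{\tr}(e_1),\prev{\tr}(e_2)$, hence must contain the ideal, contradicting enabledness. The only difference is presentational: you invoke minimality of the closure explicitly (and spell out the lock well-formedness and $\match{\rho}(a_1)=\match{\tr}(a_1)$ details that the paper compresses into ``observe that''), whereas the paper argues the ($\Rightarrow$) direction by a case analysis on how $e_1$ enters the closure.
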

\lemref{zrideal-disjointness} gives us
a straightforward algorithm
for~\probref{zr-race-prediction-given-pair} --- compute
$I = \ZRIdeal{\tr}(e_1, e_2)$ and check if neither $e_1, e_2\not \in I$.
In~\secref{overview-given-pair} we show how to perform this computation in linear time.

\subsubsection{Efficiently Solving~\probref{zr-race-prediction-given-event}}
As noted before, a linear time
algorithm for~\probref{zr-race-prediction-given-pair}
guarantees a \emph{quadratic} time algorithm for~\probref{zr-race-prediction-given-event}.
In order to design a more efficient \emph{linear time}
algorithm, we will exploit \emph{monotonicity} of
$\ZRIdeal{\tr}(\cdot, \cdot)$, which we formalize next.
\begin{restatable}{lemma}{lemmonotonicity}
\lemlabel{monotonicity}
Let $\tr$ be a trace and let $e_1, e_2, e'_1, e'_2 \in \events{\tr}$
such that $e_1 \tho{\tr} e'_1$ and $e_2 \tho{\tr} e'_2$.
Then, $\ZRIdeal{\tr}(e_1, e_2) \subseteq \ZRIdeal{\tr}(e'_1, e'_2)$.
\end{restatable}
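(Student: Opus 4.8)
The plan is to reduce the claim to a single set inclusion and then discharge it using the downward-closure built into \defref{zr-closure}. Write $S = \set{\prev{\tr}(e_1)} \cup \set{\prev{\tr}(e_2)}$, so that $\ZRIdeal{\tr}(e_1,e_2) = \ZRClosure{\tr}(S)$ by definition. Since $\ZRClosure{\tr}(S)$ is the \emph{smallest} sync-preserving closed set containing $S$, and $\ZRIdeal{\tr}(e'_1,e'_2)$ is itself a sync-preserving closed set (being a closure, by \defref{zr-closure}) that contains $\prev{\tr}(e'_1)$ and $\prev{\tr}(e'_2)$, it would follow that $\ZRClosure{\tr}(S) \subseteq \ZRIdeal{\tr}(e'_1,e'_2)$ as soon as we know $S \subseteq \ZRIdeal{\tr}(e'_1,e'_2)$. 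Thus the whole lemma reduces to proving the inclusion
\[ \set{\prev{\tr}(e_1)} \cup \set{\prev{\tr}(e_2)} \subseteq \ZRIdeal{\tr}(e'_1,e'_2). \]

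The technical core is the following predecessor-monotonicity fact, which I would establish first: if $e_1 \tho{\tr} e'_1$ and $\prev{\tr}(e_1) \neq \bot$, then $\prev{\tr}(e_1) \tho{\tr} \prev{\tr}(e'_1)$. If $e_1 = e'_1$ this is immediate. Otherwise $e_1 \stricttho{\tr} e'_1$, so $e_1$ and $e'_1$ lie in the same thread with $e_1$ strictly earlier; then $\prev{\tr}(e_1)$ is an event of that thread occurring strictly before $e'_1$. Since $\prev{\tr}(e'_1)$ is by definition the \emph{last} event of that thread strictly preceding $e'_1$, the event $\prev{\tr}(e_1)$ must occur at or before $\prev{\tr}(e'_1)$ in trace order, which (being in the same thread) gives $\prev{\tr}(e_1) \tho{\tr} \prev{\tr}(e'_1)$.

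With this in hand I would finish by invoking downward-closure. The set $\ZRIdeal{\tr}(e'_1,e'_2)$ contains $\prev{\tr}(e'_1)$ and $\prev{\tr}(e'_2)$ by construction, and by \defref{zr-closure} it is $(\tho{\tr},\lw{\tr})$-closed and hence downward-closed with respect to $\tho{\tr}$. If $\prev{\tr}(e_1) \neq \bot$, the predecessor fact gives $\prev{\tr}(e_1) \tho{\tr} \prev{\tr}(e'_1) \in \ZRIdeal{\tr}(e'_1,e'_2)$, so downward-closure yields $\prev{\tr}(e_1) \in \ZRIdeal{\tr}(e'_1,e'_2)$; if $\prev{\tr}(e_1) = \bot$ it contributes nothing to $S$. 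The same argument applies to $e_2$, which establishes the displayed inclusion and hence the lemma. I expect the only subtle point to be the careful treatment of the $\bot$ base cases together with the exact reading of $\prev{\tr}$ (the immediate thread-predecessor, i.e.\ the last event strictly $\tho{\tr}$-below its argument) in the predecessor-monotonicity step; once that is pinned down, the remaining reasoning is a direct appeal to the closure definitions.
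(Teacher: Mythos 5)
Your proposal is correct and follows essentially the same route as the paper's proof: both reduce the claim to showing that the generators $\prev{\tr}(e_1),\prev{\tr}(e_2)$ lie in $\ZRIdeal{\tr}(e'_1,e'_2)$ via thread-order relations and $\tho{\tr}$-downward-closure, and then invoke minimality of the closure operator. Your write-up is simply a more careful elaboration (explicitly handling the $e_1=e'_1$ and $\bot$ cases) of the paper's terse three-observation argument.
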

Our linear time algorithm for~\probref{zr-race-prediction-given-event}
 exploits~\lemref{monotonicity} as follows.
Suppose we are checking if a given event $e$ in the given trace $\tr$
is in sync-preserving race with some earlier conflicting event of thread $t$.
To accomplish this, we can scan $\tr$ and enumerate the list $L$
of events that belong to $t$ and conflict with $e$.
When checking for a race with the first such event $e'_\text{first}$, 
we compute $I_\text{first} = \ZRIdeal{\tr}(e'_\text{first}, e)$.
If a race is found, we are done.
Otherwise, we analyze the next event $e'_\text{next}$ in $L$
and compute $I_\text{next} = \ZRIdeal{\tr}(e'_\text{next}, e)$.
Here~\lemref{monotonicity} ensures that $I_\text{first} \subseteq I_\text{next}$.
Our algorithm exploits this observation 
by computing the latter set $I_\text{next}$
incrementally, spending time that is proportional only
to the number of \emph{extra} events (i.e., events in $I_\text{next}\setminus I_\text{first}$).
This principle is applied repeatedly to subsequent events of $L$,
giving us an overall linear time algorithm (\secref{overview-existence}).

\subsubsection{Efficiently solving~\probref{zr-race-prediction}}
A final ingredient in our incremental linear time algorithm
for~\probref{zr-race-prediction} is the 
following observation which builds on~\lemref{monotonicity}.
\begin{restatable}{lemma}{lemconsumed}
\lemlabel{consumed-events}
Let $\tr$ be a trace and let $e_1, e_2, e'_2 \in \events{\tr}$
such that $e_1 \trord{\tr} e_2 \tho{\tr} e'_2$, 
$e_1 \conf e_2$ and $e_1 \conf e'_2$.
If $(e_1, e_2)$ is not a sync-preserving race, then $(e_1, e'_2)$
is also not a sync-preserving race of $\tr$.
\end{restatable}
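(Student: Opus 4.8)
The plan is to prove the contrapositive: assuming $(e_1, e'_2)$ is a sync-preserving race, I will build a sync-preserving correct reordering witnessing that $(e_1, e_2)$ is a sync-preserving race as well. The guiding idea is to take a \emph{trace-ordered} witness for $(e_1, e'_2)$ and simply truncate it just before $e_2$.

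First I would invoke \lemref{trace-order-suffices} on the pair $(e_1, e'_2)$: since it is a sync-preserving race, there is a correct reordering $\rho$ in which both $e_1$ and $e'_2$ are $\tr$-enabled and $\trord{\rho} \subseteq \trord{\tr}$. Because $\trord{\rho}$ and $\trord{\tr}|_{\events{\rho}}$ are both total orders on the finite set $\events{\rho}$, this containment forces $\trord{\rho} = \trord{\tr}|_{\events{\rho}}$; hence $\rho$ lists its events in trace order and is in particular sync-preserving. Moreover, since $e_2 \stricttho{\tr} e'_2$ and $e'_2$ is enabled in $\rho$, every thread-predecessor of $e'_2$ — in particular $e_2$ — lies in $\events{\rho}$.

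Next I would set $\rho'$ to be the prefix of $\rho$ consisting of exactly those events that are $\trord{\tr}$-before $e_2$, and check that $\rho'$ is again a sync-preserving correct reordering. This is the routine part: $\events{\rho'}$ is $\tho{\tr}$-downward-closed (if $g \tho{\tr} f$ and $f \in \rho'$, then $g \in \events{\rho}$ and $g \trord{\rho} f$, so $g \in \rho'$); every read $r \in \rho'$ observes the same last write as in $\rho$, hence as in $\tr$, because $\rho'$ retains all $\trord{\rho}$-predecessors of $r$; and lock-wellformedness together with the sync-preservation condition are inherited since $\trord{\rho'} = \trord{\tr}|_{\events{\rho'}}$ and prefixes of lock-valid sequences are lock-valid.

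Finally I would verify that both $e_1$ and $e_2$ are $\tr$-enabled in $\rho'$. The event $e_2$ is excluded by construction, and all of its thread-predecessors precede it in $\trord{\tr}$ and lie in $\rho$, hence in $\rho'$. For $e_1$, it is absent from $\rho$ (being enabled there) and therefore from $\rho'$; and each thread-predecessor $p$ of $e_1$ satisfies $p \trord{\tr} e_1 \trord{\tr} e_2$, so since $p, e_2 \in \events{\rho}$ are compared by $\trord{\rho} = \trord{\tr}$ we get $p \trord{\tr} e_2$ and thus $p \in \rho'$. This last step is the crux, and the hypotheses pull their exact weight here: $e_1 \trord{\tr} e_2$ is precisely what guarantees that truncating at $e_2$ cannot discard a predecessor of $e_1$ (and $e_1 \conf e_2$ ensures such a predecessor is never $e_2$ itself). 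Without the trace-order-consistency of $\rho$ from \lemref{trace-order-suffices}, a predecessor of $e_1$ could have been scheduled after $e_2$ and lost. I would also remark that an ideal-based argument using \lemref{monotonicity} and \lemref{zrideal-disjointness} disposes immediately of the case $e_1 \in \ZRIdeal{\tr}(e_1, e_2)$, but the residual case $e_2 \in \ZRIdeal{\tr}(e_1, e_2)$ does not follow from monotonicity alone, which is why the direct reordering construction above is the cleaner route.
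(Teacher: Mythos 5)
Your proof is correct, but it follows a genuinely different route from the paper's. The paper never writes out a proof of this lemma (it has no restated proof in the appendix); the intended argument, as the surrounding text says, ``builds on \lemref{monotonicity}'': since $e_1 \trord{\tr} e_2$, every event placed into $\ZRIdeal{\tr}(e_1,e_2)$ is strictly $\trord{\tr}$-before $e_2$ --- thread-order and last-write closure only add trace-earlier events, and the lock rule adds $\match{\tr}(a_1)$, which by lock semantics precedes the second acquire $a_2$, itself before $e_2$ --- so $e_2 \notin \ZRIdeal{\tr}(e_1,e_2)$ (this is exactly the remark the paper makes after \algoref{check-given-pair} in \secref{overview-given-pair}). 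Hence ``$(e_1,e_2)$ is not a sync-preserving race'' is, by \lemref{zrideal-disjointness}, equivalent to $e_1 \in \ZRIdeal{\tr}(e_1,e_2)$; \lemref{monotonicity} gives $\ZRIdeal{\tr}(e_1,e_2) \subseteq \ZRIdeal{\tr}(e_1,e'_2)$, so $e_1 \in \ZRIdeal{\tr}(e_1,e'_2)$ and \lemref{zrideal-disjointness} concludes. This means the closing remark in your proposal is mistaken: the ``residual case'' $e_2 \in \ZRIdeal{\tr}(e_1,e_2)$ that you thought blocks the ideal-based route is vacuous, so that route does go through in two lines. Your alternative --- take the trace-ordered witness for $(e_1,e'_2)$ from \lemref{trace-order-suffices}, truncate it strictly before $e_2$, and verify that the prefix is again a sync-preserving correct reordering with both $e_1$ and $e_2$ enabled --- is nonetheless sound: the equality $\trord{\rho}=\trord{\tr}|_{\events{\rho}}$, the prefix-closure checks, and the use of $e_1 \trord{\tr} e_2$ (plus $e_1 \conf e_2$ forcing distinct threads) to ensure no thread-predecessor of $e_1$ is cut off are all correct. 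What each approach buys: yours is elementary and self-contained, needing only \lemref{trace-order-suffices}; the paper's is an immediate corollary of lemmas it needs anyway and stays in the vocabulary of ideals, which is what \algoref{overall} actually maintains incrementally --- the form in which the lemma is ultimately applied.
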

Intuitively, this observation suggests the following.
Suppose that, when looking for a sync-preserving race, 
the algorithm determines that $e_2$ is not in race with
any earlier conflicting event.
Then, for an event $e'_2$
(that appears later in the thread of $e_2$), we only need to investigate
if $e'_2$ is in race with conflicting events $e'_1$ that appear
\emph{after} $e_2$ in the trace (i.e., $e_2 \trord{\tr} e'_1 \trord{\tr} e_2$), 
instead of additionally looking for races $(e''_1, e'_2)$ where $e''_1 \trord{} e_2$.

\textbf{High-level overview of \ZeroRevAlgo.}
Equipped with~\lemref{monotonicity} and~\lemref{consumed-events},
we now describe our incremental algorithm 
for~\probref{zr-race-prediction} that works in linear time.
For ease of exposition, let us focus on the question --- is there
a write-write race on some fixed variable $x \in \vars{}$
when accessed in two fixed threads $t_1, t_2 \in \vars{}$.
The algorithm scans the trace in a streaming forward pass
and analyzes every event $e = \ev{t_2, \wt(x)}$, 
checking if there is an earlier conflicting event 
$e' = \ev{t_1, \wt(x)} \trord{} e$ so that $(e', e)$
is a sync-preserving race.
In doing so, it computes $I = \ZRIdeal{\tr}(e', e)$ in linear time
and checks if $e' \not\in I$.
If not, $(e', e)$ is not a race and the algorithm checks if
there is a different event $e'_\text{next} \trord{} e$ so that 
$(e'_\text{next}, e)$ is a race.
This continues
until there are no earlier events remaining that conflict with $e$.
Each time, the ideal computation is performed incrementally, by using
the previously computed ideals.
After this, the algorithm moves to the next write event $e_\text{next} = \ev{t_2, \wt(x)}$ 
in $t_2$ and checks if it is in race with some earlier event 
$e''$ of thread $t_1$,
where this time, $e''$ appears in the trace after the
previously discarded event $e$ of $t_2$.
Again, the closed set $\ZRIdeal{\tr}(e'', e_\text{next})$
is computed incrementally.
We show that all this incremental computation can be performed efficiently
and present an outline for our algorithm for~\probref{zr-race-prediction} 
in~\secref{full-algo}.

Next, we present high-level descriptions of the intermediate steps
that we outlined above, and discuss important algorithmic insights and
data-structures that help achieve efficiency.


\subsection{Checking if a given pair of conflicting events is a sync-preserving race}
\seclabel{overview-given-pair}


\begin{algorithm*}[h]
\Input{Trace $\tr$, Conflicting events $e_1$ and $e_2$ with $e_1 \trord{\tr} e_2$}
\BlankLine
\myfun{\fixpoint{$\tr$, $e_1$, $e_2$, $I_0$}}{
    $I \gets I_0 \cup \TOOClosure{\tr}(\set{\prev{\tr}(e_1)}) \cup \TOOClosure{\tr}(\set{\prev{\tr}(e_2)})$ \; \linelabel{init}
    \Repeat{$I$ does not change}{ \linelabel{repeat_start}
        \If{$\exists \lk \in \locks{\tr}$, $\exists a_1,a_2 \in \acquires{\tr}(\lk)$ such that $a_1 \trord{\tr} a_2$ and $a_1, a_2 \in I$}{ \linelabel{acquirecheck}
            $I \gets I \cup \TOOClosure{\tr}(\set{\match{\tr}(a_1)})$ \; \linelabel{update_ideal}
        }
    } \linelabel{repeat_end}
    \Return $I$ \;
}
$I \gets$ \fixpoint{$\tr$, $e_1$, $e_2$, $\emptyset$} \;
\If{$e_1 \not\in I$}{ \linelabel{check_containment}
    \declare `race'
}
\caption{\textit{Checking if a given conflicting pair constitutes a sync-preserving race}}
\algolabel{check-given-pair}
\end{algorithm*}
\normalsize

\algoref{check-given-pair} outlines
our solution to~\probref{zr-race-prediction-given-pair}
(check if a given 
pair of events $(e_1, e_2)$ is a sync-preserving race of $\tr$).
This algorithm computes the closure $\ZRIdeal{\tr}(e_1, e_2)$
in an iterative fashion and checks if it 
contains neither $e_1$ nor $e_2$
(see~\lemref{zrideal-disjointness}).
We remark that when $e_1 \trord{\tr} e_2$, \defref{zr-closure}
ensures that $e_2 \not\in \ZRIdeal{\tr}(e_1, e_2)$.
Consequently, the check `$e_1 \not\in \ZRIdeal{\tr}(e_1, e_2)$'
(\lineref{check_containment} in~\algoref{check-given-pair}) is equivalent to
the condition `$\set{e_1, e_2} \cap \ZRIdeal{\tr}(e_1, e_2) = \emptyset$'
(due to~\lemref{zrideal-disjointness}).
The function \closurefun performs a fixpoint
computation, starting from the set 
$I = \bigcup_{i\in\set{1,2}}\TOOClosure{\tr}(\set{\prev{\tr}(e_i)})$
(when $I_0 = \emptyset$).
The correctness of the algorithm follows from the correctness of
the function \closurefun, which we formalize below.
\begin{lemma}
\lemlabel{computeLCCClosure-correctness}
Let $\tr$ be a trace, $e_1, e_2 \in \events{\tr}$ and
$I_0 \subseteq \events{\tr}$ be a $(\tho{\tr}, \lw{\tr})$-closed set.
Let $I$ be the set returned by \closurefun\!\texttt{(}$\tr, e_1, e_2, I_0$\texttt{)}
in~\algoref{check-given-pair}. Then, 
$I = \ZRClosure{\tr}(I_0 \cup \set{\prev{\tr}(e_1)} \cup \set{\prev{\tr}(e_2)})$.
\end{lemma}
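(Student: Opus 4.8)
The plan is to prove the two inclusions $I \subseteq \ZRClosure{\tr}(S_0)$ and $\ZRClosure{\tr}(S_0) \subseteq I$ separately, where $S_0 = I_0 \cup \set{\prev{\tr}(e_1)} \cup \set{\prev{\tr}(e_2)}$ is the seed set and $C = \ZRClosure{\tr}(S_0)$ its sync-preserving closure. Throughout I would rely on two elementary facts about the $(\tho{\tr}, \lw{\tr})$-closure: first, $\TOOClosure{\tr}(X)$ is by definition $(\tho{\tr}, \lw{\tr})$-closed for every $X$; second, a finite union of $(\tho{\tr}, \lw{\tr})$-closed sets is again $(\tho{\tr}, \lw{\tr})$-closed, since both downward-closedness and $\lw{\tr}$-closedness are preserved under union (membership of an element in the union places it in one operand, which already carries the required witnesses). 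As $I_0$ is assumed $(\tho{\tr}, \lw{\tr})$-closed, these facts show that the value of $I$ right after \lineref{init} is $(\tho{\tr}, \lw{\tr})$-closed, and that each update on \lineref{update_ideal} unions in another $(\tho{\tr}, \lw{\tr})$-closed set. This yields a first invariant (Invariant~A): at every point in the execution $I$ is $(\tho{\tr}, \lw{\tr})$-closed, i.e.\ Condition~\itmref{too-closure} of \defref{zr-closure} holds for $I$.

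Next I would establish, by induction on the loop iterations, a second invariant (Invariant~B): $I \subseteq C$ always holds. For the base case, $I_0 \subseteq S_0 \subseteq C$ and $\set{\prev{\tr}(e_i)} \subseteq S_0 \subseteq C$; since $C$ is $(\tho{\tr}, \lw{\tr})$-closed and contains $\set{\prev{\tr}(e_i)}$, minimality of the $(\tho{\tr}, \lw{\tr})$-closure gives $\TOOClosure{\tr}(\set{\prev{\tr}(e_i)}) \subseteq C$, so the value of $I$ after \lineref{init} lies in $C$. For the inductive step, suppose $I \subseteq C$ and the loop selects acquires $a_1 \trord{\tr} a_2$ with $a_1, a_2 \in I \subseteq C$. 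Since $C$ is sync-preserving closed, Condition~\itmref{crit-closure} of \defref{zr-closure} yields $\match{\tr}(a_1) \in C$, and then $(\tho{\tr}, \lw{\tr})$-closedness of $C$ gives $\TOOClosure{\tr}(\set{\match{\tr}(a_1)}) \subseteq C$. Hence the updated $I$ remains inside $C$, completing the induction. At termination this gives $I \subseteq C$.

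For the reverse inclusion I would show that the returned $I$ is itself sync-preserving closed and contains $S_0$, so minimality of $C$ forces $C \subseteq I$. Containment of $S_0$ is immediate from \lineref{init}. Condition~\itmref{too-closure} holds by Invariant~A. For Condition~\itmref{crit-closure}, I use the exit condition of the repeat-loop: the loop halts only when no further application of the update rule enlarges $I$, which means that for every lock $\lk$ and every pair $a_1 \trord{\tr} a_2$ in $\acquires{\tr}(\lk)$ with $a_1, a_2 \in I$ we already have $\TOOClosure{\tr}(\set{\match{\tr}(a_1)}) \subseteq I$, and in particular $\match{\tr}(a_1) \in I$. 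Termination itself follows because $I$ only grows and is bounded by the finite set $\events{\tr}$. Combining both inclusions yields $I = C = \ZRClosure{\tr}(I_0 \cup \set{\prev{\tr}(e_1)} \cup \set{\prev{\tr}(e_2)})$.

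The main obstacle is the careful reading of the repeat-until construct: I must argue that termination genuinely saturates Condition~\itmref{crit-closure} over \emph{all} acquire pairs, not merely the last pair inspected. I would address this by treating the loop under the standard least-fixpoint semantics, in which the body is applied as long as \emph{some} applicable acquire pair would enlarge $I$; equivalently, I can phrase the computation as the least fixpoint of the monotone operator that, for every such pair, adds $\TOOClosure{\tr}(\set{\match{\tr}(a_1)})$, so that the fixpoint reached upon exit is exactly $\ZRClosure{\tr}(S_0)$. The remaining ingredients---the union lemma for $(\tho{\tr}, \lw{\tr})$-closed sets and the two inductions---are routine.
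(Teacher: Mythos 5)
Your proof is correct, but there is nothing in the paper to compare it against: the paper states \lemref{computeLCCClosure-correctness} immediately after describing \closurefun as a fixpoint computation and never proves it, neither in the body nor in the appendix --- the authors treat it as routine bookkeeping from \defref{too-closure} and \defref{zr-closure}. Your argument supplies exactly that bookkeeping, in the standard double-inclusion style: Invariant~A (unions of $(\tho{\tr},\lw{\tr})$-closed sets are $(\tho{\tr},\lw{\tr})$-closed, and $I_0$ is closed by hypothesis) gives Condition~\itmref{too-closure} for the returned set; Invariant~B, by induction on loop iterations using minimality of $\TOOClosure{\tr}(\cdot)$ and Condition~\itmref{crit-closure} for $C=\ZRClosure{\tr}(S_0)$ where $S_0 = I_0 \cup \set{\prev{\tr}(e_1)} \cup \set{\prev{\tr}(e_2)}$, gives $I\subseteq C$; and the exit condition together with minimality of $C$ gives $C\subseteq I$. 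One remark in your favor: the subtlety you flag about the repeat-until loop is genuine, not pedantry. Read literally, \lineref{acquirecheck} selects an arbitrary witness pair, so an adversarial scheduler could pick a pair whose matching release is already in $I$, leave $I$ unchanged, and exit while a different pair remains unsaturated; the lemma would then be false for that execution. Your resolution --- reading the loop as computing the least fixpoint of the monotone, inflationary operator that saturates over \emph{all} applicable acquire pairs --- is the intended semantics, and it is consistent with the concrete implementation of \closurefun in \algoref{overall}, which in each round sweeps over every lock and every thread before testing whether $I$ changed. With that reading fixed, both inclusions are airtight, and termination follows as you say from finiteness of $\events{\tr}$.
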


Let us discuss the data-structures we use to
ensure that~\algoref{check-given-pair} runs in linear time and space.

\subsubsection{Vector timestamps.}
Vector timestamps~\cite{Mattern1988,Fidge:1991:LTD:112827.112860} are routinely used in distributed
computing and also in prior work on race prediction~\cite{Pozniansky:2003:EOD:966049.781529,fasttrack,wcp2017,Roemer18}.
We use vector timestamps to represent sets of events
that are $(\tho{\tr}, \lw{\tr})$-closed; a formal definition is deferred to~\secref{full-algo}.
In~\algoref{check-given-pair}, the sets
$\TOOClosure{\tr}(\set{\prev{\tr}(e_i)})$ are $(\tho{},\lw{})$-closed (\lineref{init}).
Further, the initial value $I_0 = \emptyset$ ensures that
all subsequent values of $I$ in \closurefun 
are $(\tho{},\lw{})$-closed.
All these sets can be represented as vector timestamps.
The advantage of using vector timestamps is two-fold.
First, these timestamps provide a succinct representation of
sets --- instead of representing a set explicitly 
as a collection of events, a vector timestamp 
only uses $\NumThreads$ integers (where $\NumThreads = |\threads{\tr}|$).
Second, the vector timestamps for $(\tho{\tr}, \lw{\tr})$-closed
sets can be computed in a streaming fashion, incrementally,
using vector timestamps of smaller subsets.

\subsubsection{Projecting a trace to threads and locks}
Let us consider the check in~\lineref{acquirecheck}.
Here, we look for two acquire events $a_1 \trord{\tr} a_2$
in the current ideal $I$ that acquire the same lock $\lk$.
How do we efficiently discover two such acquires?
A straightforward but naive way is to enumerate all pairs of events
in $I$ and check if they are acquire events of the above kind.
But this can take $O(\NumEvents^3)$ time, where $\NumEvents = |\events{\tr}|$ because
the number of such pairs can be $O(\NumEvents^2)$ in the worst case and
the number of times the ideal can change is $O(\NumEvents)$.
Instead, we rely on the following observation:
\begin{proposition}
\proplabel{one-acquire-per-thread}
Let $I \subseteq \events{\tr}$ be downward closed with respect to $\tho{\tr}$.
For every $t \in \threads{\tr}$ and every $\lk \in \locks{\tr}$,
there is at most one acquire event $a$ with $\ThreadOf{a} = t$,
$\OpOf{a} = \acq(\lk)$
such that $a \in I$ and $\match{\tr}(a) \not\in I$.
When such an event $a$ exists, then $\match{\tr}(a') \in I$
for every other acquire 
$a' \stricttho{\tr} a$ of the form $a' = \ev{t, \acq(\lk)}$.
\end{proposition}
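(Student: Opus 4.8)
The plan is to exploit the alternating structure of acquire/release events on a single lock within a single thread, and then use the fact that $I$ is downward closed with respect to $\tho{\tr}$.

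First I would fix a thread $t$ and a lock $\lk$ and enumerate the acquires of $t$ on $\lk$ in trace order as $a_1 \stricttho{\tr} a_2 \stricttho{\tr} \cdots \stricttho{\tr} a_k$; these form a $\tho{\tr}$-chain because they all belong to $t$, for which $\tho{\tr}$ and $\trord{\tr}$ coincide. Writing $r_i = \match{\tr}(a_i)$ for the matching release (absent only possibly for $a_k$), lock semantics — specifically that $\proj{\tr}{\lk}$ is a prefix of a well-nested acquire/release word — force $a_i \tho{\tr} r_i \stricttho{\tr} a_{i+1}$ whenever $r_i$ exists: $a_i \trord{\tr} r_i \trord{\tr} a_{i+1}$ with all three in thread $t$. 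Deriving this alternation cleanly from the regular-expression constraint is the one place that needs care, and I expect it to be the main (though routine) obstacle.

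With this chain in hand I would record two consequences of downward-closedness. (i) If $a_i \in I$ and $j < i$, then from $a_j \tho{\tr} r_j \tho{\tr} a_i$ downward-closedness gives $r_j \in I$; so no acquire lying $\tho{\tr}$-before a member of $I$ can be \emph{dangling} (in $I$ with release not in $I$). (ii) If $a_i \in I$ while $r_i \notin I$, then for every $j > i$ we must have $a_j \notin I$, since otherwise $a_i \tho{\tr} r_i \tho{\tr} a_j$ together with $a_j \in I$ would force $r_i \in I$, a contradiction.

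The second assertion of the proposition is then immediate from (i): if $a = a_i$ is dangling and $a' = a_j \stricttho{\tr} a_i$ is another acquire of $t$ on $\lk$, then $j < i$, hence $\match{\tr}(a') = r_j \in I$. For the uniqueness in the first assertion, suppose two distinct dangling acquires $a_i, a_{i'}$ existed with $i < i'$; then $a_{i'} \in I$ together with (i) forces $r_i \in I$, contradicting that $a_i$ is dangling. Hence at most one dangling acquire exists per thread–lock pair, which completes the proof.
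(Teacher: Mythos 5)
Your proof is correct. The paper states \propref{one-acquire-per-thread} as an observation without giving an explicit proof, and your argument — deriving the per-thread alternation $a_i \tho{\tr} \match{\tr}(a_i) \stricttho{\tr} a_{i+1}$ from the well-nesting regular-expression constraint on $\proj{\tr}{\lk}$, then applying downward-closedness of $I$ with respect to $\tho{\tr}$ to rule out two dangling acquires and to force releases of earlier acquires into $I$ — is precisely the justification the paper implicitly relies on, with the one step needing care (the alternation) handled correctly.
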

The above observation can be exploited as follows.
Let $I$ be the current ideal and
let $e^I_{t, \lk}$ be the last acquire on lock $\lk$
performed by thread $t$ such that $e^I_{t, \lk} \in I$.
Let $\textsf{Acq}^I_\lk = \set{e^I_{t, \lk}}_{t \in \threads{\tr}}$ 
be the set of last acquire events in $I$ of every thread.
Let $e^I_\lk$ be the last event (according to trace order $\trord{\tr}$)
in $\textsf{Acq}^I_\lk$.
Then, for every other 
acquire $e' \in \textsf{Acq}^I_\lk \setminus \set{e^I_\lk}$,
the matching release $\match{\tr}(e')$ must be added in $I$ .
Hence, if we can efficiently determine the events $e^I_{t, \lk}$ each time, 
then we can also efficiently determine $e^I_\lk$ and
thus efficiently perform the closure each time.
So, how do we determine $e^I_{t, \lk}$ efficiently each time?
We achieve this by maintaining a FIFO sequence $\AcqLst_{t, \lk}$,
for every thread $t \in \threads{\tr}$ and lock $\lk \in \locks{\tr}$.
For every critical section on lock $\lk$ in thread $t$,
there is a corresponding entry in the list $\AcqLst_{t, \lk}$,
and the order of these entries is the same
as the order in which these critical sections
were performed in the trace $\tr$.
Every entry (corresponding to critical section with acquire $a$) is a pair 
$(\TOOClosure{\tr}(a), \TOOClosure{\tr}(\match{\tr}(a)))$, 
represented as a pair of vector timestamps $(C_a, C_{\match{}(a)})$.

Let us now see how we perform the check in~\lineref{acquirecheck} 
using these data structures $\set{\AcqLst_{t, \lk}}_{t, \lk}$.
For the current ideal $I$, we essentially need to determine the 
last acquire $e^I_{t, \lk}$ of each thread $t$ and lock $\lk$ that belongs to $I$.
This can be done by traversing the list $\AcqLst_{t, \lk}$ starting from the earliest entry,
until we encounter the entry corresponding to the last acquire $e^I_{t, \lk}$ that belongs to $I$ 
(this corresponds to a simple timestamp comparison).
All entries in $\AcqLst_{t, \lk}$ prior to the identified event $e^I_{t, \lk}$
can then be discarded from $\AcqLst_{t, \lk}$,
because the ideal now contains their information and only grows monotonically through the course of the 
fixpoint computation and the discarded entried will not be of use from now on.
Thus, every entry in the lists $\set{\AcqLst_{t, \lk}}_{t, \lk}$
is traversed only once and the overall fixpoint computation runs in linear time
when the number of $\NumThreads$ is constant.


\subsection{Checking for a Sync-Preserving Race on a Given Event with a Given Thread}
\seclabel{overview-existence}



\begin{savenotes}
\begin{algorithm*}[h]
\Input{Trace $\tr$, Event $e = \ev{t, a(x)}$, Thread $u$}
\BlankLine
\Foreach{$b \in \set{\rd, \wt}$ such that  $b \conf a$\footnote{\text{$b \conf a$ whenever not both $b$ and $a$ are read ($\rd$) operations}}}{
	\Let $L_b$ be the list of events $e'$ of the form $\ev{u, b(x)}$ such that $e' \trord{\tr} e$ \linelabel{access_lists}
}
\Foreach{$b \in \set{\rd, \wt}$ such that $b \conf a$}{
	$I_b \gets \emptyset$ \;
	\Foreach{$e' \in L_b$}{
		$I_b \gets$ \fixpoint{$\tr$, $e'$, $e$, $I_b$} \;
		\If{$e' \not\in I_b$}{
	    \declare `race'
	     and \exit\;
		}
	}
}
\caption{\textit{Checking if there is a sync-preserving race on a given event with a given thread}}
\algolabel{check-given-thread}
\end{algorithm*}
\end{savenotes}
\normalsize

Let us now consider~\algoref{check-given-thread}.
This algorithm takes as input a trace $\tr$, an event $e = \ev{t, a(x)}$ 
and a threads $u \neq t$,
and checks if there is an event $e'$ of thread $u$
such that $e' \trord{\tr} e$ and $(e', e)$ is a sync-preserving race.
\algoref{check-given-thread}~works as follows.
For the sake of simplicity, assume that 
$e$ is a read event, i.e., $a = \rd$.
The algorithm assumes access to the list $L_\wt$
of write events $e' = \ev{u, \wt(x)}$ that appear prior to $e$
(\lineref{access_lists});
these lists can be constructed in linear time, in a single pass
traversal of the trace.
The algorithm simply traverses $L_\wt$ (according to trace order $\trord{\tr}$)
and checks for races with each event in $L_\wt$, 
by computing the fixpoint closure sets as in~\algoref{check-given-pair}.
Instead of computing the ideal from scratch, the algorithm
exploits the monotonicity property outlined earlier in~\lemref{monotonicity}
by reusing the ideal $I_\wt$ computed so far.
As with~\algoref{check-given-pair}, this algorithm
also uses vector timestamps and maintains lists $\set{\AcqLst_{t, \lk}}_{t\in\threads{}, \lk\in\locks{}}$
for computing successive ideals efficiently.
Overall again, each entry in $\set{\AcqLst_{t, \lk}}_{t\in\threads{}, \lk\in\locks{}}$
is visited a constant number of times and thus~\algoref{check-given-thread}
runs in linear time and uses linear space.


\subsection{Algorithm $\ZeroRevAlgo$ for Sync-Preserving Race Prediction}
\seclabel{full-algo}

\small
\begin{algorithm*}[h]
\begin{multicols}{2}
\myfun{\init{}}{
	\ForEach{$t \in \threads{}$ $\cdot$}{
		$\Cc_t$ := $\bot$
	}
	\ForEach{$x \in \vars{}$ $\cdot$}{
		$\LW_x$ := $\bot$
	}
	\ForEach{$\lk \in \locks{}$ $\cdot$}{
		$\gId_\lk$ := $0$
	}
	\ForEach{$t_1 {\neq}t_2 \in \threads{}, a_1 {\conf} a_2 \in \set{\rd, \wt}, x \in \vars{}$}{
		$\view{\Ii}{}{\tuple{t_1, t_2, a_1, a_2, x}}$ := $\bot$ \;
		\ForEach{$t \in \threads{}, \lk \in \locks{}$}{
			$\view{\AcqLst}{t, \lk}{\tuple{t_1, t_2, a_1, a_2, x}}$ := $\emptyset$
		}
	}
	\ForEach{$u \in \threads{}$}{
		\ForEach{$t\in \threads{}, a \in \set{\rd, \wt}, x\in \vars{}$}{
			$\view{\AccLst}{t, a, x}{\tuple{u}}$ := $\emptyset$
		}
	}
}
\BlankLine
\myfun{\maxLB{$U$, $Lst$}}{
	$(g_{\max}, C_{\max}, C'_{\max})$ := $(0, \bot, \bot)$ \; 
	\While{$\NOT\, \isEmpty{Lst}$}{
		$(g, C, C')$ := $\first{Lst}$ \;
		\If{$C \cle U$}{
			$(g_{\max}, C_{\max}, C'_{\max})$ := $(g, C, C')$
		}
		\Else{
			\Break
		}
		$\removeFirst{Lst}$
	}
	\Return $(g_{\max}, C_{\max}, C'_{\max})$
}

\BlankLine

\myfun{\fixpoint{$I$, $\tuple{t_1, t_2, a_1, a_2, x}$}}{
	\Repeat{$I$ does not change}{
		\ForEach{$\lk \in \locks{}$}{
			\ForEach{$t \in \threads{}$}{
				($g_{\lk, t}, C_{\lk, t}, C'_{\lk, t}$) := 
				\maxLB{$I$, $\view{\AcqLst}{t, \lk}{\tuple{t_1, t_2, a_1, a_2, x}}$} \;
			}
			$t_{\max}$ := argmax$_{t \in \threads{}}\set{g_{\lk, t}}$ \;
			$I$ := $I \mx \bigsqcup_{t \neq t_{\max} \in \threads{}} C'_{\lk, t}$	
		}
	}
	\Return $I$
}

\myfun{\checkRace{$Lst$, $I$, $\tuple{t_1, t_2, a_1, a_2, x}$}}{
	\While{$\NOT\, \isEmpty{Lst}$}{
		($C_{\prev{}}, C$) := $\first{Lst}$ \;
		$I$ := \fixpoint{$I \mx C_{\prev{}}$, $\tuple{t_1, t_2, a_1, a_2, x}$} \;
		\If{$C \not\cle I$}{ \linelabel{check-containment}
			\declare `($a_1, a_2$) race on $x$' \;
			\Break
		}
		$\removeFirst{Lst}$
	}
	\Return $I$
}

\myhandler{\rdhandler{$t$, $x$}}{
	$C_{\prev{}}$ := $\Cc_t$ \; 
	$\Cc_t$ := $\Cc_t[t \mapsto \Cc_t(t)+1] \mx \LW_x$ \linelabel{clkrd} \;
	\ForEach{$u\in \threads{}$}{
		$\addLst{\view{\AccLst}{t, \rd, x}{\tuple{u}}}{($C_{\prev{}}$, $\Cc_t$)}$
	}
	\ForEach{$u \neq t \in \threads{}$}{
		$I$ := $\view{\Ii}{}{\tuple{u, t, \wt, \rd, x}} \mx C_{\prev{}}$ \;
		$\view{\Ii}{}{\tuple{u, t, \wt, \rd, x}}$ := \checkRace{$\view{\AccLst}{u, \wt, x}{\tuple{u}}$, $I$, $\tuple{u, t, \wt, \rd, x}$}
	}
}

\myhandler{\wthandler{$t$, $x$}}{
	$C_{\prev{}}$ := $\Cc_t$ \;
	$\Cc_t$ := $\Cc_t[t \mapsto \Cc_t(t)+1]$;  \quad
	$\LW_x$ := $\Cc_t$ \linelabel{clkwt} \;
	\ForEach{$u\in \threads{}$}{
		$\addLst{\view{\AccLst}{t, \wt, x}{\tuple{u}}}{($C_{\prev{}}$, $\Cc_t$)}$
	}
	\ForEach{$u \neq t \in \threads{}, a \in \set{\rd, \wt}$}{
		$I$ := $\view{\Ii}{}{\tuple{u, t, a, \wt, x}} \mx C_{\prev{}}$ \;
		$\view{\Ii}{}{\tuple{u, t, a, \wt, x}}$ := \checkRace{$\view{\AccLst}{u, a, x}{\tuple{u}}$, $I$, $\tuple{u, t, a, \wt, x}$}
	}
}

\myhandler{\acqhandler{$t$, $\lk$}}{
	$\Cc_t$ := $\Cc_t[t \mapsto \Cc_t(t)+1]$; \quad $\gId_\lk$ := $\gId_\lk + 1$ \linelabel{clkacq}\;
	\ForEach{$t_1, t_2 \in \threads{}, a_1, a_2 \in \set{\rd, \wt}, x \in \vars{}$}{
		$\addLst{\view{\AcqLst}{t, \lk}{\tuple{t_1, t_2, a_1, a_2, x}}}{$(\gId_\lk, \Cc_t, \bot)$}$
	}
}

\myhandler{\relhandler{$t$, $\lk$}}{
	$\Cc_t$ := $\Cc_t[t \mapsto \Cc_t(t)+1]$ \linelabel{clkrel} \;
	\ForEach{$t_1, t_2 \in \threads{}, a_1, a_2 \in \set{\rd, \wt}, x \in \vars{}$}{
		$\updateRel{\lst{\view{\AcqLst}{t, \lk}{\tuple{t_1, t_2, a_1, a_2, x}}}}{$\Cc_t$}$
	}
}

\end{multicols}
\normalsize
\caption{\textit{Detailed streaming algorithm for checking sync-preserving races}}
\algolabel{overall}
\end{algorithm*}
\normalsize

The pseudo-code for~$\ZeroRevAlgo$ is presented in~\algoref{overall}.
This is a one pass streaming algorithm that processes events as they appear in the trace,
modifying its state and detecting races on the fly.
The algorithm maintains several data structures including vector clocks
and FIFO queues in its state.
We first describe these data structures, 
then discuss how the algorithm initializes and 
modifies them as it processes the trace, and
finally discuss the time and space usage for this algorithm;
many of these details have already been spelt out 
in~\secrangeref{overview}{overview-existence}.

Let us first briefly explain the notion of vector timestamps and 
vector clocks~\cite{Fidge:1991:LTD:112827.112860,Mattern1988}.
A vector timestamp is a mapping $V : \threads{\tr} \to \nats$
from the threads of a trace to natural numbers, and can be represented
as a vector of length $|\threads{\tr}|$.
The \emph{join} of two vector timestamps $V_1$ and $V_2$,
denoted $V_1 \mx V_2$ is the vector timestamp $\lambda t, \max(V_1(t), V_2(t))$.
Vector timestamps can be compared in a pointwise fashion: $V_1 \cle V_2$ iff
$\forall t, V_1(t) \leq V_2(t)$.
The minimum timestamp is denoted by $\bot$ --- $\bot = \lambda t, 0$.
For a scalar $c \in \nats$, we use $V[t\mapsto c]$
to denote the timestamp $\lambda u, \text{ if } u = t \text{ then } c \text{ else } V(u)$.
Vector clocks are variables that take values from 
the space of vector timestamps. 
We use normal font for timestamps ($C, C', I \ldots$)
and boldfaced font for vector clocks ($\Cc, \LW, \Ii, \ldots$).

\vspace{-0.1in}
\myparagraph{Data structures and initialization}{
The algorithm maintains the following data structures.
\begin{enumerate}
	\item {\bf Vector clocks}. 
	The algorithm uses vector clocks primarily for two purposes.
	First, we assign timestamps to all events in the trace and use
	the following vector clocks for this purpose ---
	for every thread, a dedicated clock $\Cc_t$, and for every variable $x$,
	a dedicated clock $\LW_x$.
	The timestamp of an event $e$ is 
	essentially a succinct representation
	of the set $\TOOClosure{}(e)$.
	Next, the algorithm computes $\ZRIdeal{}(\cdot, \cdot)$
	sets and represents them as timestamps.
	These are stored in vector clocks $\view{\Ii}{}{\tuple{t_1, t_2, a_1, a_2, x}}$,
	one for every tuple $(t_1, t_2, a_1, a_2, x) \in \threads{} \times \threads \times \set{\rd, \wt} \times \set{\rd, \wt} \times \vars{}$.
	All vector clocks are initialized with the timestamp $\bot = \lambda t, 0.$

	\item {\bf Scalars}.
	For every lock $\lk$, the algorithm maintains a scalar variable $\gId_\lk$
	to record the index of the last acquire on $\lk$ seen in the trace seen so far.
	Each such scalar is initialized with $0$.

	\item {\bf FIFO queues}.
	The algorithm maintains several FIFO queues, whose entries correspond to different events in the trace.
	The algorithm ensures that an event appears only once, and additionally ensures that the entries 
	respect the order of appearance of the corresponding events in the trace.
	The first kind of FIFO queues are used in the fixpoint computation.
	For this, the algorithm maintains queues $\view{\AcqLst}{t, \lk}{\tuple{t_1, t_2, a_1, a_2, x}}$,
	one for every thread $t$, lock $\lk$ and tuple $(t_1, t_2, a_1, a_2, x)$.
	Each entry of $\view{\AcqLst}{t, \lk}{\tuple{t_1, t_2, a_1, a_2, x}}$
	 is a triplet $(g_e, C_e, C'_e)$
	and corresponds to an acquire event $e$ of the form $\ev{t, \acq(\lk)}$ --- here, $g_e$
	is the index of $e$ in the trace, $C_e$ is the timestamp of $e$ and $C'_e$
	is the timestamp of the matching release $\match{}(e)$.
	The second kind of queues are of the form $\view{\AccLst}{t, a, x}{\tuple{u}}$, 
	one for each $t, u \in \threads{}$, $a \in \set{\rd, \wt}$ and $x \in \vars{}$.
	Entry in such a FIFO queue corresponds to access events
	of the form $e = \ev{t, a(x)}$.
	Each entry is of the form $(C_{\prev{}(e)}, C_e)$ where
	$C_{\prev{}(e)}$ is the timestamp of $\prev{}(e)$ 
	(and $\bot$ if $\prev{}(e)$ does not exist)
	 and $C_e$ is the timestamp of $e$.
	 All FIFO queues are empty ($\emptyset$) in the beginning.
	 We write $\first{L}$ and $\lst{L}$ to denote the first
	 and last elements of the FIFO queue $L$.
	 Further, $\isEmpty{L}$, $\addLst{L}{}$ and $\removeFirst{L}$ respectively
	 represent functions that check for emptiness of $L$,
	 add an element at the end of $L$
	and remove the earliest (first) element of $L$.
\end{enumerate}
}

Let us now describe the working of $\ZeroRevAlgo$.
The algorithm works in a streaming fashion and processes each event
$e$ as soon as it appears, by calling the appropriate \textbf{handler}
depending upon the operation performed in $e$ 
(\texttt{read}, \texttt{write}, \texttt{acquire} or \texttt{release}).
The argument for each handler is the thread performing the event
and the object (variable or lock) accessed in the event.
In each handler, the algorithm updates vector clocks to
compute timestamps,
and maintains the invariants of the FIFO queues.
In addition, inside \texttt{read} and \texttt{write} handlers,
the algorithm also checks for races using a fixpoint computation
(function \closurefun).
We explain some of these briefly.

\myparagraph{Computing vector timestamps}{
	The algorithm computes the timestamp of an event $e$,
	denoted $C_e$ when processing the event $e$.
	The timestamps computed in the algorithm are such that
	\[
	\forall t\in \threads{}, C_e(t) = |\setpred{f \in \events{}}{\ThreadOf{f} = t, f \in \TOOClosure{}(e)}|
	\]
	Observe that, with this invariant, we have $C_e \cle C_{e'}$ iff $e \in \TOOClosure{}(e')$.
	The algorithm uses vector clocks $\set{\Cc_t}_{t\in \threads{}}$ and $\set{\LW_x}_{x\in \vars{}}$
	and ensures that after processing an event $e$, 
	\begin{enumerate*}
	\item $\Cc_t$ stores the timestamp $C_{e_t}$, where $e_t$ is the last event by thread $t$
	that occurs before $e$, and
	\item $\LW_x$ stores the timestamp $C_{e_x}$, where $e_x$ is the last event with $\OpOf{e_x} = \wt(x)$
	that occurs before $e$.
	\end{enumerate*}
	The algorithm correctly maintains these values by appropriate
	vector clock operations  on
	Lines~\ref{line:clkrd}, \ref{line:clkwt}, \ref{line:clkacq} and \ref{line:clkrel}.

	Let us now describe the invariant for the vector clocks $\view{\Ii}{}{\tuple{t_1, t_2, a_1, a_2, x}}$ (for a given tuple $(t_1, t_2, a_1, a_2, x)$).
	Let $e_{t_i, a_i, x}$ be the last event with $\ThreadOf{e_{t_i, a_i, x}} = t_i$ and $\OpOf{e_{t_i, a_i, x}} = a_i(x)$ ($i \in \set{1, 2}$).
	Let $I \subseteq \events{\tr}$ be defined as follows. 
	If $e_{t_1, a_1, x}$ does not exist, then $I = \ZRClosure{\tr}(\set{\prev{\tr}(e_2)})$.
	Otherwise, let $e$ be be the first event in $\tr$ (according to trace order $\trord{\tr}$) with $\ThreadOf{e} = t_1$ and $\OpOf{e} = a_1(x)$ such that $(e, e_2)$ is a sync-preserving race of $\tr$. If no such event exists, then let $e = e_{t_1, a_1, x}$. Then, $I = \ZRIdeal{\tr}(e_1, e_2)$.
	Then, the timestamp stored in $\view{\Ii}{}{\tuple{t_1, t_2, a_1, a_2, x}}$ is $\bigsqcup_{u \in \threads{\tr}} C_{e^I_u}$, where $e^I_u$ is the last event of thread $u$ which is in $I$.
}

\vspace{-0.1in}
\myparagraph{Checking races}{
	When processing an access event $e = \ev{t, a(x)}$, the algorithm checks for a race as follows.
	For every other thread $u$ and for every other conflicting type $b \conf a$,
	the algorithm calls \racefun with the list 
	$Lst = \view{\AccLst}{u, b, x}{\tuple{t}}$ and the
	timestamp representation of the current ideal as argument.
	This function, similar to~\algoref{check-given-thread}, scans $Lst$
	and reports races by repeatedly performing fixpoint computations and checking membership in some set
	(using the timestamp comparison in~\lineref{check-containment}).
	The closure computation is performed using the optimizations discussed in~\secref{overview-given-pair}
	(use of FIFO queues $\view{\AcqLst}{t, \lk}{\tuple{\cdots}}$).
}

\vspace{-0.1in}
\myparagraph{Space optimizations}{
	Observe that, for a given thread $t$ and lock $\lk$, the algorithm,
	as presented, maintains $4\NumThreads^2{\cdot}\NumVariables$ FIFO queues
	$\set{\view{\AcqLst}{t, \lk}{\tuple{t_1, t_2, a_1, a_2, x}}}_{t_1, t_2 \in \threads{}, a_1, a_2 \in \set{\rd, \wt}, x \in \vars{}}$.
	The total number of entries across these queues will then be $O(\NumAcquires \cdot 4\NumThreads^2{\cdot}\NumVariables)$.
	To this end, we observe that all the above data structures essentially have the same content, and
	are suffixes of a common queue corresponding to the critical sections on $\lk$ in thread $t$.
	Indeed, we exploit this redundancy and instead maintain a common
	underlying data-structure that stores 
	all entries corresponding to acquires and releases on $\lk$ in $t$,
	and maintain a pointer for each $\tuple{t_1, t_2, a_1, a_2, x}$.
	Such a pointer keeps track of the
	starting index of the FIFO queue $\view{\AcqLst}{t, \lk}{\tuple{t_1, t_2, a_1, a_2, x}}$.
	With this space optimization, we only store $O(\NumAcquires)$ entries along with
	additional $4{\cdot}\NumThreads^3{\cdot}\NumVariables{\cdot}\NumLocks$ pointers,
	one for every tuple $\tuple{t_1, t_2, a_1, a_2, x}$ and every shared queue indexed by $(t, \lk)$.
	The same observations also apply to the FIFO queues 
	$\set{\view{\AccLst}{t, a, x}{\tuple{u}}}_{u \in \threads{}}$
}

\begin{lemma}[Correctness]
\lemlabel{correctness-algo}
For every access event $e$ in the input trace $\tr$,
\algoref{overall} declares a race when processing $e$ iff there is an event
$e'$ such that $e' \trord{\tr} e$ and $(e', e)$ is a 
sync-preserving race of $\tr$.
\end{lemma}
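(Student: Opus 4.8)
The plan is to prove \lemref{correctness-algo} by establishing a collection of state invariants maintained by the four handlers as the trace is streamed, and then reducing the per-event race check to the already-established single-pair characterization. Concretely, I will show that when the handler for an access event $e = \ev{t, a(x)}$ finishes, the call to \racefun\ has examined exactly the candidate partners $e' \trord{\tr} e$ that remain relevant, and that for each such $e'$ the test on \lineref{check-containment} succeeds iff $(e', e)$ is a sync-preserving race. The correctness of the individual test is inherited from \lemref{zrideal-disjointness}: $(e', e)$ is a sync-preserving race iff $\set{e', e} \cap \ZRIdeal{\tr}(e', e) = \emptyset$, and since $e' \trord{\tr} e$ guarantees $e \notin \ZRIdeal{\tr}(e', e)$, this reduces to $e' \notin \ZRIdeal{\tr}(e', e)$, which is precisely the timestamp comparison $C \not\cle I$ once the stored clock $I$ faithfully represents the ideal.

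First I would prove the \emph{timestamp invariant}: after processing each event, $\Cc_t$ equals the timestamp of the last event of thread $t$ and $\LW_x$ equals the timestamp of the last write to $x$, where the timestamp $C_f$ of an event $f$ satisfies $C_f(u) = |\setpred{g}{\ThreadOf{g}=u,\; g \in \TOOClosure{\tr}(f)}|$. This is a routine induction over the stream using the clock updates on \lineref{clkrd}, \lineref{clkwt}, \lineref{clkacq} and \lineref{clkrel}, and it yields the decoding property $C_f \cle C_g \iff f \in \TOOClosure{\tr}(g)$, so that a $(\tho{\tr},\lw{\tr})$-closed set is faithfully represented by the join of the timestamps of its last-per-thread events. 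In parallel I would verify the \emph{queue invariants}: each $\view{\AcqLst}{t,\lk}{\tuple{\cdots}}$ lists, in trace order, one triplet $(g_e, C_e, C'_e)$ per critical section of $\lk$ in $t$, and each $\view{\AccLst}{t,a,x}{\tuple{u}}$ lists the access events $\ev{t,a(x)}$ with their $\prev{\tr}$ and own timestamps. Given these, I would show that \closurefun\ correctly computes the timestamp of the sync-preserving closure: \texttt{maxLowerBound} extracts the last acquire of each thread contained in the current ideal, \propref{one-acquire-per-thread} guarantees this is the only acquire per thread whose matching release might still be missing, and joining in the release timestamps of all but the globally latest acquire (the $t_{\max}$) implements exactly Condition~\itmref{crit-closure} of \defref{zr-closure}; appealing to \lemref{computeLCCClosure-correctness} then identifies the computed fixpoint with $\ZRClosure{\tr}(\cdot)$.

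The crux, and the step I expect to be the main obstacle, is the invariant governing the ideal clocks $\view{\Ii}{}{\tuple{t_1,t_2,a_1,a_2,x}}$ under \emph{incremental} reuse: the algorithm never recomputes an ideal from scratch but monotonically grows a single stored ideal across all candidate pairs. I would formalize the invariant stated in the overview — that $\view{\Ii}{}{\tuple{\cdots}}$ represents $\ZRIdeal{\tr}(e, e_2)$ for the earliest racing partner $e$ (or the last access of the appropriate kind if none races) — and prove it is preserved by every handler. Soundness of the reuse is immediate: whatever $I$ the stored clock holds is a genuine $(\tho{\tr},\lw{\tr})$-closed subset of the true ideal, so the test $C \not\cle I$ only fires when $e' \notin \ZRIdeal{\tr}(e', e)$. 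Completeness is where \lemref{monotonicity} and \lemref{consumed-events} do the work: monotonicity shows that starting the fixpoint from the previously accumulated ideal rather than $\emptyset$ reaches the same closure $\ZRIdeal{\tr}(e', e)$, so no race is lost by warm-starting the computation; \lemref{consumed-events} shows that once a candidate $e'$ is found non-racing it can be permanently removed from the access queue, so that a later event $e_{\text{next}}$ in the same thread need only consider partners appearing after the last consumed event. Combining these, I would argue that across the whole stream every conflicting pair $(e', e)$ with $e' \trord{\tr} e$ is effectively tested — either directly, or ruled out by a monotonicity / consumed-events argument — establishing both directions of the biconditional. The remaining bookkeeping, namely that the queue truncation inside \texttt{maxLowerBound} and the shared-suffix pointer optimization preserve the queue invariants, I would dispatch as routine, since truncation only discards entries already absorbed into the monotonically growing ideal.
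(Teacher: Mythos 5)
Your proposal is correct and takes essentially the same route as the paper: the paper's entire proof of this lemma is a one-sentence appeal to ``the correctness of the semantics of clocks and other observations outlined in \secref{overview}'', and your plan instantiates exactly those ingredients --- the clock/queue invariants, \propref{one-acquire-per-thread} and \lemref{computeLCCClosure-correctness} for the closure computation, \lemref{zrideal-disjointness} for the per-pair test (including the reduction to $e' \not\in \ZRIdeal{\tr}(e',e)$ when $e' \trord{\tr} e$), and \lemref{monotonicity} together with \lemref{consumed-events} for the soundness and completeness of the warm-started, incrementally reused ideals and queue truncation. Your write-up is in fact a substantially more detailed elaboration of the argument the paper only sketches.
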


The proof of~\lemref{correctness-algo} follows directly from the
correctness of the semantics of clocks and other
observations outlined in~\secref{overview}.

The time complexity of the algorithm can be determined as follows.
The algorithm visits each entry in the FIFO queues 
$\view{\AccLst}{t, a, x}{\tuple{u}}$
once, performing constant number of vector clock operations,
each running in $O(\NumThreads)$ time.
The total length of all these queues is $O(\NumThreads{\cdot}\NumEvents)$ (more precisely, the number of access events in the trace).
Similarly, the algorithm visits each entry in the FIFO queues
$\view{\AcqLst}{t, \lk}{\tuple{t_1, t_2, a_1, a_2, x}}$ once,
performing constantly many vector clock operations.
The total number of entries in these queues is $O(\NumThreads^2{\cdot}\NumVariables{\cdot}\NumAcquires)$.
This gives us the following complexity for $\ZeroRevAlgo$.
\begin{lemma}[Complexity]
\lemlabel{complexity-algo}
Let $\tr$ be a trace with
$\NumThreads$ threads, $\NumLocks$ locks, 
$\NumVariables$ variables and $\NumEvents$ events,
of which $\NumAcquires$ are acquire events.
Then, \algoref{overall} runs in time $O(\NumEvents\cdot \NumThreads^2 + \NumAcquires\cdot \NumVariables\cdot \NumThreads^3)$ and uses space $O(\NumEvents+ \NumThreads^3\cdot \NumVariables\cdot \NumLocks)$
on input $\tr$.
\end{lemma}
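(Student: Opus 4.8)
The plan is to prove the two claimed summands separately for time and for space, in both cases reducing everything to an amortized count of FIFO-queue insertions. The single most important ingredient is monotonicity of $\ZRIdeal{\tr}(\cdot,\cdot)$ (\lemref{monotonicity}) together with the consumption property of \lemref{consumed-events}: these guarantee that as~\algoref{overall} processes the stream, the ideal stored in each clock $\view{\Ii}{}{\tuple{t_1,t_2,a_1,a_2,x}}$ only grows, so every entry ever appended to an acquire queue $\view{\AcqLst}{t,\lk}{\cdots}$ or an access queue $\view{\AccLst}{t,a,x}{\cdots}$ is scanned and then discarded by \texttt{removeFirst} only a constant number of times over the whole run. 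Once this is established, all queue-traversal work can be charged to insertions, and insertions are easy to count.

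For the running time I would split the cost as follows. Outside the fixpoint, each handler performs a bounded number of vector-clock updates on $\Cc_t$, $\LW_x$ and, for each partner thread $u$, forms a join of the shape $\view{\Ii}{}{\cdots}\mx C_{\prev{}}$ before invoking \racefun. Every vector-clock operation costs $O(\NumThreads)$ and each event interacts with $O(\NumThreads)$ partner threads, so the per-event overhead is $O(\NumThreads^2)$, contributing the term $O(\NumEvents\cdot\NumThreads^2)$. The remaining cost is the productive work inside \closurefun, which is dominated by \texttt{maxLowerBound} consuming the acquire queues. An acquire inserts one triple into each of the $O(\NumThreads^2\cdot\NumVariables)$ tuple-indexed queues, so the total number of acquire-queue entries over the run is $O(\NumAcquires\cdot\NumThreads^2\cdot\NumVariables)$; by the amortization above each is consumed $O(1)$ times and each consumption costs one $O(\NumThreads)$ comparison and join, for a total of $O(\NumAcquires\cdot\NumThreads^3\cdot\NumVariables)$. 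Summing the two parts yields the stated time bound.

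For the space I would tally the persistent state. The clocks $\{\Cc_t\}$ and $\{\LW_x\}$ occupy $O(\NumThreads^2+\NumThreads\cdot\NumVariables)$, the scalars $\gId_\lk$ occupy $O(\NumLocks)$, and the ideal clocks --- one per tuple $\tuple{t_1,t_2,a_1,a_2,x}$, each of width $\NumThreads$ --- occupy $O(\NumThreads^3\cdot\NumVariables)$. For the queues I would invoke the sharing optimization spelled out just before the statement: for a fixed $(t,\lk)$ all tuple-indexed acquire queues are suffixes of one common queue, and symmetrically for the access queues, so it suffices to store $O(\NumAcquires)$ acquire entries and $O(\NumEvents)$ access entries once, together with $O(\NumThreads^3\cdot\NumVariables\cdot\NumLocks)$ start pointers into them. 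Since $\NumAcquires\le\NumEvents$ and $\NumThreads^3\cdot\NumVariables\le\NumThreads^3\cdot\NumVariables\cdot\NumLocks$, the total collapses to $O(\NumEvents+\NumThreads^3\cdot\NumVariables\cdot\NumLocks)$.

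The main obstacle is making the informal ``each entry is traversed once'' rigorous for the \textbf{repeat} loop of \closurefun, rather than for an idealized single scan. Two subtleties must be handled: a \texttt{maxLowerBound} scan halts at --- but does not remove --- the first entry that is not $\cle I$, so that boundary entry may be re-inspected on a later invocation; and the loop keeps iterating until $I$ stabilizes, so an unproductive terminating pass still does work unless locks whose acquire content has not changed are skipped. I would resolve this with a potential-function argument built on the fact that each $\view{\Ii}{}{\cdots}$ increases monotonically across the run (\lemref{monotonicity}): productive passes are charged against the global acquire-queue consumption already counted, while boundary re-inspections and unproductive passes are charged to the enclosing \racefun iteration that triggered them, of which there are $O(\NumThreads\cdot\NumEvents)$ in total (one per processed access entry, plus one per race-terminating break). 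Showing that these charges stay within the two budgets simultaneously --- and that \closurefun nonetheless returns exactly $\ZRClosure{\tr}(\cdots)$ as certified by \lemref{computeLCCClosure-correctness} --- is the delicate step; given it, the correctness of the reported races is inherited directly from \lemref{correctness-algo}.
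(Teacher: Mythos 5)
Your proposal follows essentially the same route as the paper's own proof, which is in fact terser than yours: the paper simply asserts that each entry of the access queues $\view{\AccLst}{t,a,x}{\tuple{u}}$ is visited once at the cost of constantly many $O(\NumThreads)$-time vector-clock operations (with $O(\NumThreads\cdot\NumEvents)$ entries in total), that each entry of the acquire queues $\view{\AcqLst}{t,\lk}{\tuple{t_1,t_2,a_1,a_2,x}}$ is likewise visited once (with $O(\NumThreads^2\cdot\NumVariables\cdot\NumAcquires)$ entries in total), and reads the space bound off the sharing optimization exactly as you do. Your two time summands and your space tally coincide with this accounting.

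Where you differ is that you make explicit the two amortization subtleties the paper passes over in silence: boundary entries that \texttt{maxLowerBound} inspects but does not remove, and unproductive terminating passes of the \textbf{repeat} loop in \closurefun. Flagging these is the right instinct, but your proposed patch does not stay within the claimed budget. As the pseudocode of \algoref{overall} is written, every pass of the \textbf{repeat} loop sweeps \emph{all} $\NumLocks\cdot\NumThreads$ acquire queues and performs the joins $I \mx \bigsqcup_{t\neq t_{\max}}C'_{\lk,t}$ for every lock, so even a completely unproductive pass costs $\Theta(\NumLocks\cdot\NumThreads^2)$. Charging such passes to the enclosing \racefun iterations, of which there are $\Theta(\NumThreads\cdot\NumEvents)$, yields $\Theta(\NumEvents\cdot\NumLocks\cdot\NumThreads^3)$ --- a factor of $\NumLocks\cdot\NumThreads$ above the claimed $O(\NumEvents\cdot\NumThreads^2)$ term, since your budget affords only $O(\NumThreads)$ per iteration. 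Closing this gap requires an implementation refinement that neither you nor the paper spells out: the fixpoint must not re-sweep queues that cannot have become relevant, e.g., by maintaining a worklist of the $(t,\lk)$-queues whose head timestamps may have become dominated since $I$ last grew, so that every inspection can genuinely be charged to the removal of some entry plus $O(1)$ bookkeeping per \closurefun call. With that refinement your charging argument --- and the paper's blunt ``each entry is traversed only once'' --- becomes correct; without it, both arguments establish a bound with an extra $\NumLocks$ (and $\NumThreads$) factor on the first term, so the gap you identified as the delicate step remains open in your write-up just as it is silently open in the paper's.
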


The proof of~\thmref{zero_reversals} follows from~\lemref{correctness-algo}
 and \lemref{complexity-algo}.



\subsection{Linear Space Lower Bound}
\label{subsec:space_lowerbound}

In this section we prove the lower-bounds of \cref{thm:space_lowerbound} and \cref{thm:product_lowerbound}, i.e., 
that any streaming algorithm for sync-preserving race prediction must essentially use linear space, while 
the time-space product of any  algorithm for the problem must be quadratic in the length of the input trace.

\Paragraph{The language $\Lang_n$.}
Given a natural number $n$, we define the equality language $\Lang_n =\{u\#v\colon u,v\in \{0,1\}^n \text{ and } u=v\}$,
i.e., it is the language of two $n$-bit strings that are separated by $\#$ and are equal.

\begin{restatable}{lemma}{lemlangnlowerbound}\label{lem:langnlowerbound}
Any streaming algorithm that recognizes $\Lang_n$ uses $\Omega(n)$ space.
\end{restatable}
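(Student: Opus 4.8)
The plan is to prove the $\Omega(n)$ space lower bound for recognizing $\Lang_n$ via a standard communication-complexity argument, reducing from the two-party \textsc{Equality} problem. I would set up the reduction as follows. Suppose, for contradiction, that some streaming algorithm $\mathcal{M}$ recognizes $\Lang_n$ using $s(n)$ space. I split the input string $u\#v$ at the separator $\#$: Alice holds the prefix $u\in\{0,1\}^n$ and Bob holds the suffix $v\in\{0,1\}^n$. Alice runs $\mathcal{M}$ on $u$ (together with the $\#$ symbol), producing an internal configuration of $\mathcal{M}$; she then transmits this configuration to Bob, who resumes the execution of $\mathcal{M}$ on $v$. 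At the end, $\mathcal{M}$ accepts if and only if $u=v$, so Bob learns the answer to the \textsc{Equality} instance. The number of bits Alice sends is exactly the size of $\mathcal{M}$'s configuration, which is $O(s(n))$.

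The key step is then to invoke the classical deterministic communication complexity lower bound: deciding \textsc{Equality} on $n$-bit strings requires $\Omega(n)$ bits of communication in the worst case. This follows from a fooling-set (equivalently, rank) argument on the communication matrix: the $2^n \times 2^n$ matrix of \textsc{Equality} is the identity matrix, which has full rank $2^n$, forcing at least $n+1$ distinct monochromatic combinatorial rectangles and hence $\Omega(n)$ communication. Since Alice's message is the only communication in the reduction and it carries $O(s(n))$ bits, we obtain $s(n) = \Omega(n)$, which is the desired bound.

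One point I would be careful about is exactly what counts as the transmitted "configuration." For a streaming algorithm the relevant state is its working memory after consuming the prefix, and this is precisely what determines the remainder of the computation, so the configuration size is $O(s(n))$ bits; I should note that the position in the input and any fixed-size control state contribute only $O(\log n)$ additional bits, which does not affect the $\Omega(n)$ conclusion. I would also remark that the lower bound holds even for the promise/partial version where inputs are guaranteed to have the form $u\#v$ with $|u|=|v|=n$, since the reduction only ever feeds such inputs to $\mathcal{M}$.

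The main obstacle, though it is a mild one, is cleanly arguing that a single-pass streaming algorithm corresponds to a \emph{one-way} communication protocol (Alice speaks once, then Bob decides), rather than a general interactive one. This is exactly the setting that matches our streaming model, and the one-way deterministic communication complexity of \textsc{Equality} is still $\Omega(n)$ by the same fooling-set argument, so no strength is lost. I would therefore phrase the reduction explicitly as a one-way protocol to make the correspondence with the single left-to-right pass transparent, and then conclude $s(n)=\Omega(n)$.
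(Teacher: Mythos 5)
Your proposal is correct, and its core is the same combinatorial fact the paper uses, but the packaging is genuinely different. The paper's proof is a direct, self-contained pigeonhole argument: if the algorithm used $o(n)$ space, its state space would have size $2^{o(n)}$, so two distinct strings $u_1\neq u_2$ would leave the algorithm in the same state after the prefix; the algorithm then answers identically on $u_1\#u_1$ and $u_2\#u_1$, contradicting correctness. Your version wraps exactly this collision argument inside a reduction to one-way deterministic communication complexity of \textsc{Equality}: Alice transmits the streaming state, and the classical $\Omega(n)$ bound (whose fooling-set proof \emph{is} the paper's pigeonhole, applied to protocol messages instead of streaming states) finishes the job. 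What your framing buys is modularity, and it aligns nicely with how the paper itself proves the stronger time--space product bound (its Theorem on $T(\NumEvents)\cdot S(\NumEvents)$), which explicitly invokes the $\Omega(n)$ communication complexity of set-equality; what the paper's direct argument buys is that the lemma needs no external theorem and no discussion of how a streaming pass maps to a one-way protocol. Note that neither argument extends to randomized algorithms, since \textsc{Equality} has low randomized communication complexity, so the determinism of the streaming model is essential in both.

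One small inaccuracy in your recollection of the black-box theorem: full rank $2^n$ of the identity matrix forces at least $2^n$ monochromatic rectangles (equivalently, the $2^n$ diagonal entries form a fooling set), hence communication at least $\log_2 2^n = n$; your phrase ``at least $n+1$ distinct monochromatic rectangles'' would only yield an $\Omega(\log n)$ bound. Since the theorem you invoke is classical and true, this does not affect the validity of your proof, but the justification should be stated correctly.
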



\begin{figure}
\execution{2}{
\figev{1}{\acq( b_1)}
\figev{1}{\acq( b_2)}
\figev{1}{\acq( c)}
\figev{1}{\wt_1(x)}
\figev{1}{\rel( c)}
\figev{1}{\rel( b_2)}
\figev{1}{\rel( b_1)}
\figev{1}{\acq( a_1)}
\figev{1}{\acq( b_2)}
\figev{1}{\rd_2(x)}
\figev{1}{\rel( b_2)}
\figev{1}{\rel( a_1)}
}
\execution{2}{
\figevoffset{12}{1}{\acq( a_2)}
\figevoffset{12}{1}{\acq( b_1)}
\figevoffset{12}{1}{\rd_3(x)}
\figevoffset{12}{1}{\rel( b_1)}
\figevoffset{12}{1}{\rel( a_2)}
\figevoffset{12}{1}{\acq( a_1)}
\figevoffset{12}{1}{\acq( a_2)}
\figevoffset{12}{1}{\acq( c)}
\figevoffset{12}{1}{\wt_4(x)}
\figevoffset{12}{1}{\rel( c)}
\figevoffset{12}{1}{\rel( a_2)}
\figevoffset{12}{1}{\rel( a_1)}
}
\execution{2}{
\figevoffset{24}{2}{\acq( a_1)}
\figevoffset{24}{2}{\acq( a_2)}
\figevoffset{24}{2}{\acq( c)}
\figevoffset{24}{2}{\wt_1(x)}
\figevoffset{24}{2}{\rel( c)}
\figevoffset{24}{2}{\rel( a_2)}
\figevoffset{24}{2}{\rel( a_1)}
\figevoffset{24}{2}{\acq( b_1)}
\figevoffset{24}{2}{\acq( a_2)}
\figevoffset{24}{2}{\rd_2(x)}
\figevoffset{24}{2}{\rel( a_2)}
\figevoffset{24}{2}{\rel( b_1)}
}
\execution{2}{
\figevoffset{36}{2}{\acq( b_2)}
\figevoffset{36}{2}{\acq( a_1)}
\figevoffset{36}{2}{\acq( c)}
\figevoffset{36}{2}{\wt_3(x)}
\figevoffset{36}{2}{\rel( c)}
\figevoffset{36}{2}{\rel( a_1)}
\figevoffset{36}{2}{\rel( b_2)}
\figevoffset{36}{2}{\acq( b_1)}
\figevoffset{36}{2}{\acq( b_2)}
\figevoffset{36}{2}{\acq( c)}
\figevoffset{36}{2}{\wt_4(x)}
\figevoffset{36}{2}{\rel( c)}
\figevoffset{36}{2}{\rel( b_2)}
\figevoffset{36}{2}{\rel( b_1)}
}
\caption{
Construction of the trace $\tr$ on input $s=u\#v$ where $u=1001$ and $v=1011$.
Observe that $(e_{15}, e_{40})$ is a sync-preserving race, which encodes that $e[3]\neq v[3]$.
}
\label{fig:space_lowerbound}
\end{figure}

\Paragraph{Reduction from $\Lang_n$ recognition to sync-preserving race prediction.}
Consider the language $\Lang_n$ for some $n$.
We describe a transducer $\Transducer_n$ such that, on input a string 
$s=u\#v$, the output $\Transducer_n(s)$ is a trace $\tr$ with $2$ threads, $O(n\cdot \log n)$ events, $2\cdot \log n + 1$ locks and a single variable such that the following hold.
\begin{compactenum}
\item If $s\not \in \Lang_n$, then $\tr$ has no predictable race.
\item If $s \in \Lang_n$, then $\tr$ has a single predictable race, which is a sync-preserving race.
\end{compactenum}
Moreover, $\Transducer_n$ uses $O(\log n)$ working space.
The transducer $\Transducer_n$ uses a single variable $x$, two sets of locks $A=\{ a_1,\dots, a_{\log n} \}$ and $B=\{b_1, \dots, b_{\log n} \}$, plus one additional lock $c$.
The trace $\tr$ consists of two local traces $\SeqTrace_1$, $\SeqTrace_2$ of threads $\thread_1$ and $\thread_2$ which encode the bits of $u$ and $v$, respectively.

\begin{compactenum}
\item The local trace $\SeqTrace_1$ is constructed as follows.
For every $i\in [n]$, $\SeqTrace_1$ contains an event $e^1_i$, which is a write event $\wt(x)$ if $u[i]=1$, and a read event $\rd(x)$ otherwise.
The events $e^1_i$ are surrounded by locks from $A$ and $B$ arbitrarily, as long as the following holds.
For any $i<j$, we have
\begin{align*}
\lheld{\tr}(e^1_j)\cap A \not \subseteq \lheld{\tr}(e^1_i) \cap A\quad \text{and}\quad \lheld{\tr}(e^1_i)\cap B \not \subseteq \lheld{\tr}(e^1_j) \cap B.
\end{align*}
Here, the locks held at an event $e$ has the obvious meaning: 
$\lheld{\tr}(e) = \setpred{\lk \in \locks{\tr}}{\exists a \in \acquires{\tr}(\lk) \text{ such that } e \in \crit{\tr}(a)}$.

The above property can be easily met, for example, by making $\Transducer_n$ perform a breadth-first traversal of the subset-lattice of $A$ (resp., $B$)
starting from the top (resp., bottom).
Given the current $i$, the transducer surrounds $e^1_i$ with the locks of the current element in the corresponding lattice.
Finally, every write event $e^1_i$ is surrounded by the lock $c$.
\item  The local trace $\SeqTrace_2$ is similar to $\SeqTrace_1$, i.e., we have an event $e^2_i$ for each $i\in[n]$, which is a write event $\wt(x)$ if $u[i]=1$, otherwise it is a read event $\rd(x)$.
The locks that surround $e^2_i$ are such that
\begin{align*}
\lheld{\tr}(e^2_i)\cap (A\cup B) = A\cup B\setminus\lheld{\tr}(e^1_j).
\end{align*}
Finally, similarly to $\SeqTrace_1$, every write event $e^1_2$ is surrounded by the lock $c$.
\end{compactenum}
See \cref{fig:space_lowerbound} for an illustration.
Observe that $\Transducer_{n}$ uses $O(\log n)$ bits of memory, for storing a bit-set of locks for each set $A$ and $B$ that must surround the current event $e^1_i$ and $e^2_i$.

The key idea of the construction is the following.
Any two events $e^1_i, e^2_j$ are surrounded by a common lock from the set $A\cup B$ iff $i\neq j$.
Hence, $(e^1_i, e^2_j)$ may be a predictable race of $\tr$ only if $i=j$.
In turn, if $u[i]=v[j]$, then either both events are read events, or both are write events.
In the former case the events are not conflicting, while in the latter case the two events are surrounded by lock $c$.
In both cases no race occurs between $e^1_i$ and $e^2_j$.
On the other hand, if $u[i]\neq v[j]$, then one event is a read and the other is a write event. 
Hence, the two events conflicting, and one of them is not surrounded by lock $c$, thereby constituting a predictable race.
The following lemma makes the above insight formal and establishes the correctness of our reduction.

\smallskip
\begin{restatable}{lemma}{lemlangncorrectness}\label{lem:langn_correctness}
The following assertions hold.
\begin{compactenum}
\item If $s \in \Lang_n$, then $\tr$ has no predictable race.
\item If $s \not \in \Lang_n$, then $\tr$ has a single predictable race, which is a sync-preserving race.
\end{compactenum}
\end{restatable}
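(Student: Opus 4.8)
The plan is to derive both assertions from a single combinatorial fact about the lock structure that $\Transducer_n$ imposes, and then to decide sync-preservation through the closure characterization of \lemref{zrideal-disjointness} rather than by building witnesses by hand. The core fact I would isolate is: for all $i,j\in[n]$, the events $e^1_i$ and $e^2_j$ are enclosed in a critical section of a common lock of $A\cup B$ if and only if $i\neq j$. Both directions follow from the held-lock constraints. For $i=j$, the defining equation $\lheld{\tr}(e^2_i)\cap(A\cup B)=(A\cup B)\setminus\lheld{\tr}(e^1_i)$ makes the two lock-sets complementary, hence disjoint. For $i\neq j$, say $i<j$, the constraint $\lheld{\tr}(e^1_i)\cap B\not\subseteq\lheld{\tr}(e^1_j)\cap B$ produces a lock $b\in B$ held at $e^1_i$ but not at $e^1_j$; the complementation equation then places $b\in\lheld{\tr}(e^2_j)$, so $b$ is common to $e^1_i$ and $e^2_j$. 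The case $j<i$ is symmetric, using the $A$-constraint instead. I would then record a simple order-independent corollary: two conflicting events that are enclosed in a common lock $\lk$ can never be simultaneously $\tr$-enabled in a correct reordering $\rho$, because enabling each forces its enclosing acquire into $\rho$ (a thread-predecessor) while leaving the matching release out (the release is a thread-successor of an event absent from $\rho$, so downward-closure excludes it), whence $\lk$ would be held by both threads at once, violating lock semantics.

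For assertion~(1), assume $u=v$. The only conflicting pairs are cross-thread pairs $(e^1_i,e^2_j)$ with at least one write. If $i\neq j$ the combinatorial fact yields a common lock, so by the corollary the pair is not even a predictable race. If $i=j$ then $u[i]=v[i]$, so either both events are reads, in which case $e^1_i\not\conf e^2_i$, or both are writes, in which case they are additionally enclosed in lock $c$ and the corollary again rules out a race. Hence no conflicting pair of $\tr$ is a predictable race.

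For assertion~(2), assume $u\neq v$ and let $i$ be a differing position, so exactly one of $e^1_i,e^2_i$ is a write; the two events conflict, and since $i=j$ makes their $A\cup B$-locks disjoint and they do not both write (so they do not share $c$), they lie in no common lock. To show $(e^1_i,e^2_i)$ is a sync-preserving race I would invoke \lemref{zrideal-disjointness} and argue $\set{e^1_i,e^2_i}\cap\ZRIdeal{\tr}(e^1_i,e^2_i)=\emptyset$: the $(\tho{\tr},\lw{\tr})$-closure of $\set{\prev{\tr}(e^1_i),\prev{\tr}(e^2_i)}$ stays within the thread-prefixes strictly before $e^1_i$ and $e^2_i$ (here the only nontrivial point is that each read in these prefixes observes a write that is itself already in the prefix), and the critical-section closure step of \defref{zr-closure} (item~\itmref{crit-closure}) never drags in $e^1_i$ or $e^2_i$, precisely because $\Transducer_n$ emits the lock-delimited blocks so that every critical section of $\thread_2$ sharing a lock with $e^1_i$ is ordered before the critical section enclosing $e^1_i$, so closing under \itmref{crit-closure} only forces in releases that already precede the two events. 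For the uniqueness claim I would show that the sole candidate predictable races are same-position differing-bit pairs: cross-position pairs are killed by the corollary, and same-position equal-bit pairs are non-conflicting or $c$-protected as in~(1); the last-write dependencies then leave exactly one such pair realizable as a reordering, and that pair is the sync-preserving race just certified.

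The main obstacle is assertion~(2). The corollary-based arguments for~(1) are robust and order-independent, but the fact that $(e^1_i,e^2_i)$ admits a \emph{sync-preserving} witness — equivalently, that its $\ZRIdeal{\tr}$ closure omits $e^1_i$ and $e^2_i$ — rests entirely on the concrete order in which the transducer interleaves the lock blocks of the two threads, since this is exactly the point at which a genuinely sync-preserving reordering is distinguished from one that would be forced to reverse a shared critical section (and, by \lemref{trace-order-suffices}, it suffices to search among trace-order linearizations). Pinning down this ordering carefully, and using it to settle both the closure-disjointness and the ``single race'' count, is where the real work lies.
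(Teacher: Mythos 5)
Your combinatorial core fact, its corollary, and all of assertion~(1) are correct, and they coincide with the paper's own argument for that part: the paper likewise disposes of same-position pairs via non-conflict or double $c$-protection, and of cross-position pairs via a common lock in $A\cup B$ obtained from the subset constraints (your version even uses the constraint in the direction that actually appears in the construction, $\lheld{\tr}(e^1_i)\cap B\not\subseteq\lheld{\tr}(e^1_j)\cap B$ for $i<j$, where the paper's write-up swaps the two sides). The common-lock-forbids-enabledness corollary is exactly the fact the paper uses tacitly.

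The genuine gap is in assertion~(2), precisely at the step you yourself flag as ``where the real work lies,'' and the premise you lean on there is false for the trace as constructed. You claim the closure under item~\itmref{crit-closure} of \defref{zr-closure} stays put ``because $\Transducer_n$ emits the lock-delimited blocks so that every critical section of $\thread_2$ sharing a lock with $e^1_i$ is ordered before the critical section enclosing $e^1_i$.'' But $\tr$ is the concatenation $\SeqTrace_1\cdot\SeqTrace_2$ (the transducer must emit all of thread~$1$ while reading $u$, having only $O(\log n)$ space; the figure shows exactly this), so every critical section of $\thread_2$ is trace-ordered \emph{after} the block enclosing $e^1_i$. Worse, your own core fact then works against you: for any $j<i$, the $A$-constraint yields a lock $\lk\in\lheld{\tr}(e^1_i)\setminus\lheld{\tr}(e^1_j)$, hence $\lk\in\lheld{\tr}(e^2_j)$; the unmatched acquire of $\lk$ enclosing $e^1_i$ and $\thread_2$'s later acquire of $\lk$ in block $j$ are both thread-predecessors of $e^1_i$, resp.\ $e^2_i$, so both lie in $\ZRIdeal{\tr}(e^1_i,e^2_i)$, whence item~\itmref{crit-closure} forces $\match{\tr}$ of the former into the ideal and $\tho{\tr}$-downward closure drags in $e^1_i$ itself. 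So for any differing position $i\geq 2$, the route you sketch, executed on the trace as described, concludes the \emph{negation} of the disjointness that \lemref{zrideal-disjointness} requires. A second, independent hole of the same kind: your ``only nontrivial point'' for the $(\tho{\tr},\lw{\tr})$-closure fails when $\thread_2$ has a read before position $i$ with no preceding $\thread_2$-write; its $\lw{\tr}$ is then the last write of $\SeqTrace_1$, which can lie at a position $\geq i$ and again pull $e^1_i$ into the closure. Finally, the ``single predictable race'' count is asserted in one clause with no argument. For calibration: the paper's own proof of this item stops after establishing that the pair is conflicting and lock-disjoint and never verifies sync-preservation or uniqueness either, so you attempted strictly more than the paper does; but as written your attempt cannot be completed—closing it would require either a different interleaving of the two threads' blocks or an argument of a genuinely different shape, neither of which you supply.
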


We are now ready to prove \cref{thm:space_lowerbound}, and refer to 
\appref{app_detection} 
for the proof of \cref{thm:product_lowerbound}.

\begin{proof}[Proof of \cref{thm:space_lowerbound}.]
Consider any algorithm $A_1$ for sync-preserving race prediction, executed in the family of traces $\tr$ constructed in our above reduction.
Let $m=n/\log n$, and assume towards contradiction that $A_1$ uses $o(m)$ space.
Then we can pair $A_1$ with the transducer $\Transducer_{m}$, and obtain a new algorithm $A_2$ for recognizing $\Lang_{m}$.
Since $\Transducer_{m}$ uses $O(\log m)$ space, the space complexity of $A_2$ is $o(m)$.
However, this contradicts \cref{lem:langnlowerbound}.
The desired result follows.
\end{proof}


\section{Beyond Synchronization-Preserving Races}\label{sec:dichotomy}

In this section we explore the problem of dynamic race prediction beyond sync-preservation.
We show  \cref{thm:dichotomy}, i.e., that even when just two critical sections are present in the input trace,
predicting races with witnesses that might reverse the order of the critical sections becomes intractable.
Our reduction is from the realizability problem of Rf-posets, which we present next.

\Paragraph{Rf-posets.}
An rf-poset is a triplet $\RFPoset=(X,P,\RF)$, where 
$X$ is a set of read and write events,
$P$ defines a partial order $\leq_{P}$ over $X$, and $\RF\colon \Reads{X}\to \Writes{X}$
is a reads-from function that maps every read event of $X$ to a write event of $X$.
Given two distinct events $e_1, e_2\in X$, we write $\Unordered{e_1}{}{P}{e_2}$ to denote $e_1\not<_{P} e_2$ and $e_2\not<_{P} e_1$.
Given a set $Y\subseteq X$, we denote by $P\Project Y$ the \emph{projection} of $P$ on $Y$,
i.e., we have $\leq_{P\Project Y}\subseteq Y\times Y$, and for all $e_1,e_2\in Y$, we have $e_1\leq_{P\Project Y} e_2$ iff $e_1\leq_{P} e_2$.
Given a partial order $Q$ over $X$, we say that $Q$ \emph{refines} $P$ denoted $Q\Refines P$
if for every two events $e_1, e_2\in X$, if $e_1\leq_{P} e_2$ then $e_1\leq_{Q} e_2$.
We consider that each event of $X$ belongs to a unique thread, and there is thread order $\tho{}$ that defines a total order on the events of $X$ that belong to the same thread, and $P$ agrees with $\tho{}$.
The number of threads of $\RFPoset$ is the number of threads of the events of $X$.

\SubParagraph{The Realizability problem of rf-posets.}
Given an rf-poset $\RFPoset=(X,P,\RF)$, the \emph{realizability problem} is to decide whether 
$P$ can be linearized to a total order $\tr$ such that $\RFTrace{\tr}=\RF$.
It has long been known that the problem is $\NP$-complete~\cite{Gibbons97}, 
while it was recently shown that it is even $\W{1}$-hard~\cite{Mathur20}.

Our proof of the lower bound of \cref{thm:dichotomy} is by a two-step reduction.
First we define a variant of the realizability problem for rf-posets, namely \emph{reverse rf-realizability}, and show that it is $\W{1}$-hard when parameterized by the number of threads.
Afterwards, we reduce reverse rf-realizability to the decision problem of dynamic race prediction, which concludes the hardness of the latter.

\Paragraph{Rf-triplets.}
Given an RF-poset $\RFPoset=(X,P,\RF)$, an \emph{rf-triplet} of $\RFPoset$ is a tuple $\DistinguishedTriplet=(\wt, \rd, \wt')$ such that 
(i)~$\rd$ is a read event, 
(ii)~$\RF(\rd)=\wt$, and 
(iii)~$\wt\conf\wt'$.
We refer to $\wt$, $\rd$ and $\wt'$ as the
\emph{write}, \emph{read}, and \emph{interfering write} event of $\DistinguishedTriplet$, respectively.
We denote by $\ReadTriplets{\RFPoset}$ the set of rf-triplets of $\RFPoset$.

We next define a variant of rf-poset realizability, and show that, like the original problem, it is $\W{1}$-hard parameterized by the number of threads.

\Paragraph{Reverse rf-poset realizability.}
The input is a tuple $(\RFPoset, \DistinguishedTriplet, \tr)$, where $\RFPoset=(X,P,\RF)$ is an rf-poset,
$\DistinguishedTriplet=(\ov{\wt}, \ov{\rd}, \ov{\wt}')$ is a distinguished triplet of $\RFPoset$, 
and $\tr$ is a witness to the realizability of $\RFPoset$ such that $\ov{\wt}'<_{\tr}\ov{\wt}$.
The task is to determine whether $\RFPoset$ has a linearization $\tr'$ with $\ov{\wt}<_{\tr'} \ov{\wt}'$.
In words, $\RFPoset$ is already realizable by a witness that orders $\ov{\wt}'$ before $\ov{\wt}$, and the task is to decide whether $\RFPoset$ also has a witness in which this order is reversed.

\Paragraph{Hardness of reverse rf-poset realizability.}
We show that the problem is $\W{1}$-hard when parameterized by the number of threads of the rf-poset.
Our reduction is from rf-realizability.
We first present the construction and then argue about its correctness.

\SubParagraph{Construction.}
Consider an rf-poset $\RFPoset=(X,P,\RF)$ with $k$ threads, and we construct an instance of reverse rf-poset realizability $(\RFPoset'=(X',P', \RF'), \DistinguishedTriplet, \tr)$ with $k'=O(k^2)$ threads.
We refer to \cref{fig:rscrm} for an illustration.
For simplicity of presentation, we assume wlog that the following hold.
\begin{compactenum}
\item $X$ contains only the events of the triplets of $\RFPoset$.
\item For every read event $\rd$, we have $\ThreadOf{\rd}=\ThreadOf{\wt}$, i.e., every read observes a local write event.
\end{compactenum}
Let $\{X_i\}_{1\leq i \leq k}$ be a partitioning of $X$ such that each $\SeqTrace_i=P\Project X_i$ is a total order containing all events of thread $i$ (i.e., it is the thread order for thread $i$).
We first construct the rf-poset $\RFPoset'=(X', P', \RF')$.
The threads of $\RFPoset'$ are defined implicitly by the sets of events for each thread.
In particular, $X'$ is partitioned in the sets $X_i$ (which are the events of $\RFPoset$), as well as two sets $X_i^j$ and $Y_i^j$ for each $i,j\in[k]$ with $i\neq j$, where each such $X_i^j$ and $Y_i^j$ contains events of a unique thread of $\RFPoset'$.
Finally, we have two threads containing the events of the distinguished triplet $\DistinguishedTriplet$.
Hence, $\RFPoset'$ has  $k'=k+k\cdot (k-1)+2=k^2+2$ threads.


\begin{figure}
\newcommand{\xdisposition}{0}
\newcommand{\ydisposition}{0}
\newcommand{\xtstep}{0.9}
\newcommand{\ytstep}{0.5}
\newcommand{\xstep}{2.8}
\newcommand{\ystep}{1.2}
\newcommand{\xtscale}{0.8}

\def \mycolone {black!20!green}
\def \mycoltwo {black!20!blue}
\def \mycolthree {black!20!orange}
\def \mycolfour {black!20!red}
\def \mycolfive {black!20!black}

\def \mycolone {black}
\def \mycoltwo {black}
\def \mycolthree {black}
\def \mycolfour {black}
\def \mycolfive {black}

\newcommand{\drawtriplet}[6]{
\node[] (w#1) at (#4,#5)  {$\textcolor{#3}{\wt_{#2}}$};
\node[] (r#1) at (#4,#5-2*\ytstep)  {$\textcolor{#3}{\rd_{#2}}$};
\node[] (wp#1) at (#4+\xtscale*\xtstep,#5-1*\ytstep)  {$\textcolor{#3}{\wt'_{#2}}$};
\ifthenelse{#6=1}{\draw[->, thick] (w#1) to (r#1);}{}
}
\newcommand{\drawtripletR}[6]{
\node[] (w#1) at (#4+\xtscale*\xtstep,#5)  {$\textcolor{#3}{\wt_{#2}}$};
\node[] (r#1) at (#4+\xtscale*\xtstep,#5-2*\ytstep)  {$\textcolor{#3}{\rd_{#2}}$};
\node[] (wp#1) at (#4,#5-1*\ytstep)  {$\textcolor{#3}{\wt'_{#2}}$};
\ifthenelse{#6=1}{\draw[->, thick] (w#1) to (r#1);}{}
}
\newcommand{\drawtripletdistinguished}[6]{
\node[] (w#1) at (#4,#5)  {$\textcolor{#3}{\ov{\wt}_{#2}}$};
\node[] (r#1) at (#4,#5-2*\ytstep)  {$\textcolor{#3}{\ov{\rd}_{#2}}$};
\node[] (wp#1) at (#4+\xtscale*\xtstep,#5-1*\ytstep)  {$\textcolor{#3}{\ov{\wt}'_{#2}}$};
\ifthenelse{#6=1}{\draw[->, thick] (w#1) to (r#1);}{}
}
\begin{subfigure}[b]{0.45\textwidth}
\centering
\begin{tikzpicture}[thick,
pre/.style={<-,shorten >= 2pt, shorten <=2pt, very thick},
post/.style={->,shorten >= 2pt, shorten <=2pt,  very thick},
seqtrace/.style={->, line width=2},
und/.style={very thick, draw=gray},
event/.style={rectangle, minimum height=0.8mm, minimum width=3mm, fill=black!100,  line width=1pt, inner sep=0},
virt/.style={circle,draw=black!50,fill=black!20, opacity=0}]

\drawtriplet{1}{1}{\mycolone}{0*\xstep}{0*\ystep}{1}
\drawtripletR{2}{2}{\mycoltwo}{1*\xstep}{0*\ystep}{1}
\draw[->, thick] (wp2) to (wp1);
\draw[->, thick] (r1) to (r2);


\end{tikzpicture}
\caption{
An instance of rf-poset realizability.
}
\figlabel{rscrm_reduction1}
\end{subfigure}
\begin{subfigure}[b]{0.5\textwidth}
\centering
\begin{tikzpicture}[thick,
pre/.style={<-,shorten >= 2pt, shorten <=2pt, very thick},
post/.style={->,shorten >= 2pt, shorten <=2pt,  very thick},
seqtrace/.style={->, line width=2},
und/.style={very thick, draw=gray},
event/.style={rectangle, minimum height=0.8mm, minimum width=3mm, fill=black!100,  line width=1pt, inner sep=0},
virt/.style={circle,draw=black!50,fill=black!20, opacity=0}]

\begin{scope}[shift={(-3,0)}]
\drawtriplet{1}{1}{\mycolone}{0*\xstep}{0*\ystep}{1}
\drawtripletR{2}{2}{\mycoltwo}{2*\xstep}{0*\ystep}{1}
\drawtriplet{3}{\rd_{1},\rd_{2}}{\mycolthree}{0*\xstep}{-2*\ystep}{1}
\drawtripletR{4}{\wt'_{2},\wt'_{1}}{\mycolfour}{2*\xstep}{-2*\ystep}{1}
\drawtripletdistinguished{5}{}{\mycolfive}{1*\xstep}{-1*\ystep}{1}
\end{scope}

\draw[->, thick, bend right=30] (r1) to (r3);
\draw[->, thick, bend right=20] (wp3) to (r2);
\draw[->, thick, bend right=13] (wp2) to (r4);
\draw[->, thick, bend right=20] (wp4) to (wp1);
\draw[->, thick, bend right=0] (w3) to (r5);
\draw[->, thick, bend right=0] (w4) to (r5);
\draw[->, thick, bend right=20] (wp5) to (wp3);
\draw[->, thick, bend right=0] (wp5) to (wp4);

\end{tikzpicture}
\caption{
The reduction to reverse rf-poset realizability.
}
\figlabel{rscrm_reduction2}
\end{subfigure}
\caption{
Reduction of rf-poset realizability (\protect\ref{fig:rscrm_reduction1}) to reverse rf-poset realizability (\protect\ref{fig:rscrm_reduction2}).
}
\label{fig:rscrm}
\end{figure}

We first define the set of triplets $\ReadTriplets{\RFPoset}$, which defines the event set $X'$ and the observation function $\RF'$.
We have $X\subseteq X'$ and $\ReadTriplets{\RFPoset}\subseteq \ReadTriplets{\RFPoset'}$.
In addition, we create a distinguished triplet $\DistinguishedTriplet=(\ov{\wt}, \ov{\rd}, \ov{\wt}')$, and all its events are in $X'$.
Finally, for every $i,j\in [k]$ with $i\neq j$, we have $X_i^j, Y_i^j\subseteq X'$, where 
the sets $X_i^j$ and $Y_i^j$ are constructed as follows.
We call a pair of events $(e_1, e_2)\in X_i\times X_j$ with $e_1<_{P}e_2$ \emph{dominant}
if for any pair $(e'_1, e'_2)\in X_i\times X_j$ such that
$e_1\leq_{P} e'_1$, and
$e'_2\leq_{P} e_2$, and
$e'_1<_{P} e'_2$,
we have $e'_i=e_i$ for each $i\in[2]$.
In words, a dominant pair identifies an ordering in $P$ that cannot be inferred transitively by other orderings.
For every dominant pair $(e_1, e_2)\in X_i\times X_j$
we create a triplet $(\wt_{e_1, e_2}, \rd_{e_1, e_2}, \wt'_{e_1, e_2})$,
and let $\wt_{e_1, e_2}, \rd_{e_1, e_2}\in X_i^j$ and $\wt'_{e_1, e_2}\in Y_i^j$.

We now define the partial order $P'$.
For every triplet $(\wt, \rd, \wt')$ of $\RFPoset'$, we have $\wt<_{P'} \rd$.
For every $i,j\in [k]$ with $i\neq j$,
for every two events $e_1, e'_1\in X_i$ such that $e_1<_{P}e'_1$,
for every two events $e_2, e'_2\in X_j$ such that $e_2<_{P}e'_2$,
if $\rd_{e_1, e_2}$ and $\wt_{e'_1, e'_2}$ are events of $X'$ (i.e., $(e_1, e_2)$ and $(e'_1, e'_2)$ are dominant pairs),
we have
(i)~$\rd_{e_1, e_2} <_{P'} \wt_{e'_1, e'_2}$ and
(ii)~$\wt'_{e_1, e_2}<_{P'} \wt'_{e'_1, e'_2}$.
Finally, for every triplet of the form $(\wt_{e_1, e_2}, \rd_{e_1, e_2}, \wt'_{e_1, e_2})$, we have
\begin{align*}
e_1<_{P'}\rd_{e_1, e_2}
\qquad\text{and}\qquad
\wt'_{e_1, e_2}<_{P'}e_2
\qquad\text{and}\qquad
\wt_{e_1, e_2}<_{P'} \ov{\rd}
\qquad\text{and}\qquad
\ov{\wt}'<_{P'} \wt'_{e_1, e_2}\ .
\end{align*}
The following lemma establishes that $P'$ is indeed a partial order.

\begin{restatable}{lemma}{lemppo}\label{lem:p_po}
$P'$ is a partial order.
\end{restatable}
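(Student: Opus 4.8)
The plan is to reduce the claim to an acyclicity statement and then discharge it with an explicit rank (potential) function. First I would observe that $P'$ is specified as the reflexive–transitive closure of the finite list of generating constraints displayed in the construction (together with the inherited order $\leq_{P}$ on $X\subseteq X'$). Reflexivity and transitivity therefore hold by construction, and the only non-trivial property is antisymmetry. Antisymmetry of the transitive closure is equivalent to the directed graph $G$ on vertex set $X'$, whose arcs are exactly the generating constraints, being acyclic. So the whole proof amounts to showing that $G$ is a DAG, which I would do by producing a map $r\colon X'\to\nats$ that strictly increases along every arc of $G$.

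To build $r$, I would first fix a linear extension of $P$ and let $p(e)\in\{1,\dots,N\}$ denote the position of $e\in X$ in it, where $N=|X|$; thus $e<_{P}e'$ implies $p(e)<p(e')$. For a dominant pair $(e_1,e_2)$ set the weight
\[
w(e_1,e_2)=(N{+}1)\,p(e_1)+p(e_2),
\]
which is strictly monotone in each coordinate, hence $w(e_1,e_2)<w(e'_1,e'_2)$ whenever $e_1<_{P}e'_1$ and $e_2<_{P}e'_2$. Choosing a separation constant $M$ larger than any weight that occurs, I would assign events to five bands: $r(\ov{\wt})=r(\ov{\wt}')=0$; $r(\wt'_{e_1,e_2})=M+w(e_1,e_2)$; $r(e)=2M+p(e)$ for $e\in X$; $r(\wt_{e_1,e_2})=3M+2\,w(e_1,e_2)$ and $r(\rd_{e_1,e_2})=3M+2\,w(e_1,e_2)+1$; and $r(\ov{\rd})=5M$. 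The bands are pairwise disjoint by the choice of $M$, so $\ov{\wt}'$ and $\ov{\wt}$ are global minima, $\ov{\rd}$ is the global maximum, and the interfering writes, the original events, and the triplet read/write events occupy three successive intervals.

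It would then remain to verify that $r$ increases across every constraint type. The cross-band arcs — $\ov{\wt}'\to\wt'_{e_1,e_2}$, $\wt'_{e_1,e_2}\to e_2$, $e_1\to\rd_{e_1,e_2}$, $\wt_{e_1,e_2}\to\ov{\rd}$, and $\ov{\wt}\to\ov{\rd}$ — are immediate from band separation, and the two intra-band monotone families — $\leq_{P}$ on $X$ and the interfering-write order $\wt'_{e_1,e_2}<_{P'}\wt'_{e'_1,e'_2}$ — increase because $p$ and $w$ do. The one delicate point, and the crux of the argument, is the triplet read/write band, which simultaneously contains the within-triplet arcs $\wt_{e_1,e_2}\to\rd_{e_1,e_2}$ and the across-triplet arcs $\rd_{e_1,e_2}\to\wt_{e'_1,e'_2}$; a priori these opposite-looking arcs are exactly what could create a cycle. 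They do not, because the across-triplet arcs fire only when $e_1<_{P}e'_1$ and $e_2<_{P}e'_2$, so they raise $w$ by at least one integer unit, i.e.\ they raise $2\,w$ by at least two, which dominates the single $+1$ that distinguishes a read from its own write. Concretely $r(\rd_{e_1,e_2})=3M+2\,w(e_1,e_2)+1<3M+2\,w(e'_1,e'_2)=r(\wt_{e'_1,e'_2})$, while $r(\wt_{e_1,e_2})<r(\rd_{e_1,e_2})$ trivially.

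Hence $r$ strictly increases along every arc of $G$, so $G$ is acyclic and $P'$ is antisymmetric; combined with reflexivity and transitivity this proves that $P'$ is a partial order. The main obstacle I anticipate is precisely isolating why the mixed $\wt\!\to\!\rd$ and $\rd\!\to\!\wt$ arcs in the triplet band cannot close a loop; everything else reduces to routine band-separation inequalities. As a by-product, the $r$-order restricts to a total order on each thread of $\RFPoset'$ that refines $P'$, so it also furnishes the thread orders $\tho{}$ required of an rf-poset.
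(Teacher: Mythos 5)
Your proof is correct, and it takes a genuinely different route from the paper's. The paper argues by contradiction via a structural confinement of cycles: if $P'$ had a cycle, the cycle could not visit the distinguished triplet $\DistinguishedTriplet$ (its events are minimal or maximal in $P'$), nor any set $Y_i^j$ (events there have predecessors only in $Y_i^j\cup\DistinguishedTriplet$), nor any set $X_i^j$ (events there have successors only in $X_i^j\cup\DistinguishedTriplet$, and $P'\Project X_i^j$ is acyclic); hence the cycle would lie entirely inside $X$, contradicting that $P'\Project X$ is contained in the partial order $P$. You instead exhibit an explicit rank function $r\colon X'\to\nats$ that strictly increases along every generating constraint, i.e., a strict homomorphism into $(\nats,<)$, which certifies acyclicity directly. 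The two arguments rest on the same structural facts: your band separation mirrors the paper's minimality/maximality and predecessor/successor confinement, and your key inequality --- across-triplet arcs raise $2w$ by at least $2$, dominating the $+1$ step from $\wt_{e_1,e_2}$ to $\rd_{e_1,e_2}$ --- is precisely a numeric witness of the paper's claim that $P'\Project X_i^j$ is acyclic. What your approach buys: it is constructive, producing an explicit linear extension of $P'$ (any tie-break of $r$ works, since equal ranks occur only between $P'$-unordered events), which could be reused downstream to define the linearizations needed in \cref{lem:witness} and in the correctness argument of \cref{lem:rscrm_hardness}; it also isolates the one genuinely delicate interaction, namely the coexistence of $\wt\to\rd$ and $\rd\to\wt$ arcs inside the $X_i^j$ band. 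What the paper's proof buys: it is shorter and arithmetic-free, needing no weighting scheme or choice of constants (your constants do check out: $M\geq \max w+1$ yields all the band inequalities, including $3M+2\max w+1<5M$). One reading note: $P'$ does not actually inherit all of $\leq_P$ on $X$ --- dominant pairs are deliberately left unordered in $P'$, which is essential for the correctness of the overall reduction --- but this is harmless for your argument, since ranking $X$ by a linear extension of $P$ validates any subset of $P$'s orderings.
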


We now turn our attention to the solution $\tr$ of $\RFPoset'$, which is constructed in two steps.
First, we construct a partial order $Q\Refines P'$ over $X'$ which orders in every triplet the interfering write before the write of the triplet.
That is, for every triplet $(\wt,\rd,\wt')$ of $\RFPoset'$, we have $\wt' <_{Q}\wt$.
Then, we obtain $\tr$ by linearizing $Q$ arbitrarily.
The following lemma states that $\tr$ witnesses the realizability of $\RFPoset'$.

\smallskip
\begin{restatable}{lemma}{lemwitness}\label{lem:witness}
The trace $\tr$ realizes $\RFPoset'$.
\end{restatable}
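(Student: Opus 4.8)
The plan is to unfold the definition of realizability and verify its two requirements for $\tr$. By definition, $\tr$ realizes $\RFPoset'$ iff (i)~$\tr$ is a total order that refines $P'$, and (ii)~$\lw{\tr}=\RF'$, i.e. under $\tr$ every read of $X'$ observes exactly the write prescribed by $\RF'$. Requirement (i) is immediate once $\tr$ is known to exist: since $\tr$ linearizes $Q$ and $Q\Refines P'$, transitivity of $\Refines$ yields $P'\subseteq Q\subseteq\, <_{\tr}$, hence $\tr\Refines P'$. So the two substantive tasks are to show that $Q$ is genuinely a partial order (so that a linearization $\tr$ exists at all) and that the reads-from function induced by $\tr$ equals $\RF'$.

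For requirement (ii) I would argue locally, one read at a time; this part is robust to the \emph{arbitrary} choice of linearization. Fix a read $\rd\in X'$ and let $\wt=\RF'(\rd)$. First, $(\wt,\rd,\cdot)$ being a triplet gives $\wt<_{P'}\rd$, hence $\wt<_{\tr}\rd$. Second, let $\wt''$ be any other write to the variable of $\rd$. If $\wt''$ lies in a different thread from $\wt$, then $\wt''\conf\wt$, so $(\wt,\rd,\wt'')$ is an rf-triplet of $\RFPoset'$, and by construction $Q$ orders the interfering write before the write, i.e. $\wt''<_{Q}\wt$; consequently $\wt''<_{\tr}\wt<_{\tr}\rd$. (If instead $\wt''$ shares $\wt$'s thread, thread order is already contained in $P'$ and, since $\RF'(\rd)=\wt$ is thread-order consistent, $\wt''$ cannot fall strictly between $\wt$ and $\rd$.) In all cases no write to $\rd$'s variable lies strictly between $\wt$ and $\rd$ in $\tr$, so $\lw{\tr}(\rd)=\wt=\RF'(\rd)$. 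As $\rd$ was arbitrary, $\lw{\tr}=\RF'$. The crucial point is that the $Q$-edges $\wt'<_Q\wt$ are installed for \emph{every} triplet simultaneously, so every competing write to a read's location is pushed before the intended write; this is precisely why ordering the interfering write before the write of each triplet is sound.

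The main obstacle is showing that $Q$ is acyclic, so that the linearization $\tr$ exists and (i) holds. Here $Q$ is the transitive closure of $P'$ together with the edge set $E=\{(\wt',\wt)\}$ ranging over all triplets of $\RFPoset'$. Since $P'$ is a partial order by \cref{lem:p_po}, any cycle must use at least one $E$-edge. I would analyze a hypothetical shortest cycle according to the type of $E$-edge it uses (original triplets of $\RFPoset$, the gadget triplets attached to dominant pairs, and the distinguished triplet) and exploit the fact that the $P'$-orderings among gadget events faithfully mirror the $P$-orderings of the associated $X$-events: the dominant-pair edges $\rd_{e_1,e_2}<_{P'}\wt_{e'_1,e'_2}$ and $\wt'_{e_1,e_2}<_{P'}\wt'_{e'_1,e'_2}$ increase monotonically along $\leq_{P}$, while the connecting edges $e_1<_{P'}\rd_{e_1,e_2}$ and $\wt'_{e_1,e_2}<_{P'}e_2$ tie each gadget back to its originating pair. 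The aim is to show that any $Q$-cycle would project to a nontrivial cyclic chain of $\leq_{P}$-comparisons on $X$, contradicting that $P$ is a partial order. I expect the bookkeeping across the three edge types, together with the interface to the distinguished triplet ($\wt_{e_1,e_2}<_{P'}\ov{\rd}$ and $\ov{\wt}'<_{P'}\wt'_{e_1,e_2}$), to be the most delicate part. A possibly cleaner alternative is to exhibit an explicit topological order on $X'$ respecting both $P'$ and $E$ directly: fix a linearization of $P$ on $X$, and within each triplet schedule the interfering write, then the write, then the read, interleaving the gadget and distinguished events in the order induced by their originating pairs; one then checks that every edge of $P'\cup E$ points forward in this order, which simultaneously establishes acyclicity and, by the argument above, the reads-from property.
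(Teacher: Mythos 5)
Your outer structure coincides with the paper's---prove that $Q$ is a partial order, linearize it, and read off realizability---and your read-by-read verification that $\lw{\tr}=\RF'$ is carried out \emph{more} explicitly than in the paper, which leaves that step implicit (it holds for exactly the reason you give: for \emph{every} triplet $(\wt,\rd,\wt')$ of $\RFPoset'$ we get $\wt'<_{Q}\wt<_{P'}\rd$, so no conflicting write can fall between $\wt$ and $\rd$ in any linearization of $Q$). The genuine gap is that the acyclicity of $Q$, which is the \emph{entire} content of the paper's proof, is only announced in your proposal, not proved: you offer two candidate strategies and defer the ``delicate bookkeeping.'' The paper's argument is a short class-based one, and neither of your strategies contains its key observation: (a)~every $P'$-predecessor of an interfering-write event is itself an interfering write of some triplet, and (b)~every newly added edge of $Q$ points \emph{from} an interfering write \emph{to} the write of a triplet, and triplet writes are never interfering writes. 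Given (a) and (b), a hypothetical cycle must use a new edge $(\wt',\wt)$; tracing the cycle backwards from $\wt'$, no new edge can enter an interfering write (by (b)), so by (a) every node reached is again an interfering write, forcing the whole cycle to consist of interfering writes---contradicting that $\wt$, which is not an interfering write, also lies on it.

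Here is why your sketched routes would not close this on their own. The projection-to-$P$ idea (monotonicity of the gadget orderings in the dominant pairs) does dispose of cycles through the gadget sets $X_i^j$ and $Y_i^j$ and the distinguished events, but it is powerless on the portion of a cycle that lives entirely inside $X$: there $P'$ restricts to thread order alone---the cross-thread orderings of $P$ were deliberately discarded, that being the whole point of the gadgets---while the new edges $\wt'<_{Q}\wt$ coming from the \emph{original} triplets of $\RFPoset$ are cross-thread edges between $X$-events. A shape such as $\wt'_1<_{Q}\wt_1$, $\wt_1$ thread-ordered before $\wt'_2$, $\wt'_2<_{Q}\wt_2$, $\wt_2$ thread-ordered before $\wt'_1$ projects to no nontrivial chain of $\leq_{P}$-comparisons, so ``$P$ is a partial order'' cannot refute it; observations (a)+(b) do refute it, since $\wt_1$ would then be a non-interfering-write $P'$-predecessor of the interfering write $\wt'_2$. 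Your alternative plan---an explicit topological order that schedules, within each triplet, the interfering write before the write---begs the question for the same reason: whether such a schedule is consistent with the thread orders on $X$ is precisely the claim to be established.
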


Observe that the size of $\RFPoset'$ is polynomial  in the size of $\RFPoset$.
The following lemma states the correctness of the reduction.
We refer to \cref{sec:app_dichotomy} for the detailed proof, while here we sketch the argument.

\smallskip
\begin{restatable}{lemma}{lemrscrmhardness}\label{lem:rscrm_hardness}
Reverse rf-poset realizability is $\W{1}$-hard parameterized by the number of threads.
\end{restatable}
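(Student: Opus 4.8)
The plan is to establish \cref{lem:rscrm_hardness} by verifying that the construction preceding it is a correct parameterized reduction from rf-poset realizability, which is $\W{1}$-hard parameterized by the number of threads~\cite{Mathur20}. By \cref{lem:p_po} the relation $P'$ is a partial order, and by \cref{lem:witness} the trace $\tr$ realizes $\RFPoset'$ with $\ov{\wt}'<_{\tr}\ov{\wt}$; hence $(\RFPoset',\DistinguishedTriplet,\tr)$ is a legal instance of reverse rf-poset realizability. Since $\RFPoset'$ has $k'=k^2+2$ threads, the new parameter is a function of the old one and the construction runs in polynomial time, so the reduction is parameterized. The remaining and main task is correctness: $\RFPoset$ is realizable if and only if $\RFPoset'$ admits a linearization $\tr'$ with $\ov{\wt}<_{\tr'}\ov{\wt}'$.

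First I would prove the direction that drives the construction: if such a $\tr'$ exists, then $\RFPoset$ is realizable. The argument is a forcing cascade triggered by reversing $\DistinguishedTriplet$. Since $\RF'(\ov{\rd})=\ov{\wt}$ and $\ov{\wt}<_{P'}\ov{\rd}$, preserving reads-from forbids $\ov{\wt}'$ from lying between $\ov{\wt}$ and $\ov{\rd}$ in $\tr'$; together with $\ov{\wt}<_{\tr'}\ov{\wt}'$ this forces $\ov{\rd}<_{\tr'}\ov{\wt}'$. For every gadget triplet $(\wt_{e_1,e_2},\rd_{e_1,e_2},\wt'_{e_1,e_2})$ the edges $\wt_{e_1,e_2}<_{P'}\ov{\rd}$ and $\ov{\wt}'<_{P'}\wt'_{e_1,e_2}$ then give $\wt_{e_1,e_2}<_{\tr'}\wt'_{e_1,e_2}$, and preserving reads-from for $\rd_{e_1,e_2}$ forces $\rd_{e_1,e_2}<_{\tr'}\wt'_{e_1,e_2}$. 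Combining with $e_1<_{P'}\rd_{e_1,e_2}$ and $\wt'_{e_1,e_2}<_{P'}e_2$ yields $e_1<_{\tr'}e_2$ for every dominant pair $(e_1,e_2)$. A short tightening argument shows that the dominant pairs together with the thread orders generate $<_{P}$, so $\proj{\tr'}{X}$ refines $P$; and because the original triplets are triplets of $\RFPoset'$ while every auxiliary event uses a fresh variable, $\proj{\tr'}{X}$ preserves $\RF$ and hence realizes $\RFPoset$.

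For the converse I would start from a realization $\pi$ of $\RFPoset$ and build $\tr'$ by augmenting $P'$ with an ordering of the events of $X$ derived from $\pi$, together with flip edges $\rd<\wt'$ on every gadget triplet and $\ov{\rd}<\ov{\wt}'$ on $\DistinguishedTriplet$, and then linearizing arbitrarily. Any such linearization reverses the distinguished triplet and, since every triplet is then ordered $\wt<\rd<\wt'$ and distinct triplets use distinct variables, preserves $\RF'$; the reads-from of the original triplets is inherited from $\pi$. Thus the resulting total order realizes $\RFPoset'$ with $\ov{\wt}<_{\tr'}\ov{\wt}'$, as required.

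The hard part will be the consistency (acyclicity) of this augmented order in the converse direction. Through the barrier $\ov{\rd}<\ov{\wt}'$, the edges $\wt_{e_1,e_2}<_{P'}\ov{\rd}$ and $\ov{\wt}'<_{P'}\wt'_{a,b}$ link the write of \emph{every} gadget to the interfering write of \emph{every} gadget, so a spurious cycle can appear if the chosen order on $X$ lets some interfering write reach a gadget write again through a chain of gadget and intra-$X$ edges; indeed a careless choice of $\pi$ does create such a cycle. I therefore expect the crux to be selecting the linearization of $X$ carefully and proving acyclicity by the following scheme: any cycle must cross the barrier exactly once (entering at $\ov{\rd}$, leaving at $\ov{\wt}'$), and along the complementary portion the $P$-rank of the original event associated with each gadget event is non-decreasing, because gadget edges, chain edges, and the chosen intra-$X$ edges all move forward in $P$; a cycle would then have to reverse this rank across the barrier, which the careful choice of ordering precludes. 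Pinning down exactly which realization to lift and verifying this invariant is the delicate step, whereas the reads-from preservation and the parameter bound are routine.
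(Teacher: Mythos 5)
Your backward direction (a witness of $\RFPoset'$ with $\ov{\wt}<_{\tr'}\ov{\wt}'$ yields a witness of $\RFPoset$) follows the same forcing cascade as the paper, and your ``tightening'' step---that the dominant pairs together with the thread orders generate all of $<_P$---is actually a point the paper's own proof glosses over (its appendix argument invokes ``the triplet $(\wt_{e,e'},\rd_{e,e'},\wt'_{e,e'})$'' for an \emph{arbitrary} ordered pair $e<_P e'$, although gadget triplets exist only for dominant pairs). So that half of your proposal is sound, and somewhat more careful than the paper's.

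The genuine gap is in the other direction, and the repair you sketch cannot close it. You attribute the possible cycle to ``a careless choice of $\pi$'' and plan to fix it by selecting the linearization of $X$ carefully, proving a rank invariant along the intra-$X$ edges. But the dangerous cycle need not use any $\pi$-edge at all: it can lie entirely inside $P'$ together with the single flip edge $\ov{\rd}<\ov{\wt}'$. Concretely, take three threads with events $\{a\}$, $\{c,e_1\}$, $\{d,e_2\}$, where $c$ is a write, $e_1$ is a read with $\RF(e_1)=c$, and $a,d,e_2$ are interfering writes on the same variable (so the paper's assumptions hold), and let $P$ be generated by the thread orders together with $a<_P c$, $c<_P d$, $e_1<_P e_2$. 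This rf-poset is realizable (by $a\,c\,e_1\,d\,e_2$). The pairs $(a,c)$, $(c,d)$, $(e_1,e_2)$ are all dominant, and since $c<_P e_1$ and $d<_P e_2$, the chain rule inserts $\rd_{c,d}<_{P'}\wt_{e_1,e_2}$. Chasing the construction's edges then gives the path $\ov{\wt}'<_{P'}\wt'_{a,c}<_{P'}c<_{P'}\rd_{c,d}<_{P'}\wt_{e_1,e_2}<_{P'}\ov{\rd}$, i.e., $\ov{\wt}'<_{P'}\ov{\rd}$ holds already in $P'$. Hence \emph{no} partial order can refine $P'$ and contain $\ov{\rd}<\ov{\wt}'$, no matter how the events of $X$ are ordered, so constraining $\pi$ is hopeless; any fix must alter the construction itself (e.g., which gadgets receive the long-range edges $\wt_{e_1,e_2}<_{P'}\ov{\rd}$ and $\ov{\wt}'<_{P'}\wt'_{e_1,e_2}$, or the chain edges), not the choice of linearization. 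For comparison, the paper's own proof of this direction simply posits the partial order $Q$---which is exactly your augmented order---and never argues acyclicity; you have therefore correctly located the soft spot of the argument, but your proposal neither closes it nor can be closed along the lines you describe.
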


\SubParagraph{Correctness.}
We now present the key insight behind the correctness of the reduction.
Consider any dominant pair of events $(e_1, e_2)$ in the initial rf-poset, i.e., we have $e_1<_{P} e_2$.
Observe that the two events are unordered in $P'$.
Now consider any trace $\tr^*$ that solves the reverse rf-poset realizability problem for $\RFPoset'$.
By definition, $\tr^*$ must reverse the order of the two writes of the conflicting triplet, i.e., we must have 
$\ov{\wt}<_{\tr^*}\ov{\wt}'$.
\begin{compactenum}
\item Since $\ov{\wt}<_{\tr^*}\ov{\wt}'$, we  have $\ov{\rd}<_{\tr^*}\ov{\wt}'$, so the last write of $\ov{\rd}$ is not violated in $\tr^*$.
\item Since $\wt_{e_1,e_2} <_{P'} \ov{rd}$, by the previous item we also have transitively $\wt_{e_1,e_2} <_{P'} \ov{\wt}'$, and since $\ov{\wt}'<_{P'}\ov{\wt}'_{e_1, e_2}$,
we have , transitively $\wt_{e_1,e_2}<_{\tr^*} \wt'_{e_1, e_2}$.
\item Since $\wt_{e_1,e_2}<_{\tr^*} \wt'_{e_1, e_2}$, we  have $\rd_{e_1, e_2}<_{\tr^*} \wt'_{e_1, e_2}$, so the last write of $\rd_{e_1, e_2}$ is not violated in $\tr^*$.
\item Finally, since $e_1<{P'} \rd_{e_1, e_2}$ and $\wt'_{e_1, e_2}<e_2$, we also have, transitively, that $e_1<_{\tr^*}e_2$.
\end{compactenum}
Hence, the witness $\tr^*$ also respects the partial order $P$, and thus also serves as a witness of the realizability of $\RFPoset$ (when projected to the set of events $X$).
Thus, if reverse rf-poset realizability holds for $\RFPoset'$, then rf-poset realizability holds for $\RFPoset$.
The inverse direction is similar.

\Paragraph{Hardness of 1-reversal dynamic race prediction.}
We are now ready to prove our second step of the reduction, i.e.,
to establish an FPT reduction from reverse rf-poset realizability to the decision problem of dynamic race prediction.
We first describe the construction.


\begin{figure}[t]
\newcommand{\xdisposition}{0}
\newcommand{\ydisposition}{0}
\newcommand{\xtstep}{0.9}
\newcommand{\ytstep}{0.5}
\newcommand{\xstep}{2.8}
\newcommand{\ystep}{1.2}
\newcommand{\xtscale}{0.75}

\def \mycolone {black!20!green}
\def \mycoltwo {black!20!blue}
\def \mycolthree {black!20!orange}
\def \mycolfour {black!20!red}
\def \mycolfive {black!20!black}

\def \mycolone {black}
\def \mycoltwo {black}
\def \mycolthree {black}
\def \mycolfour {black}
\def \mycolfive {black}

\newcommand{\drawtriplet}[6]{
\node[] (w#1) at (#4,#5)  {$\textcolor{#3}{\wt_{#2}}$};
\node[] (r#1) at (#4,#5-2*\ytstep)  {$\textcolor{#3}{\rd_{#2}}$};
\node[] (wp#1) at (#4+\xtscale*\xtstep,#5-1*\ytstep)  {$\textcolor{#3}{\wt'_{#2}}$};
\ifthenelse{#6=1}{\draw[->, thick] (w#1) to (r#1);}{}
}
\newcommand{\drawtripletR}[6]{
\node[] (w#1) at (#4+\xtscale*\xtstep,#5)  {$\textcolor{#3}{\wt_{#2}}$};
\node[] (r#1) at (#4+\xtscale*\xtstep,#5-2*\ytstep)  {$\textcolor{#3}{\rd_{#2}}$};
\node[] (wp#1) at (#4,#5-1*\ytstep)  {$\textcolor{#3}{\wt'_{#2}}$};
\ifthenelse{#6=1}{\draw[->, thick] (w#1) to (r#1);}{}
}
\newcommand{\drawtripletdistinguished}[6]{
\node[] (w#1) at (#4,#5)  {$\textcolor{#3}{\ov{\wt}_{#2}}$};
\node[] (r#1) at (#4,#5-2*\ytstep)  {$\textcolor{#3}{\ov{\rd}_{#2}}$};
\node[] (wp#1) at (#4+\xtscale*\xtstep,#5-1*\ytstep)  {$\textcolor{#3}{\ov{\wt}'_{#2}}$};
\ifthenelse{#6=1}{\draw[->, thick] (w#1) to (r#1);}{}
}

\begin{subfigure}[b]{0.4\textwidth}
\centering
\begin{tikzpicture}
\node[] at (0*\xstep, 0*\ystep){$\DistinguishedTriplet=(\ov{\wt}(x_2), \ov{\rd}(x_2), \ov{\wt}'(x_2))$};
\end{tikzpicture}
\vspace{1cm}

\centering
\execution{4}{
\figev{4}{\ov{\wt}'(x_2)}
\figev{3}{\wt'(x_1)}
\figev{2}{\ov{\wt}(x_2)}
\figev{2}{\ov{\rd}(x_2)}
\figev{1}{\wt(x_1)}
\figev{1}{\rd(x_1)}
\orderedgewithlabel{4}{1}{-0.4}{3}{2}{0.4}{\small $\leq_{P}$}{above}
}
\caption{
An instance of reverse rf-poset realizability $(\RFPoset=(X,P,\RF),\DistinguishedTriplet, \tr)$.
The figure shows $\tr$, and $P$ is defined as the thread order together with the cross-thread ordering $\wt'(x_2)<_{P}\wt'(x_1)$.
}
\figlabel{race_reduction1}
\end{subfigure}
\hspace{0.1in}
\begin{subfigure}[b]{0.5\textwidth}
\centering
\executionlarge{4}{
\figevlarge{4}{\acq_1(\ell)}
\figevlarge{4}{\ov{\wt}'(x_2)}
\figevlarge{4}{\wt\left(x_{\ov{\wt}'(x_2), \wt'(x_1)}\right)}
\figevlarge{4}{\mathbf{\ov{\wt}(y)}}
\figevlarge{4}{\rel_1(\ell)}
\figevlarge{3}{\rd\left(x_{\ov{\wt}'(x_2), \wt'(x_1)}\right)}
\figevlarge{3}{\wt'(x_1)}
\figevlarge{3}{\wt(x^3)}
\figevlarge{2}{\acq_2(\ell)}
\figevlarge{2}{\ov{\wt}(x_2)}
\figevlarge{2}{\rel_2(\ell)}
\figevlarge{2}{\ov{\rd}(x_2)}
\figevlarge{1}{\wt(x_1)}
\figevlarge{1}{\rd(x_1)}
\figevlarge{1}{\wt(x^1)}
\figevlarge{2}{\rd(x^1)}
\figevlarge{2}{\rd(x^3)}
\figevlarge{2}{\mathbf{\ov{\rd}(y)}}
}
\caption{The instance of race prediction using our reduction.}
\figlabel{race_reduction2}
\end{subfigure}
\caption{
Example of our reduction of an instance of reverse rf-poset realizability (\protect\figref{race_reduction1}) to an instance of dynamic data-race prediction (\protect\figref{race_reduction2}) on the event pair $(e_{4}, e_{18})$.
}
\label{fig:race_reduction}
\end{figure}

Consider an instance $(\RFPoset=(X,P, \RF), \DistinguishedTriplet=(\ov{\wt},\ov{\rd},\ov{\wt}'), \tr)$ of reverse rf-poset realizability, and we construct a trace $\tr'$ such a specific event pair of $\tr'$ is a predictable race iff $\RFPoset$ is realizable by a witness that reverses $\DistinguishedTriplet$. 
We assume wlog that $X$ contains only events that appear in triplets of $\RFPoset$.
We construct $\tr'$ by inserting various events in $\tr$, as follows.
\cref{fig:race_reduction} provides an illustration.
\begin{compactenum}
\item\label{item:race_red1} For every dominant pair $(e_1, e_2)$ of $\RFPoset$, we introduce a new variable $x_{e_1,e_2}$, and a write event $\wt(x_{e_1, e_2})$ and a read event $\rd(x_{e_1, e_2})$.
We make $\ThreadOf{\wt(x_{e_1, e_2})}=\ThreadOf{e_1}$ and $\ThreadOf{\rd(x_{e_1, e_2})}=\ThreadOf{e_2}$.
Finally, we thread-order $\wt(x_{e_1, e_2})$ after $e_1$ and $\rd(x_{e_1, e_2})$ before $e_2$.
Notice that any correct reordering $\tr^*$ of $\tr'$ must order $\wt(x_{e_1, e_2}) \trord{\tr^*} \rd(x_{e_1, e_2})$, and thus, transitively, also order $e_1 \trord{\tr^*} e_2$.
\item\label{item:race_red2} For every thread $\thread_i\neq \ThreadOf{\ov{\wt}}$, $\thread_i\neq \ThreadOf{\ov{\wt}'}$
we introduce a new variable $x^i$, a write event $\wt(x^i)$, and a read event $\rd(x^i)$.
We make $\ThreadOf{\wt(x^i)}=\thread_i$ and $\ThreadOf{\rd(x^i)}=\ThreadOf{\ov{\rd}}$.
We thread-order each $\wt(x^i)$ as the last event of $\thread_i$, and thread-order all $\rd(x^i)$ as final events of $\ThreadOf{\ov{\rd}}$ so far.
\item\label{item:race_red3} We introduce a new variable $y$, and a write event $\wt(y)$ and a read event $\rd(y)$.
We make $\ThreadOf{\wt(y)}=\ThreadOf{\ov{\wt}'}$ and $\ThreadOf{\rd(y)}=\ThreadOf{\ov{\wt}}$.
Finally, we thread-order $\wt(y)$ and $\rd(y)$ at the end of their respective threads.
In particular, $\rd(y)$ is thread-ordered after the events $\rd(x^i)$ introduced in the previous item.
Notice that because of this ordering and the previous item, any correct reordering $\tr^*$ of $\tr'$ must contain all events of $X'$.
\item\label{item:race_red4} We introduce a lock $\ell$ and two pairs of lock-acquire and lock-release events $(\acq_i(\ell), \rel_i(\ell))$, for each $i\in[2]$.
We make $\ThreadOf{\acq_i(\ell)}=\ThreadOf{\rel_i(\ell)}=\thread_j$, where $\thread_j=\ThreadOf{\ov{\wt}'}$ if $i=1$ and $\thread_j=\ThreadOf{\ov{\wt}}$ otherwise.
Finally, we surround with the critical section of $\acq_1(\ell), \rel_1(\ell)$ all events of the corresponding thread,
and surround with the critical section of $\acq_2(\ell), \rel_2(\ell)$ the event $\ov{\wt}$.
Notice that any correct reordering $\tr^*$ that witnesses a race on $(\wt(y), \rd(y))$ is missing $\rel_1(\ell)$, and thus must order $\rel_2(\ell)\trord{\tr^*} \acq_1(\ell)$. In turn, this leads to a transitive ordering $\ov{\wt}\trord{\tr^*}\ov{\wt}'$, and since the last write of $\ov{\rd}$ must be $\lw{\tr^*}(\ov{\rd})=\ov{\wt}$, we must also have $\ov{\rd}\trord{\tr^*}\ov{\wt}'$.
\end{compactenum}
We now outline the correctness of the reduction (see \cref{sec:app_dichotomy} for the proof).
Consider any correct reordering $\tr^*$ that witnesses a predictable race $(\wt(y), \rd(y))$ on $\tr'$.
\cref{item:race_red2} and \cref{item:race_red3} above guarantee that $X\subseteq  \events{\tr^*}$, while \cref{item:race_red1} guarantees that $\tr^*$ linearizes $P$, and \cref{item:race_red4} guarantees that $\tr^*$ reverses $\DistinguishedTriplet$, i.e., $\ov{\rd}\trord{\tr^*}\ov{\wt}'$.
Finally, note that $\tr'$ has size that is polynomial in $n$, while the number of threads of $\tr'$ equals the number of threads of $\RFPoset$.
This concludes the proof of \cref{thm:dichotomy}.


\section{Experiments}\label{sec:experiments}

In this section we report on an implementation and experimental evaluation of the techniques presented in this work.
Our objective is two-fold.
The first goal is to quantify the practical relevance of sync-preservation, i.e., whether in practice the definition captures races that are missed by the standard notion of happens-before and \textsf{WCP}~\cite{wcp2017}  races.
The second goal is to evaluate the performance of our algorithm $\ZeroRevAlgo$ for detecting sync-preserving races.

\subsection{Experimental Setup}

We have implemented $\ZeroRevAlgo$ (\algoref{overall}) for predicting 
all sync-preserving races in our tool \toolname~\cite{rapid}, written in Java, and evaluated it on a standard set of benchmarks.

\Paragraph{Benchmarks.}
Our benchmark set consists of standard benchmarks found in the recent literature~\cite{rv2014,wcp2017,Yu18,shb2018,Roemer18,PavlogiannisPOPL20}.
It consists of $30$ concurrent programs taken from standard benchmark suites:
\begin{enumerate*}[label=(\roman*)]
\item the IBM Contest benchmark suite~\cite{Farchi2003},
\item the Java Grande forum benchmark suite~\cite{JGF2001},
\item the DaCapo benchmark suite~\cite{DaCapo2006}, 
\item the Software Infrastructure Repository~\cite{doESE05}, and
\item some standalone benchmarks.
\end{enumerate*}
For each benchmark, we generated a single trace using RV-Predict~\cite{rvpredict} and evaluated all methods on the same trace.

\Paragraph{Compared methods.}
We compare our algorithm with state-of-the-art sound race detectors, namely, \textsf{SHB}~\cite{shb2018}, \textsf{WCP}~\cite{wcp2017} and M2~\cite{PavlogiannisPOPL20}.
Recall that \textsf{SHB} and \textsf{WCP} are linear time algorithms that perform a single pass of the input trace $\tr$.
\textsf{SHB} computes happens-before races and is sound even beyond the first race.
On the other hand, \textsf{WCP} is only sound for the first race report.
In order to allow \textsf{WCP} to soundly report more than one race, whenever a race is reported on an event pair $(e_1, e_2)$
(i.e., we have $\Unordered{e_1}{\tr}{\textsf{WCP}}{e_2}$), we force an order $e_1\wcp{\tr} e_2$ before proceeding with the next event of $\tr$.
This is a standard practice that has been followed in other works, e.g.,\cite{Roemer18,PavlogiannisPOPL20}.
Finally, M2 is a more heavyweight algorithm that makes sound reports for all races by design, though its running time is a larger polynomial (of order $n^4$)~\cite{PavlogiannisPOPL20}.

\Paragraph{Optimizations.}
In general, the benchmark traces can be huge and often scale to sizes of order as large as $10^8$. 
A closer inspection shows that many events, even though they perform accesses to shared memory, are non-racy and even totally ordered by fork-join mechanisms and data flows in the trace.
We have implemented a lightweight, linear time, single-pass optimization of the input trace $\tr$ that filters out such events.
The optimization simply identifies memory locations $x$ whose conflicting accesses are totally ordered in $\tr$ by thread and data-flow orderings, and ignores all such accesses in $\tr$.
For a fair comparison, we employ the optimization in all compared methods.
This is similar to FastTrack-like optimizations~\cite{fasttrack}, which identify and ignore thread-local events.
We note that, as each of the compared methods attempts to report as many races as possible, epoch-like optimizations were not applied.

\Paragraph{Reported results.}
Our experiments were conducted on
a 2.6GHz 64-bit Linux machine with Java 1.8 as the JVM and
30GB heap space.
Each of the compared methods is evaluated on the same input trace $\tr$.
For every such input, the respective method reports the following race warnings.
\begin{compactenum}
\item\label{item:race_reports_warnings} \emph{Racy events.} We report the number of events $e_2$ such that there is an event $e_1$ with $e_1 \stricttrord{} e_2$ for which a race $(e_1, e_2)$ is detected.
We remark that this is the standard way of reporting race warnings~\cite{fasttrack,wcp2017,shb2018,Roemer18,Bond2019}, as it allows for one-pass, linear time algorithms that avoid the overhead of testing for races between all possible $\Theta(n^2)$ pairs of events.
\item\label{item:race_reports_codelines} \emph{Racy source-code lines.} We report the number of distinct source-code lines which correspond to events $e_2$ that are found as racy in \cref{item:race_reports_warnings}.
This is a meaningful measure, as the same source-code line might be reported by many different events $e_2$.
\item\label{item:race_reports_variables} \emph{Racy memory locations.} We report the number of different memory locations that are accessed by all the events $e_2$ that are found as racy in \cref{item:race_reports_warnings}.
\item\label{item:time_reports} \emph{Running time.} We measure the time of the algorithm required to process each benchmark, while imposing a $1$-hour timeout (TO).
\end{compactenum}


\begin{table}
\caption{
Dynamic race reports in our benchmarks.
$\NumEvents$ and $\NumThreads$ denote the number of events and number of threads in the respective trace.
For races, an entry `$r$ ($s$)' denotes the number $r$ of events $e_2$ found to be in race with an earlier event $e_1$,
as well as the number $s$ of unique source-code lines corresponding to such events $e_2$.
Bold-face entries highlight cases where there are sync-preserving races that are not happens-before races.
}
\label{tab:races}
\setlength\tabcolsep{1.5pt}
\renewcommand{\arraystretch}{1.1}
\footnotesize
\scalebox{1.0}{
\centerline{
\begin{tabular}{|c|c|c|||c|c||c|c||c|c||c|c|}
\hline
\textbf{Benchmark} & $\NumEvents$ & $\NumThreads$ & \multicolumn{2}{c||}{\textsf{SHB}}& \multicolumn{2}{c||}{\textsf{WCP}}& \multicolumn{2}{c||}{\textsf{M2}}& \multicolumn{2}{c|}{$\ZeroRevAlgo$}\\
\hline
&&&\textbf{Races} & \textbf{Time} &\textbf{Races} & \textbf{Time}&\textbf{Races} & \textbf{Time}&\textbf{Races} & \textbf{Time}\\
\hline
\texttt{array} & 51 & 4 & 0 (0) & 0.02s & 0 (0) & 0.03s & 0 (0) & 0.09s & 0 (0) & 0.04s\\
\texttt{critical} & 59 & 5 & 3 (3) & 0.19s & 1 (1) & 0.03s & 3 (3) & 0.11s & 3 (3) & 0.07s\\
\texttt{account} & 134 & 5 & 3 (1) & 0 & 3 (1) & 0.06s & 3 (1) & 0.23s & 3 (1) & 0.09s\\
\texttt{airtickets} & 140 & 5 & 8 (3) & 0.02s & 5 (2) & 0.03s & 8 (3) & 0.13s & 8 (3) & 0.05s\\
\texttt{pingpong} & 151 & 7 & 8 (3) & 1.09s & 8 (3) & 0.04s & 8 (3) & 0.17s & 8 (3) & 0.06s\\
\texttt{twostage} & 193 & 13 & 4 (1) & 0.02s & 4 (1) & 0.09s & 4 (1) & 0.20s & 4 (1) & 0.10s\\
\texttt{wronglock} & 246 & 23 & 12 (2) & 0.02s & 3 (2) & 0.09s & 25 (2) & 0.43s & \textbf{25 (2)} & 0.15s\\
\texttt{bbuffer} & 332 & 3 & 3 (1) & 0.01s & 1 (1) & 0.05s & 3 (1) & 0.11s & 3 (1) & 0.06s\\
\texttt{prodcons} & 658 & 9 & 1 (1) & 0.03s & 1 (1) & 0.09s & 1 (1) & 0.20s & 1 (1) & 0.10s\\
\texttt{clean} & 1.0K & 10 & 59 (4) & 0.04s & 33 (4) & 0.14s & 110 (4) & 0.85s & \textbf{60 (4)} & 0.17s\\
\texttt{mergesort} & 3.0K & 6 & 1 (1) & 11m10s & 1 (1) & 0.12s & 5 (2) & 0.96s & \textbf{3 (1)} & 0.13s\\
\texttt{bubblesort} & 4.0K & 13 & 269 (5) & 0.03s & 100 (5) & 0.27s & 374 (5) & 8.05s & 269 (5) & 0.50s\\
\texttt{lang} & 6.0K & 8 & 400 (1) & 0.10s & 400 (1) & 0.23s & 400 (1) & 1.31s & 400 (1) & 0.31s\\
\texttt{readswrites} & 11K & 6 & 92 (4) & 0.12s & 92 (4) & 0.41s & 228 (4) & 12.74s & \textbf{199 (4)} & 0.77s\\
\texttt{raytracer} & 15K & 4 & 8 (4) & 0.02s & 8 (4) & 0.30s & 8 (4) & 0.40s & 8 (4) & 0.30s\\
\texttt{bufwriter} & 22K & 7 & 8 (4) & 0.10s & 8 (4) & 0.70s & 8 (4) & 2.65s & 8 (4) & 0.84s\\
\texttt{ftpserver} & 49K & 12 & 69 (21) & 6.91s & 69 (21) & 1.34s & 85 (21) & 4.11s & \textbf{85 (21)} & 4.69s\\
\texttt{moldyn} & 200K & 4 & 103 (3) & 0.05s & 103 (3) & 1.83s & 103 (3) & 1m25s & 103 (3) & 1.86s\\
\texttt{linkedlist} & 1.0M & 13 & 5.0K (4) & 7.25s & 5.0K (3) & 27.07s & TO & TO & \textbf{7.0K (4)} & 5m19s\\
\texttt{derby} & 1.0M & 5 & 29 (10) & 0.01s & 28 (10) & 16.48s & 30 (11) & 22.49s & 29 (10) & 24.07s\\
\texttt{jigsaw} & 3.0M & 12 & 4 (4) & 0.41s & 4 (4) & 19.53s & 6 (6) & 11.69s & \textbf{6 (6)} & 17.30s\\
\texttt{sunflow} & 11M & 17 & 84 (6) & 39.66s & 58 (6) & 47.14s & 130 (7) & 50.24s & \textbf{119 (7)} & 55.30s\\
\texttt{cryptorsa} & 58M & 9 & 11 (5) & 3m4s & 11 (5) & 6m35s & TO & TO & \textbf{35 (7)} & 9m42s\\
\texttt{xalan} & 122M & 7 & 31 (10) & 0.15s & 21 (7) & 15m30s & TO & TO & \textbf{37 (12)} & 10m44s\\
\texttt{lufact} & 134M & 5 & 21K (3) & 7m26s & 21K (3) & 14m57s & TO & TO & 21K (3) & 10m38s\\
\texttt{batik} & 157M & 7 & 10 (2) & 9m49s & 10 (2) & 22m56s & TO & TO & 10 (2) & 11m59s\\
\texttt{lusearch} & 217M & 8 & 232 (44) & 12.63s & 119 (27) & 13m40s & 232 (44) & 27m9s & 232 (44) & 14m5s\\
\texttt{tsp} & 307M & 10 & 143 (6) & 15m2s & 140 (6) & 29m10s & TO & TO & 143 (6) & 20m19s\\
\texttt{luindex} & 397M & 3 & 1 (1) & 24m40s & 2 (2) & 31m6s & TO & TO & \textbf{15 (15)} & 31m46s\\
\texttt{sor} & 606M & 5 & 0 (0) & 38m38s & 0 (0) & TO & TO & TO & 0 (0) & 44m36s\\
\hline
\textbf{Totals} & 2.0B & - & 29520 (157) & 1h51m & 29133 (134) & $\geq$ 3h15m & 1846 (131) & $\geq$ 8h30m & \textbf{30862 (178)} & 2h40m \\
\hline
\end{tabular}
}
}
\end{table}
\vspace{-0.2in}

\subsection{Experimental Results}

We now turn our attention to the experimental results.
\cref{tab:races} shows the races and running times reported by each method on each benchmark.

\Paragraph{Coverage of sync-preserving races.}
We find that \emph{every} race reported by \textsf{SHB} or \textsf{WCP} is a sync-preserving race, also reported by $\ZeroRevAlgo$.
On the other hand, bold-face entries highlight benchmarks which have sync-preserving races that are not happens-before races.
We see that such races are found in $11$ out of $30$ benchmarks.
Interestingly, in the $5$ most challenging out of these $11$ benchmarks, the same pattern occurs if we focus on source-code lines (i.e., the entries in the parentheses).
Hence, for these benchmarks, sync-preservation is \emph{necessary} to capture many racy source-code lines, which happens-before would completely miss.
We also remark that the more heavyweight analysis \textsf{M2} misses several of these races due to frequent timeouts.
In total, we have $18$ unique source-code lines that are racy but only detected by $\ZeroRevAlgo$.
On the other hand, there are only $2$ source-code lines that are caught by \textsf{M2} but not by $\ZeroRevAlgo$.

\Paragraph{Running times.}
Our experimental times indicate that $\ZeroRevAlgo$ is quite efficient in practice.
Among all algorithms, $\ZeroRevAlgo$ is the second fastest, being about 1.4 times slower that the fastest, lightweight \textsf{SHB}, while at the same time, being able to detect considerably more races the \textsf{SHB} (i.e., $1342$ more racy events, and $21$ more racy source-code lines).
On the other hand, $\ZeroRevAlgo$ detects even more races than \textsf{M2}, due to timeouts, and even has almost equal detection capability with \textsf{M2} on the cases that \textsf{M2} does not time out.
Due to the slow performance of \textsf{M2} (i.e., over 8.5 hours and with several timeouts), we exclude it from the more refined analysis that follows.

%

\begin{table}
\parbox{.45\linewidth}{
\caption{
Numbers of different memory locations that are detected as racy.
}
\label{tab:memloc}
\footnotesize
\centerline{
\begin{tabular}{|c||c|c|c|}
\hline
\textbf{Benchmark} & \textsf{SHB} & \textsf{WCP} & $\ZeroRevAlgo$ \\
\hline
\texttt{ftpserver} & 49 & 49 & 50 \\
\texttt{jigsaw} & 4 & 4 & 5 \\
\texttt{xalan} & 7 & 6 & 9 \\
\texttt{cryptorsa} & 4 & 4 & 5 \\
\texttt{luindex} & 1 & 2 & 9 \\
\texttt{sunflow} & 14 & 10 & 17 \\
\texttt{linkedlist} & 927 & 927 & 932 \\
\hline
\textbf{Total} & 1006 & 1002 & 1027 \\
\hline
\end{tabular}
}
}
\hfill
\parbox{.45\linewidth}{
\caption{
Maximum race distances.
}
\label{tab:distances}
\footnotesize
\centerline{
\begin{tabular}{|c|||c|c||c|c||c|c|}
\hline
\textbf{Benchmark} & \multicolumn{1}{c|}{\textsf{SHB}} & \multicolumn{1}{c|}{\textsf{WCP}} & \multicolumn{1}{c|}{$\ZeroRevAlgo$} \\
\hline
\texttt{tsp} & 11K & 11K & 224M\\
\texttt{batik} & 1.7M & 1.7M & 4.8M\\
\texttt{cryptorsa} & 7.9M & 7.9M & 8.3M\\
\texttt{jigsaw} & 428 & 428 & 121K\\
\texttt{sunflow} & 10M & 1.0M & 10M\\
\texttt{xalan} & 4K & 4K & 13K\\
\texttt{ftpserver} & 11K & 11K & 11K\\
\texttt{linkedlist} & 165K & 165K & 165K\\
\texttt{luindex} & 783 & 783 & 6.9K\\
\texttt{mergesort} & 57 & 57 & 1.4K\\
\texttt{clean} & 355 & 47 & 1.2K\\
\texttt{readswrites} & 13 & 13 & 696\\
\texttt{wronglock} & 50 & 6 & 113\\
\hline
\end{tabular}
}
}
\end{table}

\Paragraph{Racy memory locations.}
We next proceed to evaluate the capability of $\ZeroRevAlgo$ in detecting racy memory locations.
As all races detected by \textsf{SHB} or \textsf{WCP} are sync-preserving, the same follows for the racy memory locations, i.e., they are all detected as racy by $\ZeroRevAlgo$.
On the other hand, \cref{tab:memloc} shows a few cases in which $\ZeroRevAlgo$ has discovered racy variables that are missed by \textsf{SHB} and \textsf{WCP}.
Hence, sync-preservation is more adequate to capture not only racy program locations, but also racy memory locations.
We note that, in principle, many different racy memory locations could correspond to the same static race (e.g., if memory is allocated dynamically).
Note, however, that the additional reports of $\ZeroRevAlgo$ in \cref{tab:memloc} occur on benchmarks where it also makes more race reports in \cref{tab:races}.
Together, the two experimental tables give confidence that the new reported races are on entirely different variables.

\Paragraph{Race distances.}
We examine the capability of $\ZeroRevAlgo$ to detect races that are far apart in the input trace.
\cref{tab:distances} shows maximum race distance of races $(e_1, e_2)$ in various benchmarks, including the ones that contains sync-preserving races that are missed by happens-before.
In each case, the distance is counted as the number of events in the input trace between $e_1$ and $e_2$, for every event $e_2$ reported as racy.
We see a sharp contrast between \textsf{SHB}/\textsf{WCP} and $\ZeroRevAlgo$, with the latter being able to detect races that are far more distant in the input.
This is in direct alignment with our theoretical observations already illustrated earlier in \secref{intro} (see \figref{predictable-race-intro}).
Indeed, as partial orders, \textsf{SHB}/\textsf{WCP} can only detect races between conflicting accesses that are successive in the input trace.
On the other hand, sync-preserving races may be interleaved with arbitrarily many conflicting, non-racy accesses,
and our complete algorithm $\ZeroRevAlgo$ is guaranteed to detect them.
Overall, all our experimental observations suggest that sync-preservation is an elegant notion:  it finely characterizes almost all races that are efficiently detectable, while it captures several races that are beyond the standard happens-before relation.

\begin{table}
\caption{
Statistics of the core of the benchmark traces after the lightweight optimization is applied.
$\NumRedEvents$, $\NumRedThreads$, $\NumRedAcquires$, and$\NumRedVariables$ denotes respectively the number of events, threads, lock-acquire events, and variables  in the core trace.
}
\label{tab:stats}
\setlength\tabcolsep{2.5pt}
\renewcommand{\arraystretch}{1.1}
\footnotesize
\centerline{
\begin{tabular}{|c||c|c|c|c|||c||c|c|c|c|||c||c|c|c|c|}
\hline
\textbf{Benchmark} & $\NumRedEvents$ & $\NumRedThreads$ &  $\NumRedAcquires$ &  $\NumRedVariables$ & \textbf{Benchmark}  & $\NumRedEvents$ & $\NumRedThreads$ & $\NumRedAcquires$ &  $\NumRedVariables$  & \textbf{Benchmark}  & $\NumRedEvents$ & $\NumRedThreads$ & $\NumRedAcquires$ &  $\NumRedVariables$ \\
\hline
\texttt{array} & 14 & 4 & 2 & 2 & \texttt{mergesort} & 170 & 6 & 49 & 1 & \texttt{jigsaw} & 3.0K & 12 & 1.0K & 51\\
\texttt{critical} & 14 & 5 & 0 & 1 & \texttt{bubblesort} & 1.0K & 13 & 119 & 25 & \texttt{sunflow} & 3.0K & 17 & 585 & 20\\
\texttt{account} & 18 & 5 & 0 & 1 & \texttt{lang} & 1.0K & 8 & 0 & 100 & \texttt{cryptorsa} & 1.0M & 9 & 156K & 18\\
\texttt{airtickets} & 27 & 5 & 0 & 1 & \texttt{readswrites} & 9.0K & 6 & 1.0K & 6 & \texttt{xalan} & 671K & 7 & 183K & 72\\
\texttt{pingpong} & 38 & 7 & 0 & 2 & \texttt{raytracer} & 529 & 4 & 0 & 3 & \texttt{lufact} & 891K & 5 & 0 & 4\\
\texttt{twostage} & 86 & 13 & 20 & 2 & \texttt{bufwriter} & 10K & 7 & 1.0K & 6 & \texttt{batik} & 132 & 7 & 0 & 5\\
\texttt{wronglock} & 125 & 23 & 20 & 1 & \texttt{ftpserver} & 17K & 12 & 4.0K & 135 & \texttt{lusearch} & 751K & 8 & 53 & 77\\
\texttt{bbuffer} & 13 & 3 & 0 & 1 & \texttt{moldyn} & 21K & 4 & 0 & 2 & \texttt{tsp} & 15M & 10 & 91 & 189\\
\texttt{prodcons} & 248 & 9 & 34 & 3 & \texttt{linkedlist} & 910K & 13 & 1.0K & 932 & \texttt{luindex} & 15K & 3 & 6.0K & 9\\
\texttt{clean} & 871 & 10 & 239 & 2 & \texttt{derby} & 75K & 5 & 21K & 190 & \texttt{sor} & 1.0M & 5 & 633K & 4\\
\hline
\end{tabular}
}
\end{table}
\Paragraph{Complexity of $\ZeroRevAlgo$ and running time.}
Recall the complexity of $\ZeroRevAlgo$ established in \thmref{zero_reversals}.
We have argued that the complexity is $\Otilde(N)$, i.e., $\NumThreads,\NumVariables=\Otilde(1)$, meaning that the number of threads and variables are much smaller than $N$.
Here we justify this assumption experimentally, by presenting these numbers for the benchmark traces in \cref{tab:stats}.
For each trace $\tr$, 
we report the parameters of the core trace $\ov{\tr}$
resulting from our lightweight optimization discussed earlier.
We see that $\ZeroRevAlgo$ (and the other algorithms) is, in reality, executed on the core trace $\ov{\tr}$ where the number of threads $\NumRedThreads$ and variables $\NumRedVariables$ is indeed considerably smaller than $\ov{\NumEvents}$.
Hence, our theoretical treatment of $\NumThreads,\NumVariables=\Otilde(1)$ is justified.


\section{Related Work}
\seclabel{related}

Happens-before (\textsf{HB}) has been the standard approach to sound dynamic race detection.
\textsf{HB} is computable in linear time~\cite{Mattern1988} and has formed the basis of many race detectors in the literature~\cite{Schonberg89,trade,Pozniansky:2003:EOD:966049.781529,fasttrack,bond2010pacer}.
However, \textsf{HB} only characterizes a small subset of predictable races, and recent work 
improves upon this with small increase of computational resources~\cite{cp2012,Roemer18,Bond2019,PavlogiannisPOPL20,wcp2017,Roemer20}.

Another common approach to race prediction is via lockset-based methods.
At a high level, a lockset is a set of locks that guards all accesses to a given memory location.
Such techniques report races when they discover a write access to a location which is not consistently protected (i.e., whose lockset is empty).
They were introduced in~\cite{dinning1991detecting} and equipped in Eraser~\cite{savage1997eraser}.
The lockset criterion is complete but unsound, and various works attempt to reduce false positives by  enhancements such as
random testing~\cite{Sen:2008:RDR:1375581.1375584} and
static analysis~\cite{vonPraun:2001:ORD:504282.504288,Choi02}.
Locksets have also been combined with happens-before techniques~\cite{elmas2007goldilocks,racetrack}.

Another direction to dynamic race prediction is symbolic techniques that typically rely on SAT/SMT encodings of the condition of a correct reordering,
and dispatch such encodings to the respective solver~\cite{Said2011,rv2014,ipa2016,SPA2009}.
The encodings are typically sound and complete in theory, but the solution takes exponential time.
In practice, windowing techniques are used to fragment the trace into chunks and analyze each chunk independently.
This introduces incompleteness, as races between events of different chunks are naturally missed.
Dynamic techniques have also been used for predicting other types of errors, such as deadlocks, atomicity violations and synchronization errors~\cite{Kalhauge18,penelope2010,chen-serbanuta-rosu-2008-icse,sen2005detecting,Farzan2009cav,MathurAtomicity20,velodrome,Farzan2009tacas}.


\section{Conclusion}\label{sec:conclusion}

In this work, we have introduced the new notion of synchronization-preserving races.
Conceptually, this is a completion of the principle behind happens-before races, namely that such races can be witnessed without reversing the order in which synchronization operations are observed.
We have shown that sync-preservation strictly subsumes happens-before, and can detect races that are far apart in the input trace.
We have developed an algorithm $\ZeroRevAlgo$ that is sound and complete for sync-preserving races,
and has nearly linear time and space complexity.
In addition, we have shown that relaxing our definition even slightly, i.e., by allowing a single synchronization reversal suffices to make the problem $\W{1}$-hard.
Finally, we have performed an extensive experimental evaluation of $\ZeroRevAlgo$.
Our experiments show that sync-preservation is an elegant notion that characterizes almost all races that are efficiently detectable, while it captures several races that are beyond the standard happens-before.
Given the demonstrated relevance of this new notion, we identify as important future work the development of more efficient race detectors for sync-preserving races, in a similar manner that happens-before race detectors have been refined over the years.

\myparagraph{\textsc{Acknowledgments}}{
We thank anonymous reviewers for their constructive feedback on an earlier draft of this manuscript.
Umang Mathur is partially supported by a Google PhD Fellowship.
Mahesh Viswanathan is partially supported by grants NSF SHF 1901069 and NSF CCF 2007428.
}

\clearpage

\bibliography{references}


\begin{thebibliography}{62}


\ifx \showCODEN    \undefined \def \showCODEN     #1{\unskip}     \fi
\ifx \showDOI      \undefined \def \showDOI       #1{#1}\fi
\ifx \showISBNx    \undefined \def \showISBNx     #1{\unskip}     \fi
\ifx \showISBNxiii \undefined \def \showISBNxiii  #1{\unskip}     \fi
\ifx \showISSN     \undefined \def \showISSN      #1{\unskip}     \fi
\ifx \showLCCN     \undefined \def \showLCCN      #1{\unskip}     \fi
\ifx \shownote     \undefined \def \shownote      #1{#1}          \fi
\ifx \showarticletitle \undefined \def \showarticletitle #1{#1}   \fi
\ifx \showURL      \undefined \def \showURL       {\relax}        \fi
\providecommand\bibfield[2]{#2}
\providecommand\bibinfo[2]{#2}
\providecommand\natexlab[1]{#1}
\providecommand\showeprint[2][]{arXiv:#2}

\bibitem[\protect\citeauthoryear{Aguado, Mendler, Pouzet, Roop, and von
  Hanxleden}{Aguado et~al\mbox{.}}{2018}]%
        {Aguado18}
\bibfield{author}{\bibinfo{person}{Joaqu{\'i}n Aguado},
  \bibinfo{person}{Michael Mendler}, \bibinfo{person}{Marc Pouzet},
  \bibinfo{person}{Partha Roop}, {and} \bibinfo{person}{Reinhard von
  Hanxleden}.} \bibinfo{year}{2018}\natexlab{}.
\newblock \showarticletitle{Deterministic Concurrency: A Clock-Synchronised
  Shared Memory Approach}. In \bibinfo{booktitle}{\emph{Programming Languages
  and Systems}}, \bibfield{editor}{\bibinfo{person}{Amal Ahmed}} (Ed.).
  \bibinfo{publisher}{Springer International Publishing},
  \bibinfo{address}{Cham}, \bibinfo{pages}{86--113}.
\newblock
\showISBNx{978-3-319-89884-1}


\bibitem[\protect\citeauthoryear{Blackburn, Garner, Hoffmann, Khang, McKinley,
  Bentzur, Diwan, Feinberg, Frampton, Guyer, Hirzel, Hosking, Jump, Lee, Moss,
  Phansalkar, Stefanovi\'{c}, VanDrunen, von Dincklage, and
  Wiedermann}{Blackburn et~al\mbox{.}}{2006}]%
        {DaCapo2006}
\bibfield{author}{\bibinfo{person}{Stephen~M. Blackburn},
  \bibinfo{person}{Robin Garner}, \bibinfo{person}{Chris Hoffmann},
  \bibinfo{person}{Asjad~M. Khang}, \bibinfo{person}{Kathryn~S. McKinley},
  \bibinfo{person}{Rotem Bentzur}, \bibinfo{person}{Amer Diwan},
  \bibinfo{person}{Daniel Feinberg}, \bibinfo{person}{Daniel Frampton},
  \bibinfo{person}{Samuel~Z. Guyer}, \bibinfo{person}{Martin Hirzel},
  \bibinfo{person}{Antony Hosking}, \bibinfo{person}{Maria Jump},
  \bibinfo{person}{Han Lee}, \bibinfo{person}{J.~Eliot~B. Moss},
  \bibinfo{person}{Aashish Phansalkar}, \bibinfo{person}{Darko Stefanovi\'{c}},
  \bibinfo{person}{Thomas VanDrunen}, \bibinfo{person}{Daniel von Dincklage},
  {and} \bibinfo{person}{Ben Wiedermann}.} \bibinfo{year}{2006}\natexlab{}.
\newblock \showarticletitle{The DaCapo Benchmarks: Java Benchmarking
  Development and Analysis}. In \bibinfo{booktitle}{\emph{Proceedings of the
  21st Annual ACM SIGPLAN Conference on Object-oriented Programming Systems,
  Languages, and Applications}} (Portland, Oregon, USA)
  \emph{(\bibinfo{series}{OOPSLA '06})}. \bibinfo{publisher}{ACM},
  \bibinfo{address}{New York, NY, USA}, \bibinfo{pages}{169--190}.
\newblock
\showISBNx{1-59593-348-4}
\urldef\tempurl%
\url{https://doi.org/10.1145/1167473.1167488}
\showDOI{\tempurl}


\bibitem[\protect\citeauthoryear{Bocchino, Adve, Adve, and Snir}{Bocchino
  et~al\mbox{.}}{2009}]%
        {Bocchino09}
\bibfield{author}{\bibinfo{person}{Robert~L. Bocchino},
  \bibinfo{person}{Vikram~S. Adve}, \bibinfo{person}{Sarita~V. Adve}, {and}
  \bibinfo{person}{Marc Snir}.} \bibinfo{year}{2009}\natexlab{}.
\newblock \showarticletitle{Parallel Programming Must Be Deterministic by
  Default}. In \bibinfo{booktitle}{\emph{Proceedings of the First USENIX
  Conference on Hot Topics in Parallelism}} (Berkeley, California)
  \emph{(\bibinfo{series}{HotPar’09})}. \bibinfo{publisher}{USENIX
  Association}, \bibinfo{address}{USA}, \bibinfo{pages}{4}.
\newblock


\bibitem[\protect\citeauthoryear{Boehm}{Boehm}{2011}]%
        {boehmbenign2011}
\bibfield{author}{\bibinfo{person}{Hans-J. Boehm}.}
  \bibinfo{year}{2011}\natexlab{}.
\newblock \showarticletitle{How to Miscompile Programs with “Benign” Data
  Races}. In \bibinfo{booktitle}{\emph{Proceedings of the 3rd USENIX Conference
  on Hot Topic in Parallelism}} (Berkeley, CA)
  \emph{(\bibinfo{series}{HotPar’11})}. \bibinfo{publisher}{USENIX
  Association}, \bibinfo{address}{USA}, \bibinfo{pages}{3}.
\newblock


\bibitem[\protect\citeauthoryear{Boehm}{Boehm}{2012}]%
        {evil2012}
\bibfield{author}{\bibinfo{person}{Hans-J. Boehm}.}
  \bibinfo{year}{2012}\natexlab{}.
\newblock \showarticletitle{Position Paper: Nondeterminism is Unavoidable, but
  Data Races Are Pure Evil}. In \bibinfo{booktitle}{\emph{Proceedings of the
  2012 ACM Workshop on Relaxing Synchronization for Multicore and Manycore
  Scalability}} (Tucson, Arizona, USA) \emph{(\bibinfo{series}{RACES ’12})}.
  \bibinfo{publisher}{Association for Computing Machinery},
  \bibinfo{address}{New York, NY, USA}, \bibinfo{pages}{9–14}.
\newblock
\showISBNx{9781450316323}
\urldef\tempurl%
\url{https://doi.org/10.1145/2414729.2414732}
\showDOI{\tempurl}


\bibitem[\protect\citeauthoryear{Bond, Coons, and McKinley}{Bond
  et~al\mbox{.}}{2010}]%
        {bond2010pacer}
\bibfield{author}{\bibinfo{person}{Michael~D. Bond},
  \bibinfo{person}{Katherine~E. Coons}, {and} \bibinfo{person}{Kathryn~S.
  McKinley}.} \bibinfo{year}{2010}\natexlab{}.
\newblock \showarticletitle{PACER: Proportional Detection of Data Races}. In
  \bibinfo{booktitle}{\emph{Proceedings of the 31st ACM SIGPLAN Conference on
  Programming Language Design and Implementation}} (Toronto, Ontario, Canada)
  \emph{(\bibinfo{series}{PLDI '10})}. \bibinfo{publisher}{ACM},
  \bibinfo{address}{New York, NY, USA}, \bibinfo{pages}{255--268}.
\newblock
\showISBNx{978-1-4503-0019-3}
\urldef\tempurl%
\url{https://doi.org/10.1145/1806596.1806626}
\showDOI{\tempurl}


\bibitem[\protect\citeauthoryear{Chen, {\c S}erb{\u a}nu{\c t}{\u a}, and Ro{\c
  s}u}{Chen et~al\mbox{.}}{2008}]%
        {chen-serbanuta-rosu-2008-icse}
\bibfield{author}{\bibinfo{person}{Feng Chen}, \bibinfo{person}{Traian~Florin
  {\c S}erb{\u a}nu{\c t}{\u a}}, {and} \bibinfo{person}{Grigore Ro{\c s}u}.}
  \bibinfo{year}{2008}\natexlab{}.
\newblock \showarticletitle{{{jPredictor}: a predictive runtime analysis tool
  for {Java}}}. In \bibinfo{booktitle}{\emph{ICSE '08: Proceedings of the 30th
  International Conference on Software Engineering}} (Leipzig, Germany).
  \bibinfo{publisher}{ACM}, \bibinfo{address}{New York, NY, USA},
  \bibinfo{pages}{221--230}.
\newblock


\bibitem[\protect\citeauthoryear{Choi, Lee, Loginov, O'Callahan, Sarkar, and
  Sridharan}{Choi et~al\mbox{.}}{2002}]%
        {Choi02}
\bibfield{author}{\bibinfo{person}{Jong-Deok Choi}, \bibinfo{person}{Keunwoo
  Lee}, \bibinfo{person}{Alexey Loginov}, \bibinfo{person}{Robert O'Callahan},
  \bibinfo{person}{Vivek Sarkar}, {and} \bibinfo{person}{Manu Sridharan}.}
  \bibinfo{year}{2002}\natexlab{}.
\newblock \showarticletitle{Efficient and Precise Datarace Detection for
  Multithreaded Object-oriented Programs}. In
  \bibinfo{booktitle}{\emph{Proceedings of the ACM SIGPLAN 2002 Conference on
  Programming Language Design and Implementation}} (Berlin, Germany)
  \emph{(\bibinfo{series}{PLDI '02})}. \bibinfo{publisher}{ACM},
  \bibinfo{address}{New York, NY, USA}, \bibinfo{pages}{258--269}.
\newblock
\showISBNx{1-58113-463-0}
\urldef\tempurl%
\url{https://doi.org/10.1145/512529.512560}
\showDOI{\tempurl}


\bibitem[\protect\citeauthoryear{Christiaens and Bosschere}{Christiaens and
  Bosschere}{2001}]%
        {trade}
\bibfield{author}{\bibinfo{person}{Mark Christiaens} {and}
  \bibinfo{person}{Koenraad~De Bosschere}.} \bibinfo{year}{2001}\natexlab{}.
\newblock \showarticletitle{{TRaDe: Data Race Detection for Java}}. In
  \bibinfo{booktitle}{\emph{Proceedings of the International Conference on
  Computational Science-Part II}} \emph{(\bibinfo{series}{ICCS '01})}.
  \bibinfo{publisher}{Springer-Verlag}, \bibinfo{address}{London, UK, UK},
  \bibinfo{pages}{761--770}.
\newblock


\bibitem[\protect\citeauthoryear{Cui, Gu, Liu, Chen, and Yang}{Cui
  et~al\mbox{.}}{2015}]%
        {Cui15}
\bibfield{author}{\bibinfo{person}{Heming Cui}, \bibinfo{person}{Rui Gu},
  \bibinfo{person}{Cheng Liu}, \bibinfo{person}{Tianyu Chen}, {and}
  \bibinfo{person}{Junfeng Yang}.} \bibinfo{year}{2015}\natexlab{}.
\newblock \showarticletitle{Paxos Made Transparent}. In
  \bibinfo{booktitle}{\emph{Proceedings of the 25th Symposium on Operating
  Systems Principles}} (Monterey, California) \emph{(\bibinfo{series}{SOSP
  ’15})}. \bibinfo{publisher}{Association for Computing Machinery},
  \bibinfo{address}{New York, NY, USA}, \bibinfo{pages}{105–120}.
\newblock
\showISBNx{9781450338349}
\urldef\tempurl%
\url{https://doi.org/10.1145/2815400.2815427}
\showDOI{\tempurl}


\bibitem[\protect\citeauthoryear{Dinning and Schonberg}{Dinning and
  Schonberg}{1991}]%
        {dinning1991detecting}
\bibfield{author}{\bibinfo{person}{Anne Dinning} {and} \bibinfo{person}{Edith
  Schonberg}.} \bibinfo{year}{1991}\natexlab{}.
\newblock \showarticletitle{Detecting Access Anomalies in Programs with
  Critical Sections}. In \bibinfo{booktitle}{\emph{Proceedings of the 1991
  ACM/ONR Workshop on Parallel and Distributed Debugging}} (Santa Cruz,
  California, USA) \emph{(\bibinfo{series}{PADD '91})}.
  \bibinfo{publisher}{ACM}, \bibinfo{address}{New York, NY, USA},
  \bibinfo{pages}{85--96}.
\newblock
\showISBNx{0-89791-457-0}
\urldef\tempurl%
\url{https://doi.org/10.1145/122759.122767}
\showDOI{\tempurl}


\bibitem[\protect\citeauthoryear{Do, Elbaum, and Rothermel}{Do
  et~al\mbox{.}}{2005}]%
        {doESE05}
\bibfield{author}{\bibinfo{person}{Hyunsook Do}, \bibinfo{person}{Sebastian~G.
  Elbaum}, {and} \bibinfo{person}{Gregg Rothermel}.}
  \bibinfo{year}{2005}\natexlab{}.
\newblock \showarticletitle{Supporting Controlled Experimentation with Testing
  Techniques: An Infrastructure and its Potential Impact.}
\newblock \bibinfo{journal}{\emph{Empirical Software Engineering: An
  International Journal}} \bibinfo{volume}{10}, \bibinfo{number}{4}
  (\bibinfo{year}{2005}), \bibinfo{pages}{405--435}.
\newblock


\bibitem[\protect\citeauthoryear{Elmas, Qadeer, and Tasiran}{Elmas
  et~al\mbox{.}}{2007}]%
        {elmas2007goldilocks}
\bibfield{author}{\bibinfo{person}{Tayfun Elmas}, \bibinfo{person}{Shaz
  Qadeer}, {and} \bibinfo{person}{Serdar Tasiran}.}
  \bibinfo{year}{2007}\natexlab{}.
\newblock \showarticletitle{Goldilocks: A Race and Transaction-aware Java
  Runtime}. In \bibinfo{booktitle}{\emph{Proceedings of the 28th ACM SIGPLAN
  Conference on Programming Language Design and Implementation}} (San Diego,
  California, USA) \emph{(\bibinfo{series}{PLDI '07})}.
  \bibinfo{publisher}{ACM}, \bibinfo{address}{New York, NY, USA},
  \bibinfo{pages}{245--255}.
\newblock
\showISBNx{978-1-59593-633-2}
\urldef\tempurl%
\url{https://doi.org/10.1145/1250734.1250762}
\showDOI{\tempurl}


\bibitem[\protect\citeauthoryear{Farchi, Nir, and Ur}{Farchi
  et~al\mbox{.}}{2003}]%
        {Farchi2003}
\bibfield{author}{\bibinfo{person}{Eitan Farchi}, \bibinfo{person}{Yarden Nir},
  {and} \bibinfo{person}{Shmuel Ur}.} \bibinfo{year}{2003}\natexlab{}.
\newblock \showarticletitle{{Concurrent Bug Patterns and How to Test Them}}. In
  \bibinfo{booktitle}{\emph{Proceedings of the 17th International Symposium on
  Parallel and Distributed Processing}} \emph{(\bibinfo{series}{IPDPS '03})}.
  \bibinfo{publisher}{IEEE Computer Society}, \bibinfo{address}{Washington, DC,
  USA}, \bibinfo{pages}{286.2--}.
\newblock


\bibitem[\protect\citeauthoryear{Farzan and Madhusudan}{Farzan and
  Madhusudan}{2009}]%
        {Farzan2009tacas}
\bibfield{author}{\bibinfo{person}{Azadeh Farzan} {and} \bibinfo{person}{P.
  Madhusudan}.} \bibinfo{year}{2009}\natexlab{}.
\newblock \showarticletitle{The Complexity of Predicting Atomicity Violations}.
  In \bibinfo{booktitle}{\emph{Proceedings of the 15th International Conference
  on Tools and Algorithms for the Construction and Analysis of Systems: Held As
  Part of the Joint European Conferences on Theory and Practice of Software,
  ETAPS 2009,}} (York, UK) \emph{(\bibinfo{series}{TACAS '09})}.
  \bibinfo{publisher}{Springer-Verlag}, \bibinfo{address}{Berlin, Heidelberg},
  \bibinfo{pages}{155--169}.
\newblock
\showISBNx{978-3-642-00767-5}
\urldef\tempurl%
\url{https://doi.org/10.1007/978-3-642-00768-2_14}
\showDOI{\tempurl}


\bibitem[\protect\citeauthoryear{Farzan, Madhusudan, and Sorrentino}{Farzan
  et~al\mbox{.}}{2009}]%
        {Farzan2009cav}
\bibfield{author}{\bibinfo{person}{Azadeh Farzan}, \bibinfo{person}{P.
  Madhusudan}, {and} \bibinfo{person}{Francesco Sorrentino}.}
  \bibinfo{year}{2009}\natexlab{}.
\newblock \showarticletitle{Meta-analysis for Atomicity Violations Under Nested
  Locking}. In \bibinfo{booktitle}{\emph{Proceedings of the 21st International
  Conference on Computer Aided Verification}} (Grenoble, France)
  \emph{(\bibinfo{series}{CAV '09})}. \bibinfo{publisher}{Springer-Verlag},
  \bibinfo{address}{Berlin, Heidelberg}, \bibinfo{pages}{248--262}.
\newblock
\showISBNx{978-3-642-02657-7}
\urldef\tempurl%
\url{https://doi.org/10.1007/978-3-642-02658-4_21}
\showDOI{\tempurl}


\bibitem[\protect\citeauthoryear{Fidge}{Fidge}{1991}]%
        {Fidge:1991:LTD:112827.112860}
\bibfield{author}{\bibinfo{person}{Colin Fidge}.}
  \bibinfo{year}{1991}\natexlab{}.
\newblock \showarticletitle{Logical Time in Distributed Computing Systems}.
\newblock \bibinfo{journal}{\emph{Computer}} \bibinfo{volume}{24},
  \bibinfo{number}{8} (\bibinfo{date}{Aug.} \bibinfo{year}{1991}),
  \bibinfo{pages}{28--33}.
\newblock
\showISSN{0018-9162}
\urldef\tempurl%
\url{https://doi.org/10.1109/2.84874}
\showDOI{\tempurl}


\bibitem[\protect\citeauthoryear{Flanagan and Freund}{Flanagan and
  Freund}{2009}]%
        {fasttrack}
\bibfield{author}{\bibinfo{person}{Cormac Flanagan} {and}
  \bibinfo{person}{Stephen~N. Freund}.} \bibinfo{year}{2009}\natexlab{}.
\newblock \showarticletitle{FastTrack: Efficient and Precise Dynamic Race
  Detection}. In \bibinfo{booktitle}{\emph{Proceedings of the 30th ACM SIGPLAN
  Conference on Programming Language Design and Implementation}} (Dublin,
  Ireland) \emph{(\bibinfo{series}{PLDI '09})}. \bibinfo{publisher}{ACM},
  \bibinfo{address}{New York, NY, USA}, \bibinfo{pages}{121--133}.
\newblock
\showISBNx{978-1-60558-392-1}
\urldef\tempurl%
\url{https://doi.org/10.1145/1542476.1542490}
\showDOI{\tempurl}


\bibitem[\protect\citeauthoryear{Flanagan, Freund, and Yi}{Flanagan
  et~al\mbox{.}}{2008}]%
        {velodrome}
\bibfield{author}{\bibinfo{person}{Cormac Flanagan},
  \bibinfo{person}{Stephen~N. Freund}, {and} \bibinfo{person}{Jaeheon Yi}.}
  \bibinfo{year}{2008}\natexlab{}.
\newblock \showarticletitle{Velodrome: {A} Sound and Complete Dynamic Atomicity
  Checker for Multithreaded Programs}. In \bibinfo{booktitle}{\emph{Proceedings
  of the 29th ACM SIGPLAN Conference on Programming Language Design and
  Implementation}} (Tucson, AZ, USA) \emph{(\bibinfo{series}{PLDI '08})}.
  \bibinfo{publisher}{ACM}, \bibinfo{address}{New York, NY, USA},
  \bibinfo{pages}{293--303}.
\newblock
\showISBNx{978-1-59593-860-2}
\urldef\tempurl%
\url{https://doi.org/10.1145/1375581.1375618}
\showDOI{\tempurl}


\bibitem[\protect\citeauthoryear{Gen\c{c}, Roemer, Xu, and Bond}{Gen\c{c}
  et~al\mbox{.}}{2019}]%
        {Bond2019}
\bibfield{author}{\bibinfo{person}{Kaan Gen\c{c}}, \bibinfo{person}{Jake
  Roemer}, \bibinfo{person}{Yufan Xu}, {and} \bibinfo{person}{Michael~D.
  Bond}.} \bibinfo{year}{2019}\natexlab{}.
\newblock \showarticletitle{Dependence-Aware, Unbounded Sound Predictive Race
  Detection}.
\newblock \bibinfo{journal}{\emph{Proc. ACM Program. Lang.}}
  \bibinfo{volume}{3}, \bibinfo{number}{OOPSLA}, Article
  \bibinfo{articleno}{179} (\bibinfo{date}{Oct.} \bibinfo{year}{2019}),
  \bibinfo{numpages}{30}~pages.
\newblock
\urldef\tempurl%
\url{https://doi.org/10.1145/3360605}
\showDOI{\tempurl}


\bibitem[\protect\citeauthoryear{Gen\c{c}, Xu, and Bond}{Gen\c{c}
  et~al\mbox{.}}{2020}]%
        {sdpUnsound2020}
\bibfield{author}{\bibinfo{person}{Kaan Gen\c{c}}, \bibinfo{person}{Yufan Xu},
  {and} \bibinfo{person}{Michael~D. Bond}.} \bibinfo{year}{2020}\natexlab{}.
\newblock \showarticletitle{Personal Communication}.
\newblock  (\bibinfo{year}{2020}).
\newblock


\bibitem[\protect\citeauthoryear{Gibbons and Korach}{Gibbons and
  Korach}{1997}]%
        {Gibbons97}
\bibfield{author}{\bibinfo{person}{Phillip~B. Gibbons} {and}
  \bibinfo{person}{Ephraim Korach}.} \bibinfo{year}{1997}\natexlab{}.
\newblock \showarticletitle{Testing Shared Memories}.
\newblock \bibinfo{journal}{\emph{SIAM J. Comput.}} \bibinfo{volume}{26},
  \bibinfo{number}{4} (\bibinfo{date}{Aug.} \bibinfo{year}{1997}),
  \bibinfo{pages}{1208--1244}.
\newblock
\showISSN{0097-5397}
\urldef\tempurl%
\url{https://doi.org/10.1137/S0097539794279614}
\showDOI{\tempurl}


\bibitem[\protect\citeauthoryear{Gorogiannis, O’Hearn, and
  Sergey}{Gorogiannis et~al\mbox{.}}{2019}]%
        {racerdx2019}
\bibfield{author}{\bibinfo{person}{Nikos Gorogiannis},
  \bibinfo{person}{Peter~W. O’Hearn}, {and} \bibinfo{person}{Ilya Sergey}.}
  \bibinfo{year}{2019}\natexlab{}.
\newblock \showarticletitle{A True Positives Theorem for a Static Race
  Detector}.
\newblock \bibinfo{journal}{\emph{Proc. ACM Program. Lang.}}
  \bibinfo{volume}{3}, \bibinfo{number}{POPL}, Article \bibinfo{articleno}{57}
  (\bibinfo{date}{Jan.} \bibinfo{year}{2019}), \bibinfo{numpages}{29}~pages.
\newblock
\urldef\tempurl%
\url{https://doi.org/10.1145/3290370}
\showDOI{\tempurl}


\bibitem[\protect\citeauthoryear{Herlihy and Wing}{Herlihy and Wing}{1990}]%
        {HerlihyWing90}
\bibfield{author}{\bibinfo{person}{Maurice~P. Herlihy} {and}
  \bibinfo{person}{Jeannette~M. Wing}.} \bibinfo{year}{1990}\natexlab{}.
\newblock \showarticletitle{Linearizability: A Correctness Condition for
  Concurrent Objects}.
\newblock \bibinfo{journal}{\emph{ACM Trans. Program. Lang. Syst.}}
  \bibinfo{volume}{12}, \bibinfo{number}{3} (\bibinfo{date}{July}
  \bibinfo{year}{1990}), \bibinfo{pages}{463–492}.
\newblock
\showISSN{0164-0925}
\urldef\tempurl%
\url{https://doi.org/10.1145/78969.78972}
\showDOI{\tempurl}


\bibitem[\protect\citeauthoryear{Huang, Meredith, and Rosu}{Huang
  et~al\mbox{.}}{2014}]%
        {rv2014}
\bibfield{author}{\bibinfo{person}{Jeff Huang}, \bibinfo{person}{Patrick~O'Neil
  Meredith}, {and} \bibinfo{person}{Grigore Rosu}.}
  \bibinfo{year}{2014}\natexlab{}.
\newblock \showarticletitle{Maximal Sound Predictive Race Detection with
  Control Flow Abstraction}. In \bibinfo{booktitle}{\emph{Proceedings of the
  35th ACM SIGPLAN Conference on Programming Language Design and
  Implementation}} (Edinburgh, United Kingdom) \emph{(\bibinfo{series}{PLDI
  '14})}. \bibinfo{publisher}{ACM}, \bibinfo{address}{New York, NY, USA},
  \bibinfo{pages}{337--348}.
\newblock
\showISBNx{978-1-4503-2784-8}
\urldef\tempurl%
\url{https://doi.org/10.1145/2594291.2594315}
\showDOI{\tempurl}


\bibitem[\protect\citeauthoryear{Huang and Rajagopalan}{Huang and
  Rajagopalan}{2016}]%
        {Huang2016}
\bibfield{author}{\bibinfo{person}{Jeff Huang} {and} \bibinfo{person}{Arun~K.
  Rajagopalan}.} \bibinfo{year}{2016}\natexlab{}.
\newblock \showarticletitle{Precise and Maximal Race Detection from Incomplete
  Traces}. In \bibinfo{booktitle}{\emph{Proceedings of the 2016 ACM SIGPLAN
  International Conference on Object-Oriented Programming, Systems, Languages,
  and Applications}} (Amsterdam, Netherlands) \emph{(\bibinfo{series}{OOPSLA
  2016})}. \bibinfo{publisher}{ACM}, \bibinfo{address}{New York, NY, USA},
  \bibinfo{pages}{462--476}.
\newblock
\showISBNx{978-1-4503-4444-9}
\urldef\tempurl%
\url{https://doi.org/10.1145/2983990.2984024}
\showDOI{\tempurl}


\bibitem[\protect\citeauthoryear{Kalhauge and Palsberg}{Kalhauge and
  Palsberg}{2018}]%
        {Kalhauge18}
\bibfield{author}{\bibinfo{person}{Christian~Gram Kalhauge} {and}
  \bibinfo{person}{Jens Palsberg}.} \bibinfo{year}{2018}\natexlab{}.
\newblock \showarticletitle{Sound Deadlock Prediction}.
\newblock \bibinfo{journal}{\emph{Proc. ACM Program. Lang.}}
  \bibinfo{volume}{2}, \bibinfo{number}{OOPSLA}, Article
  \bibinfo{articleno}{146} (\bibinfo{date}{Oct.} \bibinfo{year}{2018}),
  \bibinfo{numpages}{29}~pages.
\newblock
\urldef\tempurl%
\url{https://doi.org/10.1145/3276516}
\showDOI{\tempurl}


\bibitem[\protect\citeauthoryear{Kasikci, Zamfir, and Candea}{Kasikci
  et~al\mbox{.}}{2013}]%
        {racemob}
\bibfield{author}{\bibinfo{person}{Baris Kasikci}, \bibinfo{person}{Cristian
  Zamfir}, {and} \bibinfo{person}{George Candea}.}
  \bibinfo{year}{2013}\natexlab{}.
\newblock \showarticletitle{{RaceMob: Crowdsourced Data Race Detection}}. In
  \bibinfo{booktitle}{\emph{Proceedings of the Twenty-Fourth ACM Symposium on
  Operating Systems Principles}} (Farminton, Pennsylvania)
  \emph{(\bibinfo{series}{SOSP '13})}. \bibinfo{publisher}{ACM},
  \bibinfo{address}{New York, NY, USA}, \bibinfo{pages}{406--422}.
\newblock


\bibitem[\protect\citeauthoryear{Kini, Mathur, and Viswanathan}{Kini
  et~al\mbox{.}}{2017}]%
        {wcp2017}
\bibfield{author}{\bibinfo{person}{Dileep Kini}, \bibinfo{person}{Umang
  Mathur}, {and} \bibinfo{person}{Mahesh Viswanathan}.}
  \bibinfo{year}{2017}\natexlab{}.
\newblock \showarticletitle{Dynamic Race Prediction in Linear Time}. In
  \bibinfo{booktitle}{\emph{Proceedings of the 38th ACM SIGPLAN Conference on
  Programming Language Design and Implementation}} (Barcelona, Spain)
  \emph{(\bibinfo{series}{PLDI 2017})}. \bibinfo{publisher}{ACM},
  \bibinfo{address}{New York, NY, USA}, \bibinfo{pages}{157--170}.
\newblock
\showISBNx{978-1-4503-4988-8}
\urldef\tempurl%
\url{https://doi.org/10.1145/3062341.3062374}
\showDOI{\tempurl}


\bibitem[\protect\citeauthoryear{Lamport}{Lamport}{1978}]%
        {lamport1978time}
\bibfield{author}{\bibinfo{person}{Leslie Lamport}.}
  \bibinfo{year}{1978}\natexlab{}.
\newblock \showarticletitle{{Time, Clocks, and the Ordering of Events in a
  Distributed System}}.
\newblock \bibinfo{journal}{\emph{Commun. ACM}} \bibinfo{volume}{21},
  \bibinfo{number}{7} (\bibinfo{date}{July} \bibinfo{year}{1978}),
  \bibinfo{pages}{558--565}.
\newblock


\bibitem[\protect\citeauthoryear{Liu, Tripp, and Zhang}{Liu
  et~al\mbox{.}}{2016}]%
        {ipa2016}
\bibfield{author}{\bibinfo{person}{Peng Liu}, \bibinfo{person}{Omer Tripp},
  {and} \bibinfo{person}{Xiangyu Zhang}.} \bibinfo{year}{2016}\natexlab{}.
\newblock \showarticletitle{IPA: Improving Predictive Analysis with Pointer
  Analysis}. In \bibinfo{booktitle}{\emph{Proceedings of the 25th International
  Symposium on Software Testing and Analysis}} (Saarbr\&\#252;cken, Germany)
  \emph{(\bibinfo{series}{ISSTA 2016})}. \bibinfo{publisher}{ACM},
  \bibinfo{address}{New York, NY, USA}, \bibinfo{pages}{59--69}.
\newblock


\bibitem[\protect\citeauthoryear{Lu, Park, Seo, and Zhou}{Lu
  et~al\mbox{.}}{2008}]%
        {lpsz08}
\bibfield{author}{\bibinfo{person}{Shan Lu}, \bibinfo{person}{Soyeon Park},
  \bibinfo{person}{Eunsoo Seo}, {and} \bibinfo{person}{Yuanyuan Zhou}.}
  \bibinfo{year}{2008}\natexlab{}.
\newblock \showarticletitle{Learning from Mistakes: A Comprehensive Study on
  Real World Concurrency Bug Characteristics}. In
  \bibinfo{booktitle}{\emph{Proceedings of the 13th International Conference on
  Architectural Support for Programming Languages and Operating Systems}}
  (Seattle, WA, USA) \emph{(\bibinfo{series}{ASPLOS XIII})}.
  \bibinfo{publisher}{ACM}, \bibinfo{address}{New York, NY, USA},
  \bibinfo{pages}{329--339}.
\newblock
\showISBNx{978-1-59593-958-6}
\urldef\tempurl%
\url{https://doi.org/10.1145/1346281.1346323}
\showDOI{\tempurl}


\bibitem[\protect\citeauthoryear{Mathur}{Mathur}{2020}]%
        {rapid}
\bibfield{author}{\bibinfo{person}{Umang Mathur}.}
  \bibinfo{year}{2020}\natexlab{}.
\newblock \bibinfo{booktitle}{\emph{{{RAPID}}}}.
\newblock
\urldef\tempurl%
\url{https://github.com/umangm/rapid}
\showURL{%
\tempurl}
\newblock
\shownote{Accessed: 2020-10-25.}


\bibitem[\protect\citeauthoryear{Mathur, Kini, and Viswanathan}{Mathur
  et~al\mbox{.}}{2018}]%
        {shb2018}
\bibfield{author}{\bibinfo{person}{Umang Mathur}, \bibinfo{person}{Dileep
  Kini}, {and} \bibinfo{person}{Mahesh Viswanathan}.}
  \bibinfo{year}{2018}\natexlab{}.
\newblock \showarticletitle{What Happens-after the First Race? Enhancing the
  Predictive Power of Happens-before Based Dynamic Race Detection}.
\newblock \bibinfo{journal}{\emph{Proc. ACM Program. Lang.}}
  \bibinfo{volume}{2}, \bibinfo{number}{OOPSLA}, Article
  \bibinfo{articleno}{145} (\bibinfo{date}{Oct.} \bibinfo{year}{2018}),
  \bibinfo{numpages}{29}~pages.
\newblock
\showISSN{2475-1421}
\urldef\tempurl%
\url{https://doi.org/10.1145/3276515}
\showDOI{\tempurl}


\bibitem[\protect\citeauthoryear{Mathur, Pavlogiannis, and Viswanathan}{Mathur
  et~al\mbox{.}}{2020}]%
        {Mathur20}
\bibfield{author}{\bibinfo{person}{Umang Mathur}, \bibinfo{person}{Andreas
  Pavlogiannis}, {and} \bibinfo{person}{Mahesh Viswanathan}.}
  \bibinfo{year}{2020}\natexlab{}.
\newblock \showarticletitle{The Complexity of Dynamic Data Race Prediction}. In
  \bibinfo{booktitle}{\emph{Proceedings of the 35th Annual ACM/IEEE Symposium
  on Logic in Computer Science}} (Saarbr\"{u}cken, Germany)
  \emph{(\bibinfo{series}{LICS ’20})}. \bibinfo{publisher}{Association for
  Computing Machinery}, \bibinfo{address}{New York, NY, USA},
  \bibinfo{pages}{713–727}.
\newblock
\showISBNx{9781450371049}
\urldef\tempurl%
\url{https://doi.org/10.1145/3373718.3394783}
\showDOI{\tempurl}


\bibitem[\protect\citeauthoryear{Mathur and Viswanathan}{Mathur and
  Viswanathan}{2020}]%
        {MathurAtomicity20}
\bibfield{author}{\bibinfo{person}{Umang Mathur} {and} \bibinfo{person}{Mahesh
  Viswanathan}.} \bibinfo{year}{2020}\natexlab{}.
\newblock \showarticletitle{Atomicity Checking in Linear Time Using Vector
  Clocks}. In \bibinfo{booktitle}{\emph{Proceedings of the Twenty-Fifth
  International Conference on Architectural Support for Programming Languages
  and Operating Systems}} (Lausanne, Switzerland)
  \emph{(\bibinfo{series}{ASPLOS ’20})}. \bibinfo{publisher}{Association for
  Computing Machinery}, \bibinfo{address}{New York, NY, USA},
  \bibinfo{pages}{183–199}.
\newblock
\showISBNx{9781450371025}
\urldef\tempurl%
\url{https://doi.org/10.1145/3373376.3378475}
\showDOI{\tempurl}


\bibitem[\protect\citeauthoryear{Mattern}{Mattern}{1988}]%
        {Mattern1988}
\bibfield{author}{\bibinfo{person}{Friedemann Mattern}.}
  \bibinfo{year}{1988}\natexlab{}.
\newblock \showarticletitle{{Virtual Time and Global States of Distributed
  Systems}}. In \bibinfo{booktitle}{\emph{Parallel and Distributed
  Algorithms}}. \bibinfo{publisher}{North-Holland}, \bibinfo{pages}{215--226}.
\newblock


\bibitem[\protect\citeauthoryear{Narayanasamy, Wang, Tigani, Edwards, and
  Calder}{Narayanasamy et~al\mbox{.}}{2007}]%
        {Narayanasamy2007}
\bibfield{author}{\bibinfo{person}{Satish Narayanasamy},
  \bibinfo{person}{Zhenghao Wang}, \bibinfo{person}{Jordan Tigani},
  \bibinfo{person}{Andrew Edwards}, {and} \bibinfo{person}{Brad Calder}.}
  \bibinfo{year}{2007}\natexlab{}.
\newblock \showarticletitle{Automatically Classifying Benign and Harmful Data
  Races Using Replay Analysis}. In \bibinfo{booktitle}{\emph{Proceedings of the
  28th ACM SIGPLAN Conference on Programming Language Design and
  Implementation}} (San Diego, California, USA) \emph{(\bibinfo{series}{PLDI
  ’07})}. \bibinfo{publisher}{Association for Computing Machinery},
  \bibinfo{address}{New York, NY, USA}, \bibinfo{pages}{22–31}.
\newblock
\showISBNx{9781595936332}
\urldef\tempurl%
\url{https://doi.org/10.1145/1250734.1250738}
\showDOI{\tempurl}


\bibitem[\protect\citeauthoryear{Pavlogiannis}{Pavlogiannis}{2019}]%
        {PavlogiannisPOPL20}
\bibfield{author}{\bibinfo{person}{Andreas Pavlogiannis}.}
  \bibinfo{year}{2019}\natexlab{}.
\newblock \showarticletitle{Fast, Sound, and Effectively Complete Dynamic Race
  Prediction}.
\newblock \bibinfo{journal}{\emph{Proc. ACM Program. Lang.}}
  \bibinfo{volume}{4}, \bibinfo{number}{POPL}, Article \bibinfo{articleno}{17}
  (\bibinfo{date}{Dec.} \bibinfo{year}{2019}), \bibinfo{numpages}{29}~pages.
\newblock
\urldef\tempurl%
\url{https://doi.org/10.1145/3371085}
\showDOI{\tempurl}


\bibitem[\protect\citeauthoryear{Pozniansky and Schuster}{Pozniansky and
  Schuster}{2003}]%
        {Pozniansky:2003:EOD:966049.781529}
\bibfield{author}{\bibinfo{person}{Eli Pozniansky} {and} \bibinfo{person}{Assaf
  Schuster}.} \bibinfo{year}{2003}\natexlab{}.
\newblock \showarticletitle{Efficient On-the-fly Data Race Detection in
  Multithreaded C++ Programs}. In \bibinfo{booktitle}{\emph{Proceedings of the
  Ninth ACM SIGPLAN Symposium on Principles and Practice of Parallel
  Programming}} (San Diego, California, USA) \emph{(\bibinfo{series}{PPoPP
  '03})}. \bibinfo{publisher}{ACM}, \bibinfo{address}{New York, NY, USA},
  \bibinfo{pages}{179--190}.
\newblock
\showISBNx{1-58113-588-2}
\urldef\tempurl%
\url{https://doi.org/10.1145/781498.781529}
\showDOI{\tempurl}


\bibitem[\protect\citeauthoryear{Roemer and Bond}{Roemer and Bond}{2019}]%
        {roemeronline2019}
\bibfield{author}{\bibinfo{person}{Jake Roemer} {and}
  \bibinfo{person}{Michael~D. Bond}.} \bibinfo{year}{2019}\natexlab{}.
\newblock \showarticletitle{Online Set-Based Dynamic Analysis for Sound
  Predictive Race Detection}.
\newblock \bibinfo{journal}{\emph{CoRR}}  \bibinfo{volume}{abs/1907.08337}
  (\bibinfo{year}{2019}).
\newblock
\showeprint[arxiv]{1907.08337}
\urldef\tempurl%
\url{http://arxiv.org/abs/1907.08337}
\showURL{%
\tempurl}


\bibitem[\protect\citeauthoryear{Roemer, Gen\c{c}, and Bond}{Roemer
  et~al\mbox{.}}{2018}]%
        {Roemer18}
\bibfield{author}{\bibinfo{person}{Jake Roemer}, \bibinfo{person}{Kaan
  Gen\c{c}}, {and} \bibinfo{person}{Michael~D. Bond}.}
  \bibinfo{year}{2018}\natexlab{}.
\newblock \showarticletitle{High-coverage, Unbounded Sound Predictive Race
  Detection}. In \bibinfo{booktitle}{\emph{Proceedings of the 39th ACM SIGPLAN
  Conference on Programming Language Design and Implementation}} (Philadelphia,
  PA, USA) \emph{(\bibinfo{series}{PLDI 2018})}. \bibinfo{publisher}{ACM},
  \bibinfo{address}{New York, NY, USA}, \bibinfo{pages}{374--389}.
\newblock
\showISBNx{978-1-4503-5698-5}
\urldef\tempurl%
\url{https://doi.org/10.1145/3192366.3192385}
\showDOI{\tempurl}


\bibitem[\protect\citeauthoryear{Roemer, Gen\c{c}, and Bond}{Roemer
  et~al\mbox{.}}{2020}]%
        {Roemer20}
\bibfield{author}{\bibinfo{person}{Jake Roemer}, \bibinfo{person}{Kaan
  Gen\c{c}}, {and} \bibinfo{person}{Michael~D. Bond}.}
  \bibinfo{year}{2020}\natexlab{}.
\newblock \showarticletitle{SmartTrack: Efficient Predictive Race Detection}.
  In \bibinfo{booktitle}{\emph{Proceedings of the 41st ACM SIGPLAN Conference
  on Programming Language Design and Implementation}} (London, UK)
  \emph{(\bibinfo{series}{PLDI 2020})}. \bibinfo{publisher}{Association for
  Computing Machinery}, \bibinfo{address}{New York, NY, USA},
  \bibinfo{pages}{747–762}.
\newblock
\showISBNx{9781450376136}
\urldef\tempurl%
\url{https://doi.org/10.1145/3385412.3385993}
\showDOI{\tempurl}


\bibitem[\protect\citeauthoryear{Rosu}{Rosu}{2018}]%
        {rvpredict}
\bibfield{author}{\bibinfo{person}{Grigore Rosu}.}
  \bibinfo{year}{2018}\natexlab{}.
\newblock \bibinfo{booktitle}{\emph{{{RV-Predict, Runtime Verification}}}}.
\newblock
\newblock
\shownote{Accessed: 2018-04-01.}


\bibitem[\protect\citeauthoryear{Said, Wang, Yang, and Sakallah}{Said
  et~al\mbox{.}}{2011}]%
        {Said2011}
\bibfield{author}{\bibinfo{person}{Mahmoud Said}, \bibinfo{person}{Chao Wang},
  \bibinfo{person}{Zijiang Yang}, {and} \bibinfo{person}{Karem Sakallah}.}
  \bibinfo{year}{2011}\natexlab{}.
\newblock \showarticletitle{{Generating Data Race Witnesses by an SMT-based
  Analysis}}. In \bibinfo{booktitle}{\emph{Proceedings of the Third
  International Conference on NASA Formal Methods}} (Pasadena, CA)
  \emph{(\bibinfo{series}{NFM'11})}. \bibinfo{publisher}{Springer-Verlag},
  \bibinfo{address}{Berlin, Heidelberg}, \bibinfo{pages}{313--327}.
\newblock


\bibitem[\protect\citeauthoryear{Savage, Burrows, Nelson, Sobalvarro, and
  Anderson}{Savage et~al\mbox{.}}{1997}]%
        {savage1997eraser}
\bibfield{author}{\bibinfo{person}{Stefan Savage}, \bibinfo{person}{Michael
  Burrows}, \bibinfo{person}{Greg Nelson}, \bibinfo{person}{Patrick
  Sobalvarro}, {and} \bibinfo{person}{Thomas Anderson}.}
  \bibinfo{year}{1997}\natexlab{}.
\newblock \showarticletitle{{Eraser: A Dynamic Data Race Detector for
  Multi-threaded Programs}}.
\newblock \bibinfo{journal}{\emph{SIGOPS Oper. Syst. Rev.}}
  \bibinfo{volume}{31}, \bibinfo{number}{5} (\bibinfo{date}{Oct.}
  \bibinfo{year}{1997}), \bibinfo{pages}{27--37}.
\newblock


\bibitem[\protect\citeauthoryear{Schonberg}{Schonberg}{1989}]%
        {Schonberg89}
\bibfield{author}{\bibinfo{person}{D. Schonberg}.}
  \bibinfo{year}{1989}\natexlab{}.
\newblock \showarticletitle{On-the-fly Detection of Access Anomalies}. In
  \bibinfo{booktitle}{\emph{Proceedings of the ACM SIGPLAN 1989 Conference on
  Programming Language Design and Implementation}} (Portland, Oregon, USA)
  \emph{(\bibinfo{series}{PLDI '89})}. \bibinfo{publisher}{ACM},
  \bibinfo{address}{New York, NY, USA}, \bibinfo{pages}{285--297}.
\newblock
\showISBNx{0-89791-306-X}
\urldef\tempurl%
\url{https://doi.org/10.1145/73141.74844}
\showDOI{\tempurl}


\bibitem[\protect\citeauthoryear{Sen}{Sen}{2008}]%
        {Sen:2008:RDR:1375581.1375584}
\bibfield{author}{\bibinfo{person}{Koushik Sen}.}
  \bibinfo{year}{2008}\natexlab{}.
\newblock \showarticletitle{Race Directed Random Testing of Concurrent
  Programs}. In \bibinfo{booktitle}{\emph{Proceedings of the 29th ACM SIGPLAN
  Conference on Programming Language Design and Implementation}} (Tucson, AZ,
  USA) \emph{(\bibinfo{series}{PLDI '08})}. \bibinfo{publisher}{ACM},
  \bibinfo{address}{New York, NY, USA}, \bibinfo{pages}{11--21}.
\newblock
\showISBNx{978-1-59593-860-2}
\urldef\tempurl%
\url{https://doi.org/10.1145/1375581.1375584}
\showDOI{\tempurl}


\bibitem[\protect\citeauthoryear{Sen, Ro\c{s}u, and Agha}{Sen
  et~al\mbox{.}}{2005}]%
        {sen2005detecting}
\bibfield{author}{\bibinfo{person}{Koushik Sen}, \bibinfo{person}{Grigore
  Ro\c{s}u}, {and} \bibinfo{person}{Gul Agha}.}
  \bibinfo{year}{2005}\natexlab{}.
\newblock \showarticletitle{{Detecting Errors in Multithreaded Programs by
  Generalized Predictive Analysis of Executions}}. In
  \bibinfo{booktitle}{\emph{Proceedings of the 7th IFIP WG 6.1 International
  Conference on Formal Methods for Open Object-Based Distributed Systems}}
  (Athens, Greece) \emph{(\bibinfo{series}{FMOODS'05})}.
  \bibinfo{publisher}{Springer-Verlag}, \bibinfo{address}{Berlin, Heidelberg},
  \bibinfo{pages}{211--226}.
\newblock


\bibitem[\protect\citeauthoryear{{\c{S}}erb{\u{a}}nu{\c{t}}{\u{a}}, Chen, and
  Ro{\c{s}}u}{{\c{S}}erb{\u{a}}nu{\c{t}}{\u{a}} et~al\mbox{.}}{2012}]%
        {maxcausalmodels}
\bibfield{author}{\bibinfo{person}{Traian~Florin
  {\c{S}}erb{\u{a}}nu{\c{t}}{\u{a}}}, \bibinfo{person}{Feng Chen}, {and}
  \bibinfo{person}{Grigore Ro{\c{s}}u}.} \bibinfo{year}{2012}\natexlab{}.
\newblock \showarticletitle{{Maximal causal models for sequentially consistent
  systems}}. In \bibinfo{booktitle}{\emph{International Conference on Runtime
  Verification}}. Springer, \bibinfo{pages}{136--150}.
\newblock


\bibitem[\protect\citeauthoryear{Serebryany and Iskhodzhanov}{Serebryany and
  Iskhodzhanov}{2009}]%
        {threadsanitizer}
\bibfield{author}{\bibinfo{person}{Konstantin Serebryany} {and}
  \bibinfo{person}{Timur Iskhodzhanov}.} \bibinfo{year}{2009}\natexlab{}.
\newblock \showarticletitle{{ThreadSanitizer: Data Race Detection in
  Practice}}. In \bibinfo{booktitle}{\emph{Proceedings of the Workshop on
  Binary Instrumentation and Applications}} (New York, New York, USA)
  \emph{(\bibinfo{series}{WBIA '09})}. \bibinfo{publisher}{ACM},
  \bibinfo{address}{New York, NY, USA}, \bibinfo{pages}{62--71}.
\newblock


\bibitem[\protect\citeauthoryear{Sergey}{Sergey}{2019}]%
        {soundness-dynamic-analysis}
\bibfield{author}{\bibinfo{person}{Ilya Sergey}.}
  \bibinfo{year}{2019}\natexlab{}.
\newblock \bibinfo{booktitle}{\emph{{{What Does It Mean for a Program Analysis
  to Be Sound?}}}}
\newblock
\newblock
\shownote{Accessed: 2019-08-07.}


\bibitem[\protect\citeauthoryear{Smaragdakis, Evans, Sadowski, Yi, and
  Flanagan}{Smaragdakis et~al\mbox{.}}{2012}]%
        {cp2012}
\bibfield{author}{\bibinfo{person}{Yannis Smaragdakis}, \bibinfo{person}{Jacob
  Evans}, \bibinfo{person}{Caitlin Sadowski}, \bibinfo{person}{Jaeheon Yi},
  {and} \bibinfo{person}{Cormac Flanagan}.} \bibinfo{year}{2012}\natexlab{}.
\newblock \showarticletitle{Sound Predictive Race Detection in Polynomial
  Time}. In \bibinfo{booktitle}{\emph{Proceedings of the 39th Annual ACM
  SIGPLAN-SIGACT Symposium on Principles of Programming Languages}}
  (Philadelphia, PA, USA) \emph{(\bibinfo{series}{POPL '12})}.
  \bibinfo{publisher}{ACM}, \bibinfo{address}{New York, NY, USA},
  \bibinfo{pages}{387--400}.
\newblock
\showISBNx{978-1-4503-1083-3}
\urldef\tempurl%
\url{https://doi.org/10.1145/2103656.2103702}
\showDOI{\tempurl}


\bibitem[\protect\citeauthoryear{Smith and Bull}{Smith and Bull}{2001}]%
        {JGF2001}
\bibfield{author}{\bibinfo{person}{Lorna~A Smith} {and} \bibinfo{person}{J~Mark
  Bull}.} \bibinfo{year}{2001}\natexlab{}.
\newblock \showarticletitle{{A multithreaded java grande benchmark suite}}. In
  \bibinfo{booktitle}{\emph{Proceedings of the third workshop on Java for high
  performance computing}}.
\newblock


\bibitem[\protect\citeauthoryear{Sorrentino, Farzan, and Madhusudan}{Sorrentino
  et~al\mbox{.}}{2010}]%
        {penelope2010}
\bibfield{author}{\bibinfo{person}{Francesco Sorrentino},
  \bibinfo{person}{Azadeh Farzan}, {and} \bibinfo{person}{P. Madhusudan}.}
  \bibinfo{year}{2010}\natexlab{}.
\newblock \showarticletitle{PENELOPE: Weaving Threads to Expose Atomicity
  Violations}. In \bibinfo{booktitle}{\emph{Proceedings of the Eighteenth ACM
  SIGSOFT International Symposium on Foundations of Software Engineering}}
  (Santa Fe, New Mexico, USA) \emph{(\bibinfo{series}{FSE '10})}.
  \bibinfo{publisher}{ACM}, \bibinfo{address}{New York, NY, USA},
  \bibinfo{pages}{37--46}.
\newblock
\showISBNx{978-1-60558-791-2}
\urldef\tempurl%
\url{https://doi.org/10.1145/1882291.1882300}
\showDOI{\tempurl}


\bibitem[\protect\citeauthoryear{von Praun and Gross}{von Praun and
  Gross}{2001}]%
        {vonPraun:2001:ORD:504282.504288}
\bibfield{author}{\bibinfo{person}{Christoph von Praun} {and}
  \bibinfo{person}{Thomas~R. Gross}.} \bibinfo{year}{2001}\natexlab{}.
\newblock \showarticletitle{Object Race Detection}. In
  \bibinfo{booktitle}{\emph{Proceedings of the 16th ACM SIGPLAN Conference on
  Object-oriented Programming, Systems, Languages, and Applications}} (Tampa
  Bay, FL, USA) \emph{(\bibinfo{series}{OOPSLA '01})}.
  \bibinfo{publisher}{ACM}, \bibinfo{address}{New York, NY, USA},
  \bibinfo{pages}{70--82}.
\newblock
\showISBNx{1-58113-335-9}
\urldef\tempurl%
\url{https://doi.org/10.1145/504282.504288}
\showDOI{\tempurl}


\bibitem[\protect\citeauthoryear{Wang, Kundu, Ganai, and Gupta}{Wang
  et~al\mbox{.}}{2009}]%
        {SPA2009}
\bibfield{author}{\bibinfo{person}{Chao Wang}, \bibinfo{person}{Sudipta Kundu},
  \bibinfo{person}{Malay Ganai}, {and} \bibinfo{person}{Aarti Gupta}.}
  \bibinfo{year}{2009}\natexlab{}.
\newblock \showarticletitle{Symbolic Predictive Analysis for Concurrent
  Programs}. In \bibinfo{booktitle}{\emph{Proceedings of the 2nd World Congress
  on Formal Methods}} (Eindhoven, The Netherlands) \emph{(\bibinfo{series}{FM
  '09})}. \bibinfo{publisher}{Springer-Verlag}, \bibinfo{address}{Berlin,
  Heidelberg}, \bibinfo{pages}{256--272}.
\newblock


\bibitem[\protect\citeauthoryear{Yao}{Yao}{1979}]%
        {Yao79}
\bibfield{author}{\bibinfo{person}{Andrew Chi-Chih Yao}.}
  \bibinfo{year}{1979}\natexlab{}.
\newblock \showarticletitle{Some Complexity Questions Related to Distributive
  Computing(Preliminary Report)}. In \bibinfo{booktitle}{\emph{Proceedings of
  the Eleventh Annual ACM Symposium on Theory of Computing}} (Atlanta, Georgia,
  USA) \emph{(\bibinfo{series}{STOC ’79})}. \bibinfo{publisher}{Association
  for Computing Machinery}, \bibinfo{address}{New York, NY, USA},
  \bibinfo{pages}{209–213}.
\newblock
\showISBNx{9781450374385}
\urldef\tempurl%
\url{https://doi.org/10.1145/800135.804414}
\showDOI{\tempurl}


\bibitem[\protect\citeauthoryear{Yu, Lee, and Bae}{Yu et~al\mbox{.}}{2018}]%
        {Yu18}
\bibfield{author}{\bibinfo{person}{Misun Yu}, \bibinfo{person}{Joon-Sang Lee},
  {and} \bibinfo{person}{Doo-Hwan Bae}.} \bibinfo{year}{2018}\natexlab{}.
\newblock \showarticletitle{AdaptiveLock: Efficient Hybrid Data Race Detection
  Based on Real-World Locking Patterns}.
\newblock \bibinfo{journal}{\emph{International Journal of Parallel
  Programming}} (\bibinfo{date}{04 Jun} \bibinfo{year}{2018}).
\newblock
\showISSN{1573-7640}
\urldef\tempurl%
\url{https://doi.org/10.1007/s10766-018-0579-5}
\showDOI{\tempurl}


\bibitem[\protect\citeauthoryear{Yu, Rodeheffer, and Chen}{Yu
  et~al\mbox{.}}{2005}]%
        {racetrack}
\bibfield{author}{\bibinfo{person}{Yuan Yu}, \bibinfo{person}{Tom Rodeheffer},
  {and} \bibinfo{person}{Wei Chen}.} \bibinfo{year}{2005}\natexlab{}.
\newblock \showarticletitle{{RaceTrack: Efficient Detection of Data Race
  Conditions via Adaptive Tracking}}.
\newblock \bibinfo{journal}{\emph{SIGOPS Oper. Syst. Rev.}}
  \bibinfo{volume}{39}, \bibinfo{number}{5} (\bibinfo{date}{Oct.}
  \bibinfo{year}{2005}), \bibinfo{pages}{221--234}.
\newblock


\bibitem[\protect\citeauthoryear{Zhao, Qiu, and Jin}{Zhao
  et~al\mbox{.}}{2019}]%
        {Zhao19}
\bibfield{author}{\bibinfo{person}{Qi Zhao}, \bibinfo{person}{Zhengyi Qiu},
  {and} \bibinfo{person}{Guoliang Jin}.} \bibinfo{year}{2019}\natexlab{}.
\newblock \showarticletitle{Semantics-Aware Scheduling Policies for
  Synchronization Determinism}. In \bibinfo{booktitle}{\emph{Proceedings of the
  24th Symposium on Principles and Practice of Parallel Programming}}
  (Washington, District of Columbia) \emph{(\bibinfo{series}{PPoPP ’19})}.
  \bibinfo{publisher}{Association for Computing Machinery},
  \bibinfo{address}{New York, NY, USA}, \bibinfo{pages}{242–256}.
\newblock
\showISBNx{9781450362252}
\urldef\tempurl%
\url{https://doi.org/10.1145/3293883.3295731}
\showDOI{\tempurl}


\bibitem[\protect\citeauthoryear{Zhivich and Cunningham}{Zhivich and
  Cunningham}{2009}]%
        {SoftwareErrors2009}
\bibfield{author}{\bibinfo{person}{M. Zhivich} {and} \bibinfo{person}{R.~K.
  Cunningham}.} \bibinfo{year}{2009}\natexlab{}.
\newblock \showarticletitle{The Real Cost of Software Errors}.
\newblock \bibinfo{journal}{\emph{IEEE Security and Privacy}}
  \bibinfo{volume}{7}, \bibinfo{number}{2} (\bibinfo{date}{March}
  \bibinfo{year}{2009}), \bibinfo{pages}{87–90}.
\newblock
\showISSN{1540-7993}
\urldef\tempurl%
\url{https://doi.org/10.1109/MSP.2009.56}
\showDOI{\tempurl}


\end{thebibliography}

\clearpage
\appendix


\section{Proofs from~\secref{detection}}
\applabel{app_detection}

\lemtraceordersuffices*

\begin{proof}
Let $\pi$ be a sync-preserving correct reordering of $\tr$.
Let $E = \events{\pi}$.
Observe that $E$ is $(\tho{\tr}, \lw{\tr})$-closed because $\pi$
is a correct reordering of $\tr$.
We note a few observations about $E$.
First, $E \subseteq \events{\tr}$.
Second, $E$ is downward closed with respect to $\tho{\tr}$.
Third, for every read event $e \in E$, we have $\lw{\tr}(e) \in E$.
Fourth, for every lock $\lk$, there is at most one acquire $a$
of $\lk$ such that $a \in E$ but $\match{\tr}(a) \not\in E$.

Now, consider the sequence $\rho$
obtained by linearizing $E$ according to the total order of 
$\trord{\tr}$ (i.e., $\rho$ is the projection of $\tr$ onto the set $E$).
We first argue that $\rho$ is a well-formed trace.
This follows because for every lock $\lk$, 
there is at most one unmatched acquire event $e$ in $E$ of lock $\lk$, 
and all the other acquires are earlier than $e$ in $\tr$ (and hence in $\rho$).
Next, $\rho$ respects $\tho{\tr}$ because $E$ is downward closed with respect to $\tho{\tr}$
and respects $\trord{\tr}$.
For the same reason, for every read event $e \in E$,
we have $\lw{\rho}(e) = \lw{\tr}(e)$.
Thus, $\rho$ is a correct reordering of $\tr$.
Further, $\rho$ is trivially a sync-preserving correct reordering of $\tr$.
\end{proof}


\zridealdisjointness*
\begin{proof}
Without loss of generality, we let $e_1 \trord{\tr} e_2$.

($\Rightarrow$) Assume $\ZRIdeal{\tr}(e_1, e_2)\cap\set{e_1, e_2} \neq \emptyset$.
Then we must have $e_1 \in \ZRIdeal{\tr}(e_1, e_2)$
and in particular $e_1 \in \ZRClosure{\tr}(\prev{\tr}(e_2))$
(and thus $e = \prev{\tr}(e_2) \neq \bot$).
Then, either $e_1 \in \TOOClosure{\tr}(\set{e})$ or
there is a release event $e'$ such that
$e_1 \tho{\tr} e'$ and $e' \in \ZRClosure{\tr}(e)$.
In either case, $e_1$ cannot be enabled in
a sync-preserving correct reordering containing $e$.

($\Leftarrow$)
Let $E = \ZRIdeal{\tr}(e_1, e_2)$ and let $\rho$
be the sequence obtained by linearizing $E$ as per $\trord{\tr}$.
Observe that $\rho$ is well formed, respects $\tho{\tr}$
and $\lw{\tr}$ and thus is a correct reordering of $\tr$.
Further, the order of all critical sections is the same.
Also, $e_1$ and $e_2$ are both enabled in $\rho$.
\end{proof}

\lemmonotonicity*

\begin{proof}
The proof follows from the following observations.
$e_1 \tho{\tr} \prev{\tr}(e'_1)$,
$e_2 \tho{\tr} \prev{\tr}(e'_2)$
and $\ZRClosure{\tr}(S)$ is downward closed with respect to $(\tho{\tr})$
and sync-preserving-closed.
\end{proof}

\lemlangnlowerbound*
\begin{proof}
Assume towards contradiction otherwise, i.e., there is a streaming algorithm that uses $o(n)$ space.
Hence the state space of the algorithm is $o(2^n)$.
Then, there exist two distinct $n$-bit strings $u_1\neq u_2$, such that the algorithm is in the same state after parsing $u_1$ and $u_2$.
Hence, for any $n$-bit string $v$, the algorithm gives the same answer on inputs $u_1\#v$ and $u_2\#v$.
Since the algorithm is correct, it reports that $u_1\#u_1$ belongs to $\Lang_n$.
But then the algorithm reports that $u_2\#u_1$ also belongs to $\Lang_n$, a contradiction.
The desired result follows.
\end{proof}

\lemlangncorrectness*
\begin{proof}
We prove each item separately.

\noindent{\em 1. $s \in \EQ_n$.}
First, notice that by construction, for every $i\in [n]$, the pair $(e^1_i,e^2_i)$ is not a predictable race of $\tr$, as we have $u[i]=v[i]$ and thus either both events are read events or both are write events and thus each is protected by the lock $c$.
It remains to argue that $e(^1_i, e^2_j)$ is not a predictable race for any $i,j\in [n]$ with $i\neq j$.
It suffices to show that $\lheld{\tr}(e^1_i)) \cap \lheld{\tr}(e^2_j) \neq \emptyset$.
We split cases based on the relation between $i$ and $j$.
\begin{compactenum}
\item[$i<j$.] By construction, we have $\lheld{\tr}(e^1_j)\cap B \not \subseteq \lheld{\tr}(e^1_i) \cap B$.
Moreover, we have $\lhead{\tr}(e^2_j)\cap B = B \setminus \lheld{\tr}(e^1_j)$.
Thus, we have $\lheld{\tr}(e^1_i)) \cap \lheld{\tr}(e^2_j) \cap B \neq \emptyset$, as desired.
\item[$j<i$.] By construction, we have $\lheld{\tr}(e^1_i)\cap A \not \subseteq \lheld{\tr}(e^1_j) \cap A$.
Moreover, we have $\lhead{\tr}(e^2_i)\cap A = A \setminus \lheld{\tr}(e^1_i)$.
Thus, we have $\lheld{\tr}(e^1_i)) \cap \lheld{\tr}(e^2_j) \cap A \neq \emptyset$, as desired.
\end{compactenum}

\noindent{\em 2. $s \not \in \EQ_n$.}
First, notice that by construction, one of $e_i^1$ and $e_i^2$ is a read event and the other is a write event.
Hence, at least one of them is not surrounded by lock $c$.
Finally, by construction we have $\lhead{\tr}(e^2_i)\cap (A\cup B) = A\cup B \setminus\lheld{\tr}(e^1_j)$, and thus $\lheld{\tr}(e^1_i)) \cap \lheld{\tr}(e^2_j) = \emptyset$.

The desired result follows.
\end{proof}

\Paragraph{The set-equality problem.}
Finally, we turn our attention to \cref{thm:product_lowerbound}.
Our proof uses the set-equality problem, i.e., given two bit-sets $u,v\in \{0,1\}^n$, the task is to decide whether $u=v$.
The problem has a $\Omega(n)$ lower-bound for communication complexity~\cite{Yao79},
i.e., if $u$ and $v$ is given separately to Alice and Bob, respectively, the two parties need to exchange $\Omega(n)$ bits of information in order to decide whether $u=v$.

\begin{proof}[Proof of \cref{thm:product_lowerbound}.]
Consider the language $\EQ_n=\{u\#^n v\colon  u,v\in \{0,1\}^n \text{ and } u=v\}$, and any Turing Machine $M$ that decides $\EQ_n$ using $T_M(n)$ time and $S_M(n)$ space.
Let $K_M(n)$ be an upper-bound on the number of ``passes'' that $M$ makes over the sequence $\#^n$ as it decides membership in $\EQ_n$.
In each pass, $M$ ``communicates'' at most $S_M(n)$ bits of information.
Since the set-equality problem has communication complexity $\Omega(n)$~\cite{Yao79},
we have $K_M(n)\cdot S_M(n)=\Omega(n)$, i.e., $M$ makes $\Omega(n/S_M(n))$ passes.
Since each pass has to traverse $n$ symbols $\#$, each pass costs time $\Omega(n)$.
Hence, the total time is $T_M(n)=\Omega(n\cdot K_M(n))$, and thus $T_M(n)\cdot S_M(n)\geq n^2$.

We now turn our attention to sync-preserving race prediction.
Let $m=n/\log n$
Using essentially the same reduction as above, we reduce the membership problem for $\EQ_{m}$ to the sync-preserving race prediction problem on a trace $\tr$ with $2$ threads, $n$ events and $O(\log n)$ locks.
\cref{lem:langn_correctness} guarantees that $\tr$ has a predictable race iff $u\neq v$, and if so, then it is a sync-preserving race.
Hence, any algorithm that solves sync-preserving race prediction on $\tr$ in time $T(n)$ and space $S(n)$ must satisfy that $T(n)\cdot S(n)=\Omega (m^2)=\Omega(n^2/\log^2 n)$.
The desired result follows.
\end{proof}

\section{Proofs from~\secref{dichotomy}}\label{sec:app_dichotomy}

In this section we present the proofs of \cref{sec:dichotomy}, i.e., 
\cref{lem:p_po}, \cref{lem:witness} and \cref{lem:rscrm_hardness}.

\smallskip
\lemppo*
\begin{proof}
Assume towards contradiction otherwise, hence $P'$ has a cycle $e_1 <_{P'} e_2<_{P'}\dots<_{P'} e_1$.
Let $C=\{ e_i\colon e_i \in X'\setminus X \}$, and note that $C\neq \emptyset$ as $P\Refines P'\Project X$.
Observe that $C$ cannot contain any event of the distinguished triplet $\DistinguishedTriplet$, as every event of $\DistinguishedTriplet$ is either minimal or maximal in $P'$.
On the other hand, $C$ cannot contain any event of any set $Y_i^j$, as every such event only has predecessors that are either in $Y_i^j$ or in $\DistinguishedTriplet$, and clearly $P'\Project Y_i^j$ is acyclic.
Finally, $C$ cannot contain any event of any set $X_i^j$, as every such event only has successors that are either in $X_i^j$ or in $\DistinguishedTriplet$,
and clearly $P'\Project X_i^j$ is acyclic.
Hence $C=\emptyset$, a contradiction.

The desired result follows.
\end{proof}

\smallskip
\lemwitness*
\begin{proof}
We argue that $Q$ is indeed a partial order, from which follows that $\tr$ is a witness of the realizability of $\RFPoset'$.
Observe that for every interfering write event $\wt'$ of each triplet, every predecessor $\wt''$ of $\wt'$ is also an interfering write event of some triplet. 
On the other hand, every new successor of $\wt'$ in $Q$ is not an interfering write event of any triplet.
Hence, since $P'$ is acyclic, $Q$ is also acyclic.

The desired result follows.
\end{proof}

\smallskip
\lemrscrmhardness
\begin{proof}
Here we argue that $\RFPoset$ has a witness $\tr_1$ iff $\RFPoset'$ has a witness $\tr_2$.

\noindent{($\Rightarrow$).}
Consider the witness $\tr_1$ for $\RFPoset$, and we show how to obtain the witness $\tr_2$ for $\RFPoset'$.
We construct a partial order $Q$ over $X'$ such that
(i)~$Q\Refines P'$,
(ii)~for every triplet $(\wt, \rd, \wt')\in \ReadTriplets{\RFPosetS}\setminus\ReadTriplets{\RFPoset}$, we have $\rd <_{Q} \wt'$, and 
(iii)~$Q\Project X = \tr_1$ (i.e., $Q$ totally orders the events of $X$ according to $\tr_1$).
Afterwards, we construct $\tr_2$ by linearizing $Q$ arbitrarily.
Note that (ii) makes $Q\Project X=P$, hence the linearization in (iii) is well defined.

\noindent{($\Leftarrow$).}
Consider the witness $\tr_2$ for $\RFPoset'$, and we show how to obtain the witness $\tr_1$ for $\RFPoset$.
We construct $\tr_1$ simply as $\tr_1=\tr_2 \Project X$.
To see that $\tr_1$ is a linearization of $(X,P)$, consider any two events $e, e'\in X$ such that $e<_{P}e'$.
Consider the triplet $(\wt_{e, e'}, \rd_{e, e'}, \wt_{e, e'}')$ of $\RFPoset'$,
and we have $\wt_{e, e'}<_{\tr_2}\wt'_{e, e'}$, thus $\rd_{e, e'}<_{\tr_2}\wt'_{e, e'}$.
This leads to $e<_{\tr_2}e'$, and thus $e<_{\tr_1}e'$, as desired.

The desired result follows.
\end{proof}

\thmwonehard*
\begin{proof}
We argue that $\RFPoset$ has a witness $\tr_1$ iff $(\wt(y), \rd(y))$ is a predictable data race of $\tr'$, witnessed by a trace $\tr_2$.

\noindent{($\Rightarrow$).}
Consider the witness $\tr_1$, hence $\ov{\rd}<_{\tr_1}\ov{\wt}'$ for the distinguished triplet $\DistinguishedTriplet=(\ov{\wt}, \ov{\rd}, \ov{\wt}')$.
The witness trace $\tr_2$ is constructed as follows.
\begin{compactenum}
\item We insert in $\tr_1$ all events that where inserted in $\tr$ to produce $\tr'$, in the same order.
\item We remove from $\tr_1$ the events $\{ \rel_1, \wt(y), \rd(y) \}$.
\end{compactenum}
It follows easily that $\tr_2$ is a correct reordering of $\tr'$, in which $\wt(y)$ and $\rd(y)$ are enabled, hence $(\wt(y), \rd(y))$ is a predictable data race of $\tr'$.

\noindent{($\Leftarrow$).}
Consider the witness $\tr_2$, and it is straightforward that $\events{\tr_2} = \events{\tr'}\setminus \{ \rel_2 , \ov{\wt}, \ov{\rd} \}$.
Hence $\rel_1<_{\tr_2}\acq_2$ and thus $\ov{\rd}<\ov{\wt}'$ for the distinguished triplet $\DistinguishedTriplet=(\ov{\wt}, \ov{\rd}, \ov{\wt}')$.
The witness $\tr_1$ is constructed as $\tr_1=\tr_2\Project X$, i.e., by removing from $\tr_2$ all the events that we inserted when we constructed $\tr'$ from $\tr$.
It follows easily that $\tr_1$ is a linearization of the rf-poset $(X,P,\RF)$ and $\rd<_{\tr_1}\wt'$.

The desired result follows.
\end{proof}


\section{Comparison with Other Race Prediction Algorithms}
\applabel{comparison}

Here, we discuss recent advances in data race prediction
and characterize their prediction power with respect to
sync-preserving races.
We focus our attention to \emph{sound} race prediction techniques.
Specifically, we compare sync-preserving race detection with 
race prediction based on 
the \emph{happens-before} (\textsf{HB}) partial order, 
the \emph{schedulable-happens-before} (\textsf{SHB}) partial order~\cite{shb2018},
the \emph{causally precedes partial order} (\textsf{CP})~\cite{cp2012}, 
the \emph{weak causally precedes partial order} (\textsf{WCP})~\cite{wcp2017}, and
the \emph{does not commute partial order} (\textsf{DC})~\cite{Roemer18}.
The recently introduced partial order \emph{strong-dependently-precedes} 
(\textsf{SDP})~\cite{Bond2019}, while claimed to be sound in that paper,
is actually unsound. 
In \cref{sec:app_sdp}, we show a counter-example
to the soundness theorem of \textsf{SDP}.

\subsection{Comparison with \textsf{HB} and \textsf{SHB}}

The happens-before ($\hb{}$) order is a classic partial order
employed in popular race detectors like ThreadSanitizer~\cite{threadsanitizer}
and FastTrack~\cite{fasttrack}.
The main idea behind race detectors based on $\hb{}$ 
is to determine the existence of
conflicting pairs of events that are unordered by $\hb{}$,
defined as follows.

\begin{definition}[Happens-Before]
The happens-before order defined given a trace $\tr$ is the smallest partial order
$\hb{\tr}$ on $\events{\tr}$ such that $\tho{\tr} \subseteq \hb{\tr}$
and for every lock $\lk \in \locks{\tr}$ and for every two events
$e_1 \trord{\tr} e_2$ such that $e_1 \in \releases{\tr}(\lk)$ and
$e_2 \in \acquires{\tr}(\lk)$, we have $e_1 \hb{\tr} e_2$.
\end{definition}

A pair $(e_1, e_2)$ of conflicting events of $\tr$
is said to be an $\hb{}$-race if $\Unordered{e_1}{\tr}{\textsf{HB}}{e_2}$.
The soundness guarantee of $\hb{}$ states that if $\tr$
has an $\hb{}$-race, then $\tr$ also has a predictable data race.
The partial order $\hb{}$ is known to miss the existence of predictable data races,
or, in other words, it is incomplete.
Further, as noted in as noted in~\cite{shb2018}, 
while $\hb{}$ is sound for checking the existence of a data race,
the soundness guarantee only applies to the first such race identified,
beyond which, conflicting pairs of events unordered by $\hb{}$
may not be predictable races.
The partial order $\shb{}$~\cite{shb2018}, defined below, 
overcomes this problem.

\begin{definition}[Schedulable-Happens-Before~\cite{shb2018}]
The schedulable-happens-before order defined given a trace $\tr$ is the smallest partial order
$\shb{\tr}$ on $\events{\tr}$ such that $\hb{\tr} \subseteq \shb{\tr}$
and for every variable $x \in \locks{\tr}$ and for every read event
$e_1 \trord{\tr} e_2$ such that $e_1 \in \releases{\tr}(\lk)$ and
$e_2 \in \acquires{\tr}(\lk)$, we have $e_1 \hb{\tr} e_2$.
\end{definition}

A pair of conflicting events $(e_1, e_2)$ in $\tr$ 
is an $\shb{}$-race if
either $\prev{\tr}(e_2) = \bot$ or $\Unordered{e_1}{\tr}{\textsf{SHB}}{\prev{\tr}(e_2)}$.
The soundness theorem for \textsf{SHB} states
that every $\shb{}$-race of a trace $\tr$ is also a 
predictable data race of $\tr$~\cite{shb2018}, and further
the first race identified by $\hb{}$ (for which soundness of $\hb{}$ holds)
is also reported by $\shb{}$.

We next make the following observation.
The proof follows directly from the soundness proof of \textsf{SHB}~\cite{shb2018}.
\begin{restatable}{lemma}{hbraceimplieszeroreversal}\lemlabel{hbraceimplieszeroreversal}
For a trace $\tr$ and a conflicting pair of events $(e_1, e_2)$ of $\tr$,
if $(e_1, e_2)$ is an \textsf{SHB}-race, then $(e_1, e_2)$ is 
a sync-preserving race of $\tr$.
\end{restatable}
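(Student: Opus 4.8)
The plan is to reduce the claim to \lemref{zrideal-disjointness}, which states that $(e_1,e_2)$ is a sync-preserving race iff $\set{e_1,e_2}\cap \ZRIdeal{\tr}(e_1,e_2)=\emptyset$. I would assume without loss of generality that $e_1\trord{\tr}e_2$, matching the asymmetric form of the \textsf{SHB}-race condition, and recall that $\ZRIdeal{\tr}(e_1,e_2)=\ZRClosure{\tr}(\set{\prev{\tr}(e_1)}\cup\set{\prev{\tr}(e_2)})$. The central idea is that the \textsf{SHB}-downward-closed set lying below the two predecessors is already sync-preserving closed; since $\ZRClosure{\tr}$ is the \emph{smallest} such set, this sandwiches $\ZRIdeal{\tr}(e_1,e_2)$ inside it, while the \textsf{SHB}-race hypothesis keeps $e_1$ out.

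Concretely, I would set $D=\setpred{e\in\events{\tr}}{e\shb{\tr}\prev{\tr}(e_1)\ \text{or}\ e\shb{\tr}\prev{\tr}(e_2)}$ (treating a disjunct as false when the corresponding predecessor is $\bot$), and verify that $D$ meets the three requirements of \defref{zr-closure}. Thread-order downward closure is immediate from $\tho{\tr}\subseteq\hb{\tr}\subseteq\shb{\tr}$ together with transitivity of $\shb{\tr}$; last-write closure holds because \textsf{SHB} orders every write before the reads it feeds, so $r\in D$ forces $\lw{\tr}(r)\shb{\tr}r\shb{\tr}\prev{\tr}(e_i)$ and hence $\lw{\tr}(r)\in D$. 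The one substantive check is critical-section closure: given $a_1\trord{\tr}a_2$ on a common lock with $a_1,a_2\in D$, lock semantics give $\match{\tr}(a_1)\trord{\tr}a_2$, and the release--acquire rule of $\hb{\tr}$ yields $\match{\tr}(a_1)\shb{\tr}a_2$; since $a_2\in D$, transitivity places $\match{\tr}(a_1)\in D$. As $\prev{\tr}(e_1),\prev{\tr}(e_2)\in D$ by reflexivity of $\shb{\tr}$, minimality of the closure gives $\ZRIdeal{\tr}(e_1,e_2)\subseteq D$.

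Finally I would show $e_1\notin D$. Membership $e_1\in D$ would require $e_1\shb{\tr}\prev{\tr}(e_1)$ or $e_1\shb{\tr}\prev{\tr}(e_2)$; the former is impossible since $\prev{\tr}(e_1)\strictshb{\tr}e_1$ and $\shb{\tr}$ is antisymmetric, and the latter is exactly ruled out by the \textsf{SHB}-race hypothesis $\Unordered{e_1}{\tr}{\textsf{SHB}}{\prev{\tr}(e_2)}$ (and is vacuous when $\prev{\tr}(e_2)=\bot$). Hence $e_1\notin\ZRIdeal{\tr}(e_1,e_2)$, and since $e_1\trord{\tr}e_2$ already forces $e_2\notin\ZRIdeal{\tr}(e_1,e_2)$ by \defref{zr-closure} (as observed for \algoref{check-given-pair}), the intersection is empty and \lemref{zrideal-disjointness} concludes the proof. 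I expect the critical-section closure step to be the main obstacle: it is precisely where the synchronization content of $\hb{\tr}$ (release before a later acquire) must be converted into the lock-aware closure condition of sync-preservation, and where one sees that a trace-order linearization of $D$ avoids overlapping critical sections. An equivalent, more operational route would instead reuse the \textsf{SHB} soundness construction directly, observing that its witness is obtained by linearizing $D$ along $\trord{\tr}$ and is therefore automatically sync-preserving.
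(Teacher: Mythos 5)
Your proof is correct, but it takes a genuinely different route from the paper's. The paper proves the lemma directly, without invoking \lemref{zrideal-disjointness}: it sets $S_i = \setpred{e \in \events{\tr}}{e \strictshb{\tr} e_i}$ and $S = S_1 \cup S_2$, observes $\set{e_1,e_2}\cap S = \emptyset$, and exhibits the witness explicitly as $\rho = \rho'\cdot e_1\cdot e_2$, where $\rho'$ linearizes $S$ along $\trord{\tr}$; since $S$ is $\shb{\tr}$-downward closed, $\rho$ is a correct reordering, and since $\trord{\rho}\subseteq\trord{\tr}$ it is sync-preserving. You instead delegate the witness construction to \lemref{zrideal-disjointness} and spend the effort verifying that the $\shb{\tr}$-downward closure $D$ of the two predecessors is sync-preserving closed, so that $\ZRIdeal{\tr}(e_1,e_2)\subseteq D$ by minimality of the closure. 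The trade-off is this: the paper's argument is shorter and produces the reordering concretely, but its step ``$\rho$ is a (well-formed) correct reordering'' silently uses exactly what you make explicit in the critical-section check --- that $\shb{\tr}$-downward closure, via the release-acquire rule of $\hb{\tr}$, prevents dangling critical sections from overlapping in a trace-order linearization. Your version is also more robust on a corner case: under the paper's stated definition, a pair with $e_1 = \lw{\tr}(e_2)$ (hence $e_1 \strictshb{\tr} e_2$) can still be an \textsf{SHB}-race provided $\Unordered{e_1}{\tr}{\textsf{SHB}}{\prev{\tr}(e_2)}$, and then the paper's observation $\set{e_1,e_2}\cap S=\emptyset$ actually fails (one gets $e_1\in S_2$), whereas your $D$, being anchored at $\prev{\tr}(e_1)$ and $\prev{\tr}(e_2)$ rather than at $e_1,e_2$ themselves, still excludes $e_1$ and the argument goes through unchanged. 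Finally, your closing remark --- that linearizing $D$ along $\trord{\tr}$ reproduces the \textsf{SHB} soundness witness --- is essentially the paper's proof, so the two arguments are two views of the same fact: yours factors it through the closure characterization, the paper's through the explicit reordering.
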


\begin{proof}
Let $\tr$ be a trace and let $(e_1, e_2)$ be an \textsf{SHB}-race of $\tr$
such that $e_1 \trord{\tr} e_2$.
Let $S_i = \setpred{e \in \events{\tr}}{e \strictshb{\tr} e_i}$
and let $S = S_1 \cup S_2$.
We observe that $\set{e_1, e_2} \cap S = \emptyset$.
Next, consider the sequence $\rho'$ obtained by linearizing the events in $S$
as per $\trord{\tr}$ and let $\rho = \rho' \cdot e_1 \cdot e_2$.
We remark that $S$ is downward closed with respect to $\shb{\tr}$
and thus $\rho$  is a correct reordering of $\tr$.
Further, since $\trord{\rho} \subseteq \trord{\tr}$, $\rho$
is also a sync-preserving correct reordering of $\tr$.
\end{proof}

\begin{example}
\exlabel{hbracesmissedbyzeroreversal}
Consider the trace $\tr_\two$ in~\figref{hb-misses-race}.
Both $\hb{\tr_\two}$ and $\shb{\tr_\two}$ order all events in $\tr_\two$, and thus there is
no \textsf{HB} or \textsf{SHB} race in $\tr$.
Nevertheless, the correct reordering $\tr_\two^\cre$
is a sync-preserving correct reordering that witnesses the race $(e_1, e_6)$.
Notice that the earlier critical section in $\tr_\two$ on lock $\lk$
(performed in thread $t_1$) is not present in the correct reordering.
\end{example}

Based on \lemref{hbraceimplieszeroreversal} and \exref{hbracesmissedbyzeroreversal},
we have the following.
\begin{observation}
The prediction power of sync-preserving race prediction is
strictly better than \textsf{SHB} race prediction.
\end{observation}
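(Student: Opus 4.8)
The plan is to read ``strictly better prediction power'' as the conjunction of two set-level claims about the races each method reports: a \emph{containment} claim, namely that on every trace the class of \textsf{SHB}-races is a subset of the class of sync-preserving races, and a \emph{separation} claim, namely that on at least one trace this containment is proper. Since both \textsf{SHB} and sync-preserving detection are sound, pattern-based predictors whose power is measured precisely by the class of conflicting pairs they flag as racy, these two facts together yield exactly the desired strict-subset relationship. First I would therefore fix this comparison convention explicitly, so that the global statement reduces to (a) a universally quantified implication over traces and (b) a single separating instance.

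For claim (a) I would appeal directly to \lemref{hbraceimplieszeroreversal}, which already shows that for an arbitrary trace $\tr$ and conflicting pair $(e_1,e_2)$, being an \textsf{SHB}-race implies being a sync-preserving race of $\tr$; quantifying over all $\tr$ gives the global containment. Because \textsf{SHB} in turn subsumes \textsf{HB} in the relevant sound sense, the same containment covers happens-before detection as well. For claim (b) I would exhibit the trace $\tr_\two$ of \figref{hb-misses-race}, for which \exref{hbracesmissedbyzeroreversal} is tailored: there both $\hb{\tr_\two}$ and $\shb{\tr_\two}$ totally order all events, so $\tr_\two$ has no \textsf{SHB}-race, yet $(e_1,e_6)$ is a sync-preserving race witnessed by the sync-preserving correct reordering $\tr_\two^\cre$ that simply drops the earlier critical section on $\lk$. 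Hence the inclusion is strict on $\tr_\two$, and combining (a) and (b) establishes the observation.

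The hard part here is purely definitional rather than combinatorial: the only genuine step is agreeing that ``prediction power'' is captured by the reported race class, so that one global inclusion plus one separating example suffice to conclude strict dominance. With that convention in place there is no residual obstacle, since all the analytical content---constructing a valid sync-preserving correct reordering by linearizing the \textsf{SHB}-downward-closed prefix according to $\trord{\tr}$---was already discharged inside the proof of \lemref{hbraceimplieszeroreversal}, and the example supplies the witness to strictness. The observation is thus an immediate corollary of the preceding lemma and example.
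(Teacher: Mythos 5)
Your proposal is correct and takes essentially the same route as the paper: the paper derives the observation directly from \lemref{hbraceimplieszeroreversal} (every \textsf{SHB}-race is a sync-preserving race) combined with \exref{hbracesmissedbyzeroreversal} (the trace $\tr_\two$, which has the sync-preserving race $(e_1,e_6)$ but no \textsf{SHB}-race), which is exactly your containment-plus-separation decomposition. The only addition in your write-up is making the ``prediction power'' convention explicit, which the paper leaves implicit.
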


\subsection{Comparison with \textsf{CP} and \textsf{WCP}}

The partial orders \emph{causally precedes} (denoted $\cp{}$)~\cite{cp2012} and
\emph{weak causally precedes} (denoted $\wcp{}$)~\cite{wcp2017} 
are two recent partial orders, proposed for data race prediction. 
Both partial orders are sound and can report races missed by $\hb{}$ (and $\shb{}$).
More precisely, $\cp{}$ can predict races on strictly more traces than $\hb{}$,
and $\wcp{}$ can, in turn, predict strictly more races than $\cp{}$.
Further, $\wcp{}$ has a more efficient race detection algorithm than $\cp{}$.
We define $\wcp{}$ below; the precise definition of $\cp{}$ is similar
to that of $\wcp{}$ and not important for our discussion here.
Here, we say that two acquire events $a_1$ and $a_2$
are conflicting if there is a common lock $\lk$ such that
$\OpOf{a_1} = \OpOf{a_2} = \acq(\lk)$.

\begin{definition}[Weak Causal Precedence]
\deflabel{wcp}
For a trace $\tr$, $\wcp{\tr} = \strictwcp{\tr} \cup \tho{\tr}$,
where $\strictwcp{\tr}$ is the smallest transitive order such that
the following hold:
\begin{compactenum}[(a)]
\item\itmlabel{wcp-rule-a} For two conflicting acquire events
$a_1, a_2$, if there are events $e_1 \in \crit{\tr}(a_1)$
and $e_2 \in \crit{\tr}(a_2)$ such that $e_1 \conf e_2$, then
$\match{\tr}(a_1) \strictwcp{\tr} e_2$.
\item\itmlabel{wcp-rule-b} For two conflicting acquire
$a_1, a_2$ events, if $a_1 \strictwcp{\tr} a_2$,
then $\match{\tr}(a_1) \strictwcp{\tr} \match{\tr}(a_2)$.
\item\itmlabel{wcp-rule-c} For any three events $e_1, e_2, e_3$, if either
$e_1 \strictwcp{\tr} e_2 \hb{\tr} e_3$, or
$e_1 \hb{\tr} e_2 \strictwcp{\tr} e_3$, then
$e_1 \strictwcp{\tr} e_3$.
\end{compactenum}
\end{definition}
The partial order $\cp{}$ is defined in a similar manner, 
except that it orders the second
conflicting acquire event $a_2$ in rules \itmref{wcp-rule-a}
and \itmref{wcp-rule-b} in~\defref{wcp}.
For a trace $\tr$, a $\wcp{}$-race (resp. $\cp{}$-race) is a pair of conflicting events 
$(e_1, e_2)$ in $\tr$ unordered by $\wcp{\tr}$ (resp. $\cp{\tr}$).

The soundness guarantee of $\wcp{}$ (and $\cp{}$) is
that of \emph{weak soundness}---if a trace $\tr$ has a $\wcp{}$-race (or $\cp{}$-race), then
$\tr$ has a predictable data race or a predictable deadlock\footnote{A trace $\tr$ has a predictable deadlock, if there is a correct reordering of $\tr$ that witnesses a deadlock.}.
We remark that, as with $\hb{}$, not every $\wcp{}$-race
is a predictable race, and the weak soundness guarantee applies only
to the first $\wcp{}$-race identified.

Let us now consider how $\wcp{}$-race prediction compares with
sync-preserving race prediction.



\begin{figure}[t]
\centering
\begin{subfigure}{.5\textwidth}
	\centering
	\execution{2}{
		\figev{1}{\mathbf{\wt(x)}}
		\figev{1}{\acq(\lk)}
		\figev{1}{\wt(z)}
		\figev{1}{\rel(\lk)}
		\figev{2}{\acq(\lk)}
		\figev{2}{\wt(z)}
		\figev{2}{\rel(\lk)}
		\figev{2}{\mathbf{\wt(x)}}
		\orderedgewithlabel{1}{4}{0.5}{2}{6}{-0.3}{\small $\strictwcp{}$}{below}
	}
	\caption{$\tr_\three$ with no $\wcp{}$-race.}
	\figlabel{wcp-misses-race-rule-a}
\end{subfigure}%
\begin{subfigure}{.5\textwidth}
	\centering
	\execution{3}{
		\figev{1}{\mathbf{\wt(x)}}
		\figev{1}{\acq(\lk)}
		\figev{1}{\wt(z)}
		\figev{1}{\rel(\lk)}
		\figev{2}{\acq(\lk)}
		\figev{2}{\rd(z)}
		\figev{2}{\rel(\lk)}
		\figev{3}{\acq(\lk)}
		\figev{3}{\rel(\lk)}
		\figev{3}{\mathbf{\wt(x)}}
		\orderedgewithlabel{1}{4}{0.5}{2}{6}{-0.3}{\small $\strictwcp{}$}{below}
		\orderedgewithlabel{2}{7}{0.5}{3}{8}{-0.5}{\small $\hb{}$}{above}
	}
	\caption{Trace $\tr_\four$ with no $\wcp{}$-race.}
	\figlabel{wcp-misses-race-rule-c}
\end{subfigure}%
\caption{Traces $\tr_\three$ and $\tr_\four$ have sync-preserving races.
But $\wcp{}$ does not report any races.}
\figlabel{compare-wcp-missed}
\end{figure}
\begin{example}
\exlabel{wcp-zr-rule-a}
Consider trace $\tr_\three$ in \figref{wcp-misses-race-rule-a}.
Here, we have $e_4 \strictwcp{\tr_\three} e_6$ due to rule~\itmref{wcp-rule-a}.
Together with composition with $\hb{\tr_\three}$ (rule~\itmref{wcp-rule-c}),
we have that $e_1 \wcp{\tr_\three} e_8$ and thus there is no $\wcp{}$-race in $\tr_\three$.
However, the trace $\tr_\three^\cre = e_5{\cdot}e_6{\cdot}e_7{\cdot}e_1{\cdot}e_8$
is a sync-preserving correct reordering of $\tr_\three$ that exposes the predictable race $(e_1, e_8)$.
\end{example}
We remark that $\wcp{}$ misses the race in \exref{wcp-zr-rule-a}
because of the ordering $e_4 \strictwcp{\tr_\three} e_6$, which is a 
spurious ordering and correct reorderings may not necessarily respect it.
We next highlight another source of imprecision in $\wcp{}$ arising
due to the $\hb{}$-composition rule of $\wcp{}$ (rule~\itmref{wcp-rule-c} in \defref{wcp}).

\begin{example}
Consider trace $\tr_\four$ in \figref{wcp-misses-race-rule-c}.
Here, due to rule~\itmref{wcp-rule-a}, we have $e_4 \strictwcp{\tr_\four} e_6$.
Further, we have $e_1 \hb{\tr_\four} e_4$ and $e_6 \hb{\tr_\four} e_{10}$, giving
us $e_1 \strictwcp{\tr_\four} e_{10}$ due to rule~\itmref{wcp-rule-c}
As a result, there is no $\wcp{}$-race in $\tr_\four$.
However, the pair $(e_1, e_{10})$ is, in fact, a sync-preserving race
witnessed by the correct reordering $\tr_\four^\cre = e_8{\cdot}e_9{\cdot}e_1{\cdot}e_{10}$
that completely drops the critical sections in $t_1$ and $t_3$.
\end{example}

Of course, there are predictable races that are neither $\wcp{}$-races,
nor sync-preserving races.
\begin{example}
The trace $\tr_{\six}$ in~\figref{compare-neither-wcp-nor-zr} has a predictable race
$(e_2, e_7)$ which is witnessed by the (only) correct reordering 
$\tr_{\six}^\cre = e_4{\cdot}e_5{\cdot}e_6{\cdot}e_1$.
Notice that this is not a sync-preserving correct ordering.
Further, $\wcp{}$ misses this race as well ---
$e_3 \strictwcp{\tr_{\six}} e_5$ (rule~\itmref{wcp-rule-a}),
giving $e_1 \wcp{\tr_{\six}} e_7$.
\end{example}




\begin{figure}[t]
\centering
\begin{subfigure}{.5\textwidth}
	\centering
	\vspace{0.1in}
	\execution{2}{
\figev{1}{\acq(\lk)}
\figev{1}{\mathbf{\wt(x)}}
\figev{1}{\rel(\lk)}
\figev{2}{\acq(\lk)}
\figev{2}{\rel(\lk)}
\figev{2}{\mathbf{\wt(x)}}
}
	\caption{Trace $\tr_\five$ with predictable race reported by $\wcp{}$}
	\figlabel{compare-wcp-more}
\end{subfigure}%
\begin{subfigure}{.5\textwidth}
  \centering
		\execution{2}{
\figev{1}{\acq(\lk)}
\figev{1}{\wt(x)}
\figev{1}{\rel(\lk)}
\figev{2}{\acq(\lk)}
\figev{2}{\wt(x)}
\figev{2}{\rel(\lk)}
\figev{2}{\wt(x)}
}
  \caption{Trace $\tr_{\six}$ with predictable race missed by $\wcp{}$}
  \figlabel{compare-neither-wcp-nor-zr}
\end{subfigure}
\caption{Traces with no sync-preserving races. $\wcp{}$ predicts race in $\tr_\five$ but misses in $\tr_{\six}$}
\figlabel{compare-no-zr}
\end{figure}

In the next example, we illustrate that $\wcp{}$
can predict races that are not sync-preserving races.
\begin{example}
Consider trace $\tr_\five$ in~\figref{compare-wcp-more}.
Here, $(e_2, e_6)$ is a predictable data race witnessed by
the (only) correct reordering $\tr_\five^\cre = e_4{\cdot}e_5{\cdot}e_1{\cdot}e_2{\cdot}e_6$.
Observe that $\tr_\five^\cre$ is not a sync-preserving correct reordering
of $\tr_\five$ and thus $\tr_\five$ does not have any sync-preserving race.
At the same time, $\Unordered{e_2}{\tr}{\textsf{WCP}}{e_6}$
and thus this predictable race is identified by $\wcp{}$.
\end{example}
We summarize our comparison with $\wcp{}$ as follows.
\begin{observation}
The prediction power of $\wcp{}$-race prediction
and sync-preserving race prediction are incomparable.
Further, $\wcp{}$ offers a weak soundness guarantee, i.e.,
a $\wcp{}$-race may sometimes imply no predictable race but only a predictable deadlock,
whereas sync-preserving race prediction is strongly sound.
Finally, $\wcp{}$ is sound only until the first race, whereas
all sync-preserving races reported are true races.
\end{observation}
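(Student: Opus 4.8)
The plan is to establish the Observation in three parts, matching its three sentences, and to assemble the witnesses and cited guarantees already developed in this section rather than introduce new machinery.

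For the \emph{incomparability} of prediction power I would exhibit separating traces in both directions. In one direction, reuse $\tr_\three$ of \figref{wcp-misses-race-rule-a} (or $\tr_\four$ of \figref{wcp-misses-race-rule-c}): I would discharge, rule by rule, that \itmref{wcp-rule-a} of \defref{wcp} yields $e_4 \strictwcp{\tr_\three} e_6$, and that composition with $\hb{}$ via \itmref{wcp-rule-c} forces $e_1 \wcp{\tr_\three} e_8$, so $(e_1,e_8)$ is \emph{not} a $\wcp{}$-race; meanwhile $\tr_\three^\cre = e_5{\cdot}e_6{\cdot}e_7{\cdot}e_1{\cdot}e_8$ is a sync-preserving correct reordering enabling both $e_1$ and $e_8$, so $(e_1,e_8)$ \emph{is} a sync-preserving race. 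This shows sync-preserving power is not subsumed by $\wcp{}$. In the reverse direction, reuse $\tr_\five$ of \figref{compare-wcp-more}: its only race-exposing correct reordering $\tr_\five^\cre = e_4{\cdot}e_5{\cdot}e_1{\cdot}e_2{\cdot}e_6$ reverses the two critical sections on $\lk$, so by \defref{zr-closure} no sync-preserving correct reordering exposes $(e_2,e_6)$, yet $\Unordered{e_2}{\tr_\five}{\textsf{WCP}}{e_6}$ holds. Hence $(e_2,e_6)$ is a $\wcp{}$-race but not a sync-preserving race. The two directions jointly give incomparability.

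For the \emph{soundness-strength} sentence I would argue strong soundness of sync-preservation directly from the definition: if $(e_1,e_2)$ is a sync-preserving race then there is, by definition, a sync-preserving correct reordering $\rho$ in which $e_1,e_2$ are $\tr$-enabled; since $\rho$ is in particular a correct reordering, $(e_1,e_2)$ is a genuine predictable \emph{data} race, never merely a deadlock. For $\wcp{}$ I would invoke the weak-soundness guarantee of~\cite{wcp2017}, which promises only a predictable data race \emph{or} a predictable deadlock, and only for the first reported pair. For the final sentence I would appeal to \lemref{zrideal-disjointness} and \lemref{correctness-algo}, which certify that \emph{every} pair reported by $\ZeroRevAlgo$ is an actual sync-preserving, hence predictable, race; in contrast the soundness of $\wcp{}$ applies only to the first race, again by~\cite{wcp2017}.

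I expect the main obstacle to be the soundness-strength claim, specifically showing that $\wcp{}$'s guarantee is genuinely only \emph{weak} rather than conservatively stated, i.e., that there exists a trace where a $\wcp{}$-race corresponds to a predictable deadlock but to no predictable data race. The cleanest route is to cite the characterization of~\cite{wcp2017} instead of re-deriving a deadlock witness. A secondary, purely technical subtlety is carefully verifying the $\wcp{}$ orderings in the incomparability examples, since the ordering $e_1 \wcp{\tr_\three} e_8$ arises only through the interaction of the conflicting-critical-section rule \itmref{wcp-rule-a} with the $\hb{}$-composition rule \itmref{wcp-rule-c}, and each application must be checked explicitly.
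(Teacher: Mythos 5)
Your proposal is correct and follows essentially the same route as the paper: the paper establishes incomparability with exactly the two traces you cite ($\tr_\three$ for a sync-preserving race missed by $\wcp{}$, and $\tr_\five$ for a $\wcp{}$-race that is not sync-preserving), and it likewise handles the soundness-strength and first-race claims by appealing to the definitional soundness of sync-preserving races and the stated weak-soundness guarantee of~\cite{wcp2017} rather than constructing a deadlock witness. The only cosmetic slip is that ruling out a sync-preserving witness for $(e_2,e_6)$ in $\tr_\five$ follows from the definition of sync-preserving correct reorderings (preserved acquire order), not from \defref{zr-closure}.
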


\subsection{Comparison with \textsf{DC} and \textsf{WDP}}

Finally, we briefly outline how sync-preserving races compare with the methods \textsf{DC} and \textsf{WDP}.
Both methods are unsound, and  they rely on a second vindication phase to filter out false positives.

The \textsf{DC} partial order was introduced in~\cite{Roemer18} as an unsound weakening to \textsf{WCP}.
The difference between the two is that \textsf{DC} does not compose with \textsf{HB} as in rule (\itmref{wcp-rule-c}) in the definition of \textsf{WCP},
and instead only composes with the thread order.
Due to its similarity with \textsf{WCP}, \textsf{DC} also misses sync-preserving races.
For example, in the trace of \figref{wcp-misses-race-rule-a}, \textsf{DC} forces the same ordering as \textsf{WCP}, and thus misses the sync-preserving race $(e_1, e_8)$.

The \textsf{WDP} partial order was introduced in~\cite{Bond2019} as a further unsound weakening of \textsf{DC}.
In high level, \textsf{WDP} operates on traces that also include branching events $\br$, and relies on static analysis to identify whether a read event $\rd$ affects $\br$. For every such $\rd$, \textsf{WDP} orders $\rd$ after the critical section that contains $\lw{\tr}(\rd)$.
While the \textsf{WDP} partial order captures all predictable data races, it also reports false alarms.
In order to eliminate false positives, \textsf{WDP}, like \text{DC} employs an additional post-processing step called \emph{vindication} phase.
In this phase, a graph-based check is employed for every pair of events reported to be a race, to check if there is a correct reordering in which the two events are simultaneously enabled.
This check however is not complete, and can rule out even true races.
As a result, even true sync-preserving races can be missed.

\section{A Note on the Soundness of SDP}\label{sec:app_sdp}

The \textsf{SDP} partial order was recently introduce in~\cite{Bond2019} for dynamic race prediction.
\cite[Theorem~p12]{Bond2019} states that \textsf{SDP} is sound, i.e., if a trace $\tr$ has an \textsf{SDP}-race and $\tr$ has a predictable race.
In this section we construct a counterexample to soundness.
\begin{figure}[h]
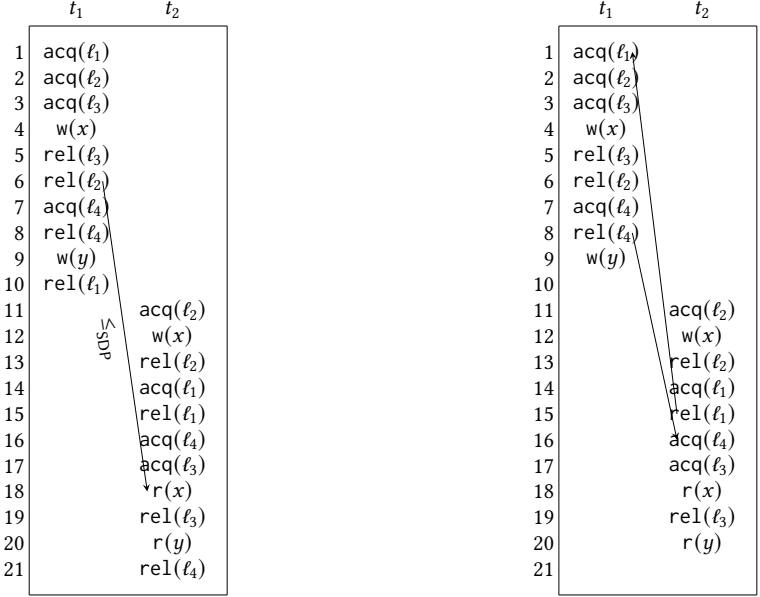

\begin{subfigure}[b]{0.45\textwidth}
\centering
\execution{2}{
\figev{1}{\acq(\ell_1)}
\figev{1}{\acq(\ell_2)}
\figev{1}{\acq(\ell_3)}
\figev{1}{\wt(x)}
\figev{1}{\rel(\ell_3)}
\figev{1}{\rel(\ell_2)}
\figev{1}{\acq(\ell_4)}
\figev{1}{\rel(\ell_4)}
\figev{1}{\wt(y)}
\figev{1}{\rel(\ell_1)}
\figev{2}{\acq(\ell_2)}
\figev{2}{\wt(x)}
\figev{2}{\rel(\ell_2)}
\figev{2}{\acq(\ell_1)}
\figev{2}{\rel(\ell_1)}
\figev{2}{\acq(\ell_4)}
\figev{2}{\acq(\ell_3)}
\figev{2}{\rd(x)}
\figev{2}{\rel(\ell_3)}
\figev{2}{\rd(y)}
\figev{2}{\rel(\ell_4)}
\orderedgewithlabel{1}{6}{0.4}{2}{18}{-0.4}{\small $\sdp{}$}{below}
}
\caption{
A trace $\tr$ with an \textsf{SDP}-race but no predictable race.
}
\figlabel{sdp1}
\end{subfigure}
\qquad
\begin{subfigure}[b]{0.45\textwidth}
\centering
\execution{2}{
\figev{1}{\acq(\ell_1)}
\figev{1}{\acq(\ell_2)}
\figev{1}{\acq(\ell_3)}
\figev{1}{\wt(x)}
\figev{1}{\rel(\ell_3)}
\figev{1}{\rel(\ell_2)}
\figev{1}{\acq(\ell_4)}
\figev{1}{\rel(\ell_4)}
\figev{1}{\wt(y)}
\figev{1}{}
\figev{2}{\acq(\ell_2)}
\figev{2}{\wt(x)}
\figev{2}{\rel(\ell_2)}
\figev{2}{\acq(\ell_1)}
\figev{2}{\rel(\ell_1)}
\figev{2}{\acq(\ell_4)}
\figev{2}{\acq(\ell_3)}
\figev{2}{\rd(x)}
\figev{2}{\rel(\ell_3)}
\figev{2}{\rd(y)}
\figev{1}{}
\orderedge{1}{8}{0.4}{2}{16}{-0.4}
\orderedge{2}{15}{-0.4}{1}{1}{0.4}
}
\caption{
Attempt for a correct reordering of $\tr$ with a race on $(e_{9}, e_{20})$.
}
\figlabel{sdp2}
\end{subfigure}
\caption{Counterexample to the soundness of \textsf{SDP}.}
\figlabel{sdp}
\end{figure}

\Paragraph{Counterexample to \textsf{SDP} soundness.}
Our counterexample is shown in \figref{sdp}.
First, we argue that the trace $\tr$ has an \textsf{SDP}-race.
Second, we argue that $\tr$ has no predictable race.

\begin{compactenum}
\item Observe that $e_9$ and $e_{20}$ are conflicting and are not protected by the same lock.
Hence, it suffices to argue that $e_9 \notsdp{\tr} e_{20}$.
Since the two critical sections on $\ell_2$ contain the $\wt(x)$ conflicting events $e_4$ and $e_{12}$,
\textsf{SDP} will order $e_6\sdp{\tr} e_{18}$, as $e_{18}$ is a $\rd(x)$ event that is thread-ordered after $e_{12}$.
At this point, \textsf{SDP} will insert no orderings, hence $e_9 \notsdp{\tr} e_{20}$, and $(e_9, e_{20})$ is an SDP-race.

\item There are three conflicting event pairs that may constitute a predictable data race, namely, 
(i)~$(e_4,e_{12})$, 
(ii)~$(e_4, e_{17})$, and
(iii)~$(e_9,e_{20})$.
Observe that $(e_4,e_{12})$ and $(e_4, e_{17})$ cannot yield a predictable race, as the event pairs are protected by the same locks $\ell_2$ and $\ell_3$, respectively.
For $(e_9,e_{20})$, consider an attempt for constructing a correct reordering $\tr^*$ that witnesses the race, as shown in \figref{sdp2}.
Observe that $\tr^*$ is missing the $\rel(\ell_1)$ event $e_{10}$ and $\rel(\ell_4)$ event $e_{21}$.
Since $\tr^*$ must respect lock semantics, it must satisfy the two orderings shown in \figref{sdp2}.
Note, however, that these two orderings necessarily violate the observation of the $\rd(x)$ event $e_{18}$.
Thus, $\RF_{\tr}\neq \RF_{\tr^*}$, and $\tr^*$ cannot be a correct reordering of $\tr$.
\end{compactenum}

\bigskip

We thank Casper Abild Larsen and Simon Sataa-Yu Larsen for helpful discussions on \textsf{SDP}.

\end{document}